\documentclass{LMCS}

\def\dOi{10(1:19)2014}
\lmcsheading%
{\dOi}
{1--33}
{}
{}
{Oct.~29, 2010}
{Mar.~31, 2014}
{}

\subjclass{F.2 Analysis of Algorithms and Problem Complexity,
F.4 Mathematical Logic and Formal Languages}

\ACMCCS{[{\bf Theory of computation}]: Design and analysis of
  algorithms; Computational complexity and cryptography---Complexity
  theory and logic; Logic; Formal languages and automata theory}


\usepackage[english]{babel}
\usepackage{amssymb}
\usepackage{amsmath}
\usepackage{color}
\usepackage{cite}
\usepackage{graphicx}
\usepackage{float}
\usepackage{subfig}
\usepackage{tikz}
\usepackage{xspace}
\usepackage{ulem}
\normalem
\usepackage{hyperref}

\usetikzlibrary{positioning,fit}

\theoremstyle{plain}

\DeclareSymbolFont{letters}{OML}{cmbboard}{m}{it} 

\newcommand{\ncp}{\textnormal{\texttt{ncp}}}

\newcommand{\NE}{\hspace{-0.4em}&\hspace{-0.4em}}
\newcommand{\NR}{\\}
\newcommand{\minfty}{-\infty}

\newcommand\zero{\mathop{\mathsf{zero}}}
\newcommand\BITS{\text{$\TRS{R}_{\m{bits}}$}\xspace}

\newcommand\Nat{\ensuremath{\mathbb{N}}}
\newcommand\NAT{\Nat}
\newcommand\Arctic{\ensuremath{\mathbb{A}}}
\newcommand\ALG[1]{\ensuremath{\mathcal{#1}}}
\newcommand\SIG[1]{\ensuremath{\mathcal{#1}}}
\newcommand\VAR[1]{\ensuremath{\mathcal{#1}}}
\newcommand\TERMS[2]{\mathcal{T}(#1,#2)}
\newcommand\GTERMS[1]{\mathcal{T}(#1)}
\newcommand\CGTERMS[1]{\mathcal{T}_\mathsf{Con}(#1)}
\newcommand\FVTERMS{\TERMS{\SIG{F}}{\VAR{V}}}
\newcommand\FTERMS{\GTERMS{\SIG{F}}}
\newcommand\TRS[1]{\ensuremath{\mathcal{#1}}}
\newcommand\TA[1]{\ensuremath{\mathcal{#1}}}

\newcommand\REL[2]{{\text{\ensuremath{#1 \kern0em/\kern0em #2}}}}

\newcommand\WDP{\mathsf{WDP}}
\newcommand\UR{\mathsf{UR}}

\newcommand\linear{\ensuremath{\OO(n)}\xspace}
\newcommand\quadratic{\ensuremath{\OO(n^2)}\xspace}
\newcommand\cubic{\ensuremath{\OO(n^3)}\xspace}
\newcommand\poly{\ensuremath{\OO(n^k)}\xspace}
\newcommand\avg{time\xspace}

\newcommand\direct{\ensuremath{\mathsf{direct}}\xspace}
\newcommand\modular{\ensuremath{\mathsf{modular}}\xspace}

\newcommand\MATCH{\mathsf{match}}
\newcommand\MATCHRT{\text{$\MATCH$-$\mathsf{RT}$}}
\newcommand\RAISE{\mathsf{raise}}
\newcommand\BASE{\mathsf{base}}
\newcommand\LIFT{\mathsf{lift}}
\newcommand\HEIGHT{\mathsf{height}}
\newcommand\MULEXT{\mathsf{mul}}
\newcommand\MUL{\mathcal{M}\mathsf{ul}}
\newcommand\DROP{\mathsf{drop}}
\newcommand\MFUN{{\mathcal{H}}}
\newcommand\Lg[1]{\mathcal{L}(#1)}
\newcommand\Succ[3][\to]{{#1_{#2}^*}(#3)}
\newcommand\Mul[1]{\MUL(#1)}
\newcommand\MFun[1]{\MFUN(#1)}
\newcommand\Raise[1]{\RAISE(#1)}
\newcommand\match[1]{\MATCH(#1)}
\newcommand\matchRT[3]{\MATCHRT^{#3}(\REL{#1}{#2})}
\newcommand\base[1]{\BASE(#1)}
\newcommand\lift[2]{\LIFT_{#2}(#1)}
\newcommand\height[1]{\HEIGHT(#1)}
\newcommand\lhs[1]{\mathsf{lhs}(#1)}
\newcommand\mge{\succeq_\MULEXT}
\newcommand\mgt{\succ_\MULEXT}
\newcommand\relgt{\succ}
\newcommand\relge{\succeq}
\newcommand\drop[2]{\DROP_{#2}(#1)}

\newcommand\TRSALLNUM{2132\xspace}
\newcommand\TRSDCNUM{1172\xspace}
\newcommand\TRSRCNUM{1339\xspace}
\newcommand\TRSRCNDNUM{910\xspace}
\newcommand\Var{\mathsf{Var}}
\newcommand\Pos{\mathsf{Pos}}
\newcommand\FPos{{\Pos_\mathcal{F}}}
\newcommand\Fun{\mathsf{Fun}}
\newcommand\DFun{\mathsf{Def}}
\newcommand\CFun{\mathsf{Con}}

\renewcommand\max{\mathsf{max}}
\renewcommand\min{\mathsf{min}}
\renewcommand\sup{\mathsf{sup}}
\newcommand\BM{\begin{pmatrix}}
\newcommand\EM{\end{pmatrix}}
\newcommand\OO{\mathcal{O}}
\newcommand\m[1]{\mathsf{#1}}
\newcommand\app{\mathbin{\circ}}
\newcommand\dl[2]{{\mathsf{dh}(#1,#2)}}
\renewcommand\dh[2]{\dl{#1}{#2}}
\newcommand\cp[3]{\mathsf{cp}_{#3}(#1,{#2})}
\newcommand\dc[2]{{\mathsf{dc}(#1,#2)}}
\newcommand\rc[2]{\mathsf{rc}(#1,#2)}

\newcommand\MINISMT{\ensuremath{\mathsf{\mbox{$\mathsf{Mini}$}Smt}}\xspace}

\newcommand\tpdb[1]{\ensuremath{\mathsf{#1}}\xspace}

\newcommand\rto{\xrightarrow{\smash{
 \raisebox{-0.15em}{$\scriptstyle\mathsf{r}$}}}}

\newcommand\CETA{\textsf{CeTA}}
\newcommand\TTT{%
 \ensuremath{\mathsf{T\kern-0.2em\raisebox{-0.3em}%
 {$\mathsf{T}$}\kern-0.2emT}}\xspace%
}
\newcommand\TTTT{%
 \ensuremath{\mathsf{T\kern-0.2em\raisebox{-0.3em}%
 {$\mathsf{T}$}\kern-0.2emT\kern-0.2em%
 \raisebox{-0.3em}2}}\xspace%
}
\newcommand\CAT{\ensuremath{\mathsf%
 {C\kern-0.25em\raisebox{0.14em}{$\mathsf{a}$}\kern-0.2emT}}\xspace%
}
\newcommand\TCT{\ensuremath{\mathsf%
 {T\kern-0.2em\raisebox{-0.3em}{$\mathsf{C}$}\kern-0.2emT}}\xspace%
}

\newlength{\mylength}
\newcommand\setl[1]{\settowidth\mylength{#1}}

\newcommand\fixc[1]{\ \makebox[\mylength][c]{${#1}$}\ }

\begin{document}

\title[Modular Complexity Analysis]{Modular Complexity Analysis for
  Term Rewriting\rsuper*}

\author[H.~Zankl]{Harald Zankl\rsuper a}
\address{{\lsuper{a,b}}Institute of Computer Science\\University of
  Innsbruck\\Austria} 
\email{\{harald.zankl,martin.korp\}@uibk.ac.at}
\thanks{{\lsuper a}This research is supported by FWF (Austrian Science Fund)
  project P18763.}

\author[M.~Korp]{Martin Korp\rsuper b}
\address{\vspace{-18 pt}}

\keywords{term rewriting, complexity analysis, relative complexity,
derivation height}

\titlecomment{{\lsuper*}A preliminary version of this article appeared
  in RTA 2010.}

\begin{abstract}
\noindent
All current investigations to analyze the derivational complexity of term
rewrite systems are based on a single termination method, possibly
preceded by transformations. However, the exclusive use of direct
criteria is problematic due to their restricted power. To overcome this
limitation the article introduces a modular framework which allows to 
infer (polynomial) upper bounds on the complexity of term rewrite systems
by combining different criteria. Since the fundamental idea is based on 
relative rewriting, we study how matrix interpretations and match-bounds
can be used and extended to measure complexity for relative rewriting, 
respectively.
The modular framework is proved strictly more powerful than the conventional
setting. Furthermore, the results have been implemented and experiments
show significant gains in power.
\end{abstract}

\maketitle

\section{Introduction}

Term rewriting is a Turing complete model of computation. As an immediate
consequence all interesting properties are undecidable. Nevertheless many
powerful techniques have been developed to establish \emph{termination}.
The majority of these techniques have been automated successfully.
This development has been stimulated by the international competition of
termination tools.\footnote{\ \label{FOO:comp}\url{http://termcomp.uibk.ac.at}}
Most automated analyzers gain their power from a modular treatment
of rewrite systems (typically via the dependency pair
framework~\cite{AG00,HM05,T07}).

For terminating rewrite systems Hofbauer and Lautemann~\cite{HL89}
consider the length of derivations as a measurement for the complexity
of rewrite systems. The resulting notion of \emph{derivational complexity}
relates the length of a rewrite sequence to the size of its starting
term. Thereby it is, e.g., a suitable metric for the complexity of deciding
the word problem for a given confluent and terminating rewrite system
(since the decision procedure rewrites terms to normal form).
If one regards a rewrite system as a program and wants to estimate the
maximal number of computation steps needed to evaluate an expression to
a result, then the special shape of the starting terms---a function applied
to data which is in normal form---can be taken into account. Hirokawa and
Moser~\cite{HM08} identified this special form of complexity and named it
\emph{runtime complexity}.

To show (feasible) upper complexity bounds currently few techniques are
known. Typically termination criteria are restricted such that complexity
bounds can be inferred.
The early work by Hofbauer and Lautemann~\cite{HL89} considers polynomial
interpretations, suitably restricted, to admit quadratic derivational
complexity. Match-bounds~\cite{GHWZ07} and arctic matrix
interpretations~\cite{KW09} induce linear upper bounds on the
derivational complexity and triangular matrix interpretations~\cite{MSW08}
admit at most polynomially long derivations (the dimension of the matrices
yields the degree of the polynomial) in the size of the starting term.
All these methods share the property that until now they have been used
directly only, meaning that a single termination technique has to orient
all rules in one go. However, using direct criteria exclusively is
problematic due to their restricted power.

In~\cite{HM08,HM08b} Hirokawa and Moser lifted many aspects of the
dependency
pair framework from termination analysis into the complexity setting,
resulting in the notion of weak dependency pairs. So for the special
case of runtime complexity for the first time a modular approach
has been introduced. There the modular aspect amounts to using
different interpretation based criteria for (parts of the) weak dependency
graph and the
usable rules. However, still all rewrite rules considered must be oriented
strictly in one go and only restrictive criteria may be applied for the
usable rules. A further drawback of weak dependency pairs is that they
may only be used for bounding runtime complexity while there seems to be no
hope to generalize the method to derivational complexity.

In this article we present a different approach which admits a fully
modular treatment. The approach is general enough that it applies to
derivational complexity (and hence also to runtime complexity) and
basic enough that it allows to combine completely different complexity
criteria such as match-bounds and (triangular) matrix interpretations.
By the modular combination of different base methods also gains in power
are achieved. These gains come in two flavors. On the one hand our approach
allows to obtain lower complexity bounds for several rewrite systems where
bounds have already been established before and on the other hand we
found bounds for systems that could not be dealt with so far automatically.
More specifically, there are systems where the modular combination of
different criteria allows to establish an upper bound while any of the
involved methods cannot succeed on its own.

The remainder of the article is organized as follows. In
Section~\ref{PRE:main} preliminaries about term rewriting and complexity
analysis are fixed.
Afterwards, Section~\ref{REL:main} familiarizes the reader with the
concept of a suitable complexity measurement for relative rewriting.
Furthermore, it
formulates a modular framework for complexity analysis
based on relative complexity.
Criteria for measuring relative complexity via interpretations and
match-bounds are presented in Sections~\ref{MAT:main}
and~\ref{BOUNDS:main}, respectively.
In Section~\ref{ASS:main}
we show that the modular setting is strictly more powerful than the
conventional approach. Our results have been implemented in the complexity
prover \CAT. The technical details can be inferred from
Section~\ref{IMP:main}. Section~\ref{EXP:main} is devoted to demonstrate
the power of the modular treatment by means of an empirical evaluation.
Section~\ref{CON:main} concludes.

This article is a restructured and extended version of~\cite{ZK10}. It also
incorporates the results from the two notes~\cite{ZK10a,ZK10b} presented at
informal workshops. Furthermore results and presentation have been generalized
to address both derivational and runtime complexity.

\section{Preliminaries}
\label{PRE:main}

We assume familiarity with (relative) term rewriting~\cite{BN98,G90,TERESE}.
Let~\SIG{F} be a signature and~\VAR{V} a disjoint set of variables.
The set of terms over~\SIG{F} and~\VAR{V} is denoted by~$\FVTERMS$ and
the set of ground terms over~\SIG{F} by~$\FTERMS$.
We write $\Fun(t)$ for the set of function symbols occurring in a term~$t$.
The size of a term~$t$ is denoted~$|t|$ and $\|t\|$ computes the
number of occurrences of function symbols in~$t$.
A term~$t$ is called \emph{linear} if any variable~$x$ occurs at most
once in~$t$. 
Positions are used to address symbol occurrences in terms. Given a term $t$
we use $\Pos(t)$ to denote the set of positions induced by the term $t$ and
we write $t(p)$ with $p \in \Pos(t)$ for the symbol at position~$p$ in the term
$t$. The subset of positions $p \in \Pos(t)$ such that $t(p) \in \SIG{F}$
is denoted by $\FPos(t)$.

A \emph{rewrite rule} is a pair of terms $(l,r)$, written $l \to r$ such
that $l$ is not a variable and all variables in~$r$ are contained in~$l$.
A rewrite rule $l \to r$ is \emph{size-preserving} (\emph{size-decreasing})
if $|l| = |r|$ ($|l| > |r|$).
A \emph{term rewrite system} (TRS for short) is a set of rewrite rules.
For complexity analysis we assume TRSs to be finite and terminating.
A TRS $\TRS{R}$ is said to be \emph{duplicating} if there exist a rewrite
rule $l \to r \in \TRS{R}$ and a variable~$x$ that occurs more often in $r$
than in $l$. 
A TRS~\TRS{R} is called \emph{linear} (\emph{left-linear}, \emph{right-linear})
if for all rewrite rules $l \to r \in \TRS{R}$ the terms $l$ and $r$ ($l$, $r$)
are linear.
We call a TRS $\TRS{R}$ \emph{collapsing} if it
contains a rewrite rule $l \to r$ such that $r$ is a variable. The
\emph{defined symbols} of a TRS $\TRS{R}$ are all function symbols
$f$ for which there is a rewrite rule $l \to r$ in $\TRS{R}$ such that
$f = l(\epsilon)$. In the following we denote this set of function symbols
by $\DFun(\TRS{R})$. Those function symbols of $\TRS{R}$ which are not
defined are called \emph{constructor symbols}. So the set of all constructor
symbols is defined as $\CFun(\TRS{R}) = \SIG{F} \setminus \DFun(\TRS{R})$.

A \emph{rewrite relation} is a binary relation on terms that is
closed under contexts and substitutions. For a TRS~\TRS{R} we define
$\to_\TRS{R}$ to be the smallest rewrite relation that contains~\TRS{R}.
As usual~$\to^*$ denotes the reflexive and transitive closure of~$\to$
and $\to^m$ the $m$-th iterate of~$\to$.
A \emph{relative} TRS~$\REL{\TRS{R}}{\TRS{S}}$ is a pair of TRSs~\TRS{R}
and~\TRS{S} with the induced rewrite relation
${\to_\REL{\TRS{R}}{\TRS{S}}} =
{\to_{\TRS{S}}^* \cdot \to_{\TRS{R}}^{} \cdot \to_{\TRS{S}}^*}$. In the
sequel we will sometimes identify a TRS~\TRS{R} with the relative
TRS~$\REL{\TRS{R}}{\varnothing}$ and vice versa.
Furthermore properties defined for TRSs (as the ones above) naturally
extend to relative TRSs.

The \emph{derivation height} of a term $t$ with respect to a relation $\to$
is defined as follows: $\dl{t}{\to} = \sup\,\{m \mid \exists u\; t \to^m u\}$.
The \emph{complexity} of a relation $\to$ with respect to a (possibly infinite)
set of terms (or language)~$L$, denoted by $\cp{n}{\to}{L}$, computes the
maximal derivation height of all terms in~$L$ up to size~$n$ and is defined as
$\cp{n}{\to}{L} = \sup\,\{\dl{t}{\to} \mid
\text{$t \in L$ and $|t| \leqslant n$}\}$.
Sometimes we say that a TRS $\TRS{R}$ (relative TRS $\REL{\TRS{R}}{\TRS{S}}$)
has linear, quadratic, etc.\ or polynomial complexity with respect to~$L$ if
$\cp{n}{\to_\TRS{R}}{L}$ ($\cp{n}{\to_{\REL{\TRS{R}}{\TRS{S}}}}{L}$) can be
bounded by a linear, quadratic, etc.\ function or polynomial in~$n$.
Let $\TRS{R}$ be a TRS over some signature $\SIG{F}$. The
\emph{derivational complexity} of $\TRS{R}$, abbreviated by
$\dc{n}{\TRS{R}}$ and defined as
$\dc{n}{\TRS{R}} = \cp{n}{\to_\TRS{R}}{\FVTERMS}$,
computes the complexity of $\to_\TRS{R}$ with respect to all
terms. In contrast, the \emph{runtime complexity} of $\TRS{R}$
considers the maximal derivation height of constructor-based terms only,
i.e., $\rc{n}{\TRS{R}} = \cp{n}{\to_\TRS{R}}{\CGTERMS{\TRS{R},{\VAR{V}}}}$.
Here, the set of \emph{constructor-based terms} $\CGTERMS{\TRS{R},\VAR{V}}$
is defined as the set of all terms $t = f(t_1,\ldots,t_m)$ such that
$f \in \DFun(\TRS{R})$ and $t_i \in \GTERMS{\CFun(\TRS{R}),\VAR{V}}$
for all $i \in \{1,\dots,m\}$.

For functions $f,g \colon \Nat \to \Nat$ we write $f(n) = \OO(g(n))$ if there
are constants $M, N \in \Nat$ such that $f(n) \leqslant M\cdot g(n)$ for
all $n \geqslant N$. Furthermore, $f(n) = \Omega(g(n))$ if $g(n) =
\OO(f(n))$ and $f(n) = \Theta(g(n))$ if
$f(n) = \OO(g(n))$ and $f(n) = \Omega(g(n))$.

\section{Modular Complexity via Relative Complexity}
\label{REL:main}

In this section we present the basic idea that allows a modular treatment of
complexity proofs. To this end we introduce complexity analysis for relative
rewriting, i.e., given a relative TRS $\REL{\TRS{R}}{\TRS{S}}$ only the
\TRS{R}-steps contribute to the complexity. To estimate the derivational
complexity of a relative TRS $\REL{\TRS{R}}{\TRS{S}}$, a pair of orderings
$(\relgt,\relge)$ will be used such that $\TRS{R} \subseteq {\relgt}$ and
$\TRS{S} \subseteq {\relge}$. The necessary properties of these orderings
are given in the next definition.

\begin{defi}
A \emph{complexity pair} $(\relgt, \relge)$ consists of two finitely
branching rewrite relations~$\relgt$ and~$\relge$ that are
\emph{compatible}, i.e., ${\relge \cdot \relgt} \subseteq {\relgt}$ and
${\relgt \cdot \relge} \subseteq {\relgt}$.
We call a relative TRS $\REL{\TRS{R}}{\TRS{S}}$ \emph{compatible} with a
complexity pair $(\relgt,\relge)$ if $\TRS{R} \subseteq {\relgt}$
and $\TRS{S} \subseteq {\relge}$.
\end{defi}

The next lemma states that given a relative TRS $\REL{\TRS{R}}{\TRS{S}}$
and a compatible complexity pair $(\relgt,\relge)$, the $\relgt$
ordering is crucial for estimating the derivational complexity of
$\REL{\TRS{R}}{\TRS{S}}$. Intuitively the result states that every
$\REL{\TRS{R}}{\TRS{S}}$-step gives rise to at least one $\relgt$-step.

\begin{lem}
\label{LEM:bound}
Let $\REL{\TRS{R}}{\TRS{S}}$ be a relative TRS
compatible with a complexity pair $(\relgt,\relge)$.
Then for any term~$t$ we have
$\dl{t}{\relgt} \geqslant \dl{t}{\to_\REL{\TRS{R}}{\TRS{S}}}$.
\end{lem}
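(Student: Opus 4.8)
The plan is to show that any $\to_{\REL{\TRS{R}}{\TRS{S}}}$-derivation starting from $t$ can be matched, step for step, by a $\relgt$-derivation from $t$ of at least the same length; taking suprema then yields the claimed inequality on derivation heights. Concretely, I would argue that if $t \to_{\REL{\TRS{R}}{\TRS{S}}}^m u$ for some term $u$, then $t \relgt^{m'} u$ for some $m' \geqslant m$, and in particular $\dl{t}{\relgt} \geqslant m$. Since this holds for every $m$ with $t \to_{\REL{\TRS{R}}{\TRS{S}}}^m u$, taking the supremum over all such $m$ gives $\dl{t}{\relgt} \geqslant \dl{t}{\to_{\REL{\TRS{R}}{\TRS{S}}}}$. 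A small subtlety I would flag: $\dl{t}{\to_{\REL{\TRS{R}}{\TRS{S}}}}$ may be $\infty$, so I would phrase the argument so that it bounds each finite $m$ and hence also the supremum.

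\textbf{The core step: one relative step yields at least one strict step.} The heart of the matter is a single $\REL{\TRS{R}}{\TRS{S}}$-step, which by definition unfolds as
\[
 t \to_{\TRS{S}}^* \cdot \to_{\TRS{R}}^{} \cdot \to_{\TRS{S}}^* u.
\]
Using compatibility of the relative TRS with the complexity pair, every $\to_{\TRS{S}}$-step lies in $\relge$ and the single $\to_{\TRS{R}}$-step lies in $\relgt$. Thus a single relative step is contained in ${\relge^*} \cdot \relgt \cdot {\relge^*}$. The task is then to collapse this mixed sequence into a nonempty chain of $\relgt$-steps (at least one), using the compatibility conditions ${\relge \cdot \relgt} \subseteq {\relgt}$ and ${\relgt \cdot \relge} \subseteq {\relgt}$ of the complexity pair.

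\textbf{Absorbing the weak steps.} I would prove by a short induction that ${\relge^*} \cdot \relgt \cdot {\relge^*} \subseteq {\relgt}$. First, using ${\relgt \cdot \relge} \subseteq {\relgt}$ repeatedly, one absorbs the trailing block: ${\relgt \cdot \relge^*} \subseteq {\relgt}$ by induction on the number of $\relge$-steps. Dually, using ${\relge \cdot \relgt} \subseteq {\relgt}$, one absorbs the leading block: ${\relge^* \cdot \relgt} \subseteq {\relgt}$. Combining these, the whole block contracts to a single $\relgt$-step. Consequently each of the $m$ relative steps in $t \to_{\REL{\TRS{R}}{\TRS{S}}}^m u$ contributes (at least) one $\relgt$-step, so $t \relgt^{m} u$ and hence $\dl{t}{\relgt} \geqslant m$, as required.

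\textbf{Anticipated obstacle.} There is no deep difficulty here; the argument is essentially a containment chase. The one point demanding care is the bookkeeping in the induction that absorbs arbitrarily long $\relge$-prefixes and $\relge$-suffixes around a single $\relgt$-step—one must apply the two compatibility inclusions in the correct order and make sure the strict step is never lost. A secondary point is handling the supremum cleanly when $\dl{t}{\to_{\REL{\TRS{R}}{\TRS{S}}}}$ is infinite, which the per-$m$ formulation above avoids.
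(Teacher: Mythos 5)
Your proof is correct and takes essentially the same route as the paper's: orient each $\TRS{R}$-step by $\relgt$ and each $\TRS{S}$-step by $\relge$ (using that both are rewrite relations), absorb the weak steps via the compatibility inclusions to obtain ${\to_{\REL{\TRS{R}}{\TRS{S}}}} \subseteq {\relgt}$, and then compare derivation lengths. The paper leaves the absorption induction and the supremum bookkeeping implicit, whereas you spell them out; that is the only difference.
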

\begin{proof}
By assumption $\REL{\TRS{R}}{\TRS{S}}$ is compatible with $(\relgt,\relge)$.
Since $\relgt$ and $\relge$ are rewrite relations
${\to_\TRS{R}} \subseteq {\relgt}$ and ${\to_\TRS{S}} \subseteq {\relge}$
holds. From the compatibility of $\relgt$ and $\relge$ we obtain
${\to_{\REL{\TRS{R}}{\TRS{S}}}} \subseteq {\relgt}$.
Hence for any sequence
\setl{$\to_{\REL{\TRS{R}}{\TRS{S}}}$}
\begin{align*}
t\fixc{\to_{\REL{\TRS{R}}{\TRS{S}}}}
t_1\fixc{\to_{\REL{\TRS{R}}{\TRS{S}}}}
t_2\fixc{\to_{\REL{\TRS{R}}{\TRS{S}}}} \cdots
\end{align*}
also
\begin{align*}
t\fixc{\relgt}
t_1\fixc{\relgt}
t_2\fixc{\relgt} \cdots
\end{align*}
holds. The result follows immediately from this.
\end{proof}

Obviously $\relgt$ must be at least well-founded if \emph{finite} complexities
should be estimated. Because we are especially interested in feasible upper
bounds the following corollary is specialized to polynomials.

\begin{cor}
\label{COR:bound}
Let $\REL{\TRS{R}}{\TRS{S}}$ be a relative TRS compatible with a
complexity pair $(\relgt,\relge)$. If the complexity of~$\relgt$
with respect to some language~$L$ is linear, quadratic, etc.\
or polynomial then the complexity of~$\REL{\TRS{R}}{\TRS{S}}$ with
respect to~$L$ is linear, quadratic, etc. or polynomial.
\end{cor}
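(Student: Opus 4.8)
The plan is to derive Corollary~\ref{COR:bound} directly from Lemma~\ref{LEM:bound} by unfolding the definition of complexity. Fix the language~$L$ and an arbitrary size bound~$n$. The key observation is that the complexity function is a supremum of derivation heights over all terms in~$L$ of size at most~$n$, so it suffices to compare these suprema pointwise via the term-by-term inequality already supplied by the lemma.

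First I would recall that Lemma~\ref{LEM:bound} gives, for every term~$t$, the inequality $\dl{t}{\relgt} \geqslant \dl{t}{\to_\REL{\TRS{R}}{\TRS{S}}}$, using that $\REL{\TRS{R}}{\TRS{S}}$ is compatible with $(\relgt,\relge)$. Taking the supremum of both sides over the set $\{t \in L \mid |t| \leqslant n\}$ and using the monotonicity of~$\sup$ with respect to a pointwise inequality, I obtain
\begin{align*}
\cp{n}{\relgt}{L} = \sup\,\{\dl{t}{\relgt} \mid t \in L,\ |t| \leqslant n\}
\geqslant \sup\,\{\dl{t}{\to_\REL{\TRS{R}}{\TRS{S}}} \mid t \in L,\ |t| \leqslant n\}
= \cp{n}{\to_\REL{\TRS{R}}{\TRS{S}}}{L}.
\end{align*}
Thus $\cp{n}{\to_\REL{\TRS{R}}{\TRS{S}}}{L} \leqslant \cp{n}{\relgt}{L}$ holds for every~$n$.

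With this inequality in hand, the polynomial (respectively linear, quadratic, etc.)\ bound transfers immediately. If $\cp{n}{\relgt}{L} = \OO(g(n))$ for the appropriate $g$, then by definition there are constants $M, N$ with $\cp{n}{\relgt}{L} \leqslant M \cdot g(n)$ for all $n \geqslant N$; combining with the displayed inequality yields $\cp{n}{\to_\REL{\TRS{R}}{\TRS{S}}}{L} \leqslant M \cdot g(n)$ for all $n \geqslant N$, which is exactly $\cp{n}{\to_\REL{\TRS{R}}{\TRS{S}}}{L} = \OO(g(n))$. Specializing $g$ to $n$, $n^2$, $n^k$, etc.\ gives the linear, quadratic, and general polynomial cases at once.

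I do not expect a genuine obstacle here, since the corollary is essentially a bookkeeping consequence of the lemma. The only point requiring mild care is the manipulation of suprema: one must ensure the step taking $\sup$ of a pointwise-dominated family is valid, including the degenerate possibility that $\dl{t}{\relgt}$ is infinite for some~$t$ (in which case the asymptotic bound on $\cp{n}{\relgt}{L}$ would fail to hold and the hypothesis is vacuous for that~$n$). Handling this cleanly amounts to observing that a finite polynomial bound on $\cp{n}{\relgt}{L}$ forces each relevant $\dl{t}{\relgt}$ to be finite, so the pointwise inequality and its supremum are well behaved.
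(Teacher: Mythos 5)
Your proof is correct and follows exactly the route the paper intends: the paper's own proof is the one-line citation ``By Lemma~\ref{LEM:bound}'', and your argument is simply that citation unfolded---taking suprema over the pointwise inequality $\dl{t}{\relgt} \geqslant \dl{t}{\to_\REL{\TRS{R}}{\TRS{S}}}$ and transferring the $\OO$-bound. Your extra care about infinite derivation heights is a reasonable (if strictly optional) refinement of the same argument.
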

\begin{proof}
By Lemma~\ref{LEM:bound}.
\end{proof}

This corollary allows to investigate the complexity of (compatible)
complexity pairs instead of the complexity of the underlying relative
TRS. Sections~\ref{MAT:main} and~\ref{BOUNDS:main} are dedicated to
formulate powerful complexity pairs.
A severe drawback of complexity pairs is that given a relative
TRS~\REL{\TRS{R}}{\TRS{S}} all rules in \TRS{R} must be oriented
strictly. In the following we present a modular approach which allows
to combine different techniques for estimating the complexity of a relative
TRS~\REL{\TRS{R}}{\TRS{S}} with respect to a language~$L$. The fundamental
idea is based on the following simple procedure. Instead of computing the
complexity of \REL{\TRS{R}}{\TRS{S}} at once we try to bound the
complexity of \REL{\TRS{R}}{\TRS{S}} by splitting \TRS{R} into smaller
components $\TRS{R}_1$ and $\TRS{R}_2$. Here
$\TRS{R} = \TRS{R}_1 \cup \TRS{R}_2$. The aim is to over-estimate
$\dl{t}{\to_{\REL{\TRS{R}}{\TRS{S}}}}$ by
$\dl{t}{\to_{\REL{\TRS{R}_1}{(\TRS{R}_2 \cup \TRS{S})}}} +
\dl{t}{\to_{\REL{\TRS{R}_2}{(\TRS{R}_1 \cup \TRS{S})}}}$.
For each relative TRS $\REL{\TRS{R}_i}{(\TRS{R}_{3-i} \cup \TRS{S})}$
with $i \in \{1,2\}$ we can proceed in two directions: we can either
split up $\TRS{R}_i$ into smaller components or over-estimate
$\dl{t}{\to_{\REL{\TRS{R}_i}{(\TRS{R}_{3-i} \cup \TRS{S})}}}$ by applying
some suitable method. (Section~\ref{IMP:main} shows that this choice is
performed automatically.) Finally the complexity of the original system is
determined by summing up all intermediate results. The next
lemma states the main observation in this direction.

\begin{lem}
\label{LEM:mod}
Let~$\REL{(\TRS{R}_1 \cup \TRS{R}_2)}{\TRS{S}}$ be a relative TRS and
let~$t$ be a terminating term. Then
 $
\dl{t}{\to_\REL{\TRS{R}_1}{(\TRS{R}_2\cup\TRS{S})}} +
\dl{t}{\to_\REL{\TRS{R}_2}{(\TRS{R}_1\cup\TRS{S})}}
\geqslant
\dl{t}{\to_\REL{(\TRS{R}_1 \cup \TRS{R}_2)}{\TRS{S}}}
$.
\end{lem}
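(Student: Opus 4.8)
The plan is to prove the \emph{per-derivation} version of the inequality: I would fix an arbitrary derivation
$t = s_0 \to_\REL{(\TRS{R}_1 \cup \TRS{R}_2)}{\TRS{S}} s_1 \to \cdots \to s_m$
of some length~$m$ and show that the fixed quantity $\dl{t}{\to_\REL{\TRS{R}_1}{(\TRS{R}_2\cup\TRS{S})}} + \dl{t}{\to_\REL{\TRS{R}_2}{(\TRS{R}_1\cup\TRS{S})}}$ is already $\geqslant m$. Since this bound holds for every~$m$ realised by some derivation, it is an upper bound of the set over which the supremum defining $\dl{t}{\to_\REL{(\TRS{R}_1 \cup \TRS{R}_2)}{\TRS{S}}}$ is taken, and the claim follows. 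The assumption that~$t$ is terminating guarantees that this supremum is finite, and the argument is insensitive to whether it is attained, so I need not invoke finite branching.

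Next I would \emph{unfold} the relative steps. By definition ${\to_\REL{(\TRS{R}_1\cup\TRS{R}_2)}{\TRS{S}}} = {\to_\TRS{S}^* \cdot \to_{\TRS{R}_1\cup\TRS{R}_2} \cdot \to_\TRS{S}^*}$, so the chosen derivation expands into a sequence of $\to_\TRS{S}$- and $\to_{\TRS{R}_1\cup\TRS{R}_2}$-steps that contains exactly~$m$ of the latter, one per relative step. I would assign each of these~$m$ applied rules to exactly one of $\TRS{R}_1$ or $\TRS{R}_2$ (breaking ties towards~$\TRS{R}_1$ should the two sets overlap), yielding counts $m_1$ and $m_2$ with $m_1 + m_2 = m$.

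The heart of the proof is a \emph{reinterpretation} of this unfolded sequence. When the same sequence is measured against $\REL{\TRS{R}_1}{(\TRS{R}_2\cup\TRS{S})}$, every $\TRS{S}$-step and every $\TRS{R}_2$-step ceases to count, because ${\to_\TRS{S}} \cup {\to_{\TRS{R}_2}} \subseteq {\to_{\TRS{R}_2\cup\TRS{S}}}$. Read through this lens the derivation becomes an alternation of $\to_{\TRS{R}_2\cup\TRS{S}}^*$-blocks and $\to_{\TRS{R}_1}$-steps with exactly $m_1$ of the latter; grouping the blocks around the $\TRS{R}_1$-steps (using ${\to_{\TRS{R}_2\cup\TRS{S}}^*} \cdot {\to_{\TRS{R}_2\cup\TRS{S}}^*} = {\to_{\TRS{R}_2\cup\TRS{S}}^*}$ to merge adjacent blocks) exhibits it as a genuine $\REL{\TRS{R}_1}{(\TRS{R}_2\cup\TRS{S})}$-derivation of length~$m_1$, whence $\dl{t}{\to_\REL{\TRS{R}_1}{(\TRS{R}_2\cup\TRS{S})}} \geqslant m_1$. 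The symmetric reinterpretation gives $\dl{t}{\to_\REL{\TRS{R}_2}{(\TRS{R}_1\cup\TRS{S})}} \geqslant m_2$, and summing the two inequalities yields $m_1 + m_2 = m$ as required.

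I expect the only real subtlety to be the bookkeeping in this last grouping step, not any conceptual difficulty. One must verify that the boundary blocks are absorbed correctly — the leading $\to_{\TRS{R}_2\cup\TRS{S}}^*$-block before the first $\TRS{R}_1$-step and the trailing block after the last one must each be attached to some relative step — and that the degenerate case $m_1 = 0$ (no counted $\TRS{R}_1$-step at all) is handled on its own, where $\dl{t}{\to_\REL{\TRS{R}_1}{(\TRS{R}_2\cup\TRS{S})}} \geqslant 0$ holds trivially and one makes no attempt to reach~$s_m$. Everything else follows routinely from the definition of the relative rewrite relation.
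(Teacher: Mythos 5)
Your proposal is correct and follows essentially the same route as the paper: both proofs take a derivation in $\to_\REL{(\TRS{R}_1\cup\TRS{R}_2)}{\TRS{S}}$, attribute each counted $\TRS{R}$-step to $\TRS{R}_1$ or $\TRS{R}_2$, and regroup the surrounding steps to exhibit derivations of lengths $m_1$ and $m_2$ in the two relative systems with $m_1 + m_2 \geqslant m$. The only (harmless) difference is that you bound an arbitrary derivation and pass to the supremum afterwards, whereas the paper fixes a derivation of maximal length $m = \dl{t}{\to_\REL{(\TRS{R}_1\cup\TRS{R}_2)}{\TRS{S}}}$ from the start; your variant sidesteps the question of whether that supremum is attained.
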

\begin{proof}
We abbreviate $\TRS{R}_1 \cup \TRS{R}_2$ by~\TRS{R}
and $\TRS{R}_{3-i} \cup \TRS{S}$ by $\TRS{S}_i$ for $i \in \{1,2\}$.
Assume that $\dl{t}{\to_\REL{\TRS{R}}{\TRS{S}}} = m$.
Then there exists a rewrite sequence
\begin{align}
\label{SEQ:old}
t       \to_\REL{\TRS{R}}{\TRS{S}}
t_1     \to_\REL{\TRS{R}}{\TRS{S}}
t_2     \to_\REL{\TRS{R}}{\TRS{S}}
\cdots  \to_\REL{\TRS{R}}{\TRS{S}}
t_{m-1} \to_\REL{\TRS{R}}{\TRS{S}}
t_m
\end{align}
of length~$m$. Next we investigate this sequence for every relative
TRS $\REL{\TRS{R}_i}{\TRS{S}_i}$ ($1 \leqslant i \leqslant 2$)
where $m_i$ overestimates how often rules from $\TRS{R}_i$ have been
applied in the original sequence.
Fix~$i$. If the sequence~\eqref{SEQ:old} does not contain an $\TRS{R}_i$
step then $t \to_{\TRS{S}_i}^m t_m$
and $m_i = 0$. In the other case there exists a maximal
(with respect to $m_i$) sequence
\begin{align}
\label{SEQ:new}
t             \to_\REL{\TRS{R}_i}{\TRS{S}_i}
t_{j_1}       \to_\REL{\TRS{R}_i}{\TRS{S}_i}
t_{j_2}       \to_\REL{\TRS{R}_i}{\TRS{S}_i}
\cdots        \to_\REL{\TRS{R}_i}{\TRS{S}_i}
t_{j_{m_i-1}} \to_\REL{\TRS{R}_i}{\TRS{S}_i}
t_m
\end{align}
where $1 \leqslant j_1 < j_2 < \dots < j_{m_i} = m$. Together with the
fact that every rewrite rule in $\TRS{R}$ is contained in $\TRS{R}_1$
or $\TRS{R}_2$ we have $m_1 + m_2 \geqslant m$. If $m_i = 0$ we obviously
have $\dl{t}{\to_\REL{\TRS{R}_i}{\TRS{S}_i}} \geqslant m_i$ and
if $t \to_\REL{\TRS{R}_i}{\TRS{S}_i}^{m_i} t_m$ with $m_i > 0$ we know
that $\dl{t}{\to_\REL{\TRS{R}_i}{\TRS{S}_i}} \geqslant m_i$ by the
choice of sequence~\eqref{SEQ:new}. (Note that in both cases it can
happen that $\dl{t}{\to_\REL{\TRS{R}_i}{\TRS{S}_i}} > m_i$ because
sequence~\eqref{SEQ:old} need not be maximal with respect to
$\to_\REL{\TRS{R}_i}{\TRS{S}_i}$.) Putting things together yields
\[
 \dl{t}{\to_\REL{\TRS{R}_1}{\TRS{S}_1}} +
 \dl{t}{\to_\REL{\TRS{R}_2}{\TRS{S}_2}}
\geqslant
m_1 + m_2 \geqslant m = \dl{t}{\to_\REL{\TRS{R}}{\TRS{S}}}
\]
which concludes the proof.
\end{proof}

As already indicated in the proof, the statement of the above lemma
does not hold for equality. This is illustrated by the following example.

\begin{exa}
\label{EXPL:mod}
Consider the relative TRS $\REL{\TRS{R}}{\TRS{S}}$ with
$\TRS{R}=\{\m{a} \to \m{b}, \m{a} \to \m{c}\}$ and
$\TRS{S}=\varnothing$.
We have $\m{a} \to_\REL{\TRS{R}}{\TRS{S}} \m{b}$ or
$\m{a} \to_\REL{\TRS{R}}{\TRS{S}} \m{c}$. Hence
$\dl{\m{a}}{\to_\REL{\TRS{R}}{\TRS{S}}} = 1$.
However, the sum of the derivation heights
$\dl{\m{a}}{\to_\REL{\{\m{a} \to \m{b}\}}{\{\m{a} \to \m{c}\}}}$ and
$\dl{\m{a}}{\to_\REL{\{\m{a} \to \m{c}\}}{\{\m{a} \to \m{b}\}}}$
is~$2$.
\end{exa}

Although for Lemma~\ref{LEM:mod} equality cannot be established the
next result states that for complexity analysis this does not matter.

\begin{thm}
\label{THM:modeq}
Let~$\REL{(\TRS{R}_1 \cup \TRS{R}_2)}{\TRS{S}}$ be a relative TRS
and $L$ be a set of terminating terms. Then
$
\cp{n}{\to_\REL{(\TRS{R}_1\cup\TRS{R}_2)}{\TRS{S}}}{L}
=
 \Theta(\cp{n}{\to_\REL{\TRS{R}_1}{(\TRS{R}_2\cup\TRS{S})}}{L} +
        \cp{n}{\to_\REL{\TRS{R}_2}{(\TRS{R}_1\cup\TRS{S})}}{L})
$.
\end{thm}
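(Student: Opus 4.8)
The plan is to prove the $\Theta$-equality by establishing the two directions of the asymptotic bound separately. Recall that $f = \Theta(g)$ means both $f = \OO(g)$ and $f = \Omega(g)$, so I need inequalities in both directions (up to constant factors). The key observation is that one direction is essentially Lemma~\ref{LEM:mod} applied uniformly across all terms, while the other direction is almost trivial because each summand is dominated by the whole.

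\medskip
\noindent
\textbf{The upper bound} ($\leqslant$ direction). First I would fix $n$ and pick, for each relevant term, the witnessing derivations. Abbreviating $\TRS{R} = \TRS{R}_1 \cup \TRS{R}_2$ and $\TRS{S}_i = \TRS{R}_{3-i} \cup \TRS{S}$, Lemma~\ref{LEM:mod} gives, for \emph{every} terminating term $t$,
\[
\dl{t}{\to_\REL{\TRS{R}}{\TRS{S}}} \leqslant
\dl{t}{\to_\REL{\TRS{R}_1}{\TRS{S}_1}} +
\dl{t}{\to_\REL{\TRS{R}_2}{\TRS{S}_2}}.
\]
Taking the supremum over all $t \in L$ with $|t| \leqslant n$ on both sides, and using that the supremum of a sum is bounded by the sum of the suprema, yields
\[
\cp{n}{\to_\REL{\TRS{R}}{\TRS{S}}}{L} \leqslant
\cp{n}{\to_\REL{\TRS{R}_1}{\TRS{S}_1}}{L} +
\cp{n}{\to_\REL{\TRS{R}_2}{\TRS{S}_2}}{L},
\]
which is exactly the $\OO$-part (with constant $M = 1$).

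\medskip
\noindent
\textbf{The lower bound} ($\geqslant$ direction, i.e.\ the $\Omega$-part). Here I would argue that each of the two relative systems $\REL{\TRS{R}_i}{\TRS{S}_i}$ is a ``sub-measurement'' of $\REL{\TRS{R}}{\TRS{S}}$. Since every $\REL{\TRS{R}_i}{\TRS{S}_i}$-step is in particular an $\REL{\TRS{R}}{\TRS{S}}$-step (because $\TRS{R}_i \subseteq \TRS{R}$ and $\TRS{S}_i = \TRS{R}_{3-i} \cup \TRS{S} \subseteq \TRS{R} \cup \TRS{S}$, whence ${\to_\REL{\TRS{R}_i}{\TRS{S}_i}} \subseteq {\to_\REL{\TRS{R}}{\TRS{S}}}$), we get $\dl{t}{\to_\REL{\TRS{R}_i}{\TRS{S}_i}} \leqslant \dl{t}{\to_\REL{\TRS{R}}{\TRS{S}}}$ for each $t$ and each $i$. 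Taking suprema over $L$ up to size $n$ gives $\cp{n}{\to_\REL{\TRS{R}_i}{\TRS{S}_i}}{L} \leqslant \cp{n}{\to_\REL{\TRS{R}}{\TRS{S}}}{L}$ for $i \in \{1,2\}$. Adding the two bounds yields
\[
\cp{n}{\to_\REL{\TRS{R}_1}{\TRS{S}_1}}{L} +
\cp{n}{\to_\REL{\TRS{R}_2}{\TRS{S}_2}}{L}
\leqslant 2\,\cp{n}{\to_\REL{\TRS{R}}{\TRS{S}}}{L},
\]
which is the $\Omega$-part (with constant $M = 2$). Combining both directions establishes the claimed $\Theta$-equality.

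\medskip
\noindent
The main subtlety I would watch for is the inclusion ${\to_\REL{\TRS{R}_i}{\TRS{S}_i}} \subseteq {\to_\REL{\TRS{R}}{\TRS{S}}}$ in the lower bound: one must check that an interleaving of $\TRS{S}_i$-steps around a single $\TRS{R}_i$-step is indeed realizable as an interleaving of $\TRS{S}$-steps around a single $\TRS{R}$-step under the relative-rewriting definition ${\to_\REL{\TRS{R}}{\TRS{S}}} = {\to_\TRS{S}^* \cdot \to_\TRS{R} \cdot \to_\TRS{S}^*}$. This holds because the $\TRS{R}_{3-i}$-steps folded into $\TRS{S}_i$ can equally well be treated as $\TRS{R}$-steps that happen not to be counted, but formally one verifies it at the level of the relations. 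A second, purely technical point is that all the suprema are over sets that may be infinite, so I should note that the inequalities between derivation heights are preserved under taking suprema (this is immediate since each term's contribution is bounded termwise), and that termination of every $t \in L$ guarantees all the derivation heights involved are finite, so no $\infty$-arithmetic issues arise.
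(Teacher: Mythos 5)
Your overall strategy coincides with the paper's: the $\OO$-direction is Lemma~\ref{LEM:mod} (constant $1$), and the $\Omega$-direction follows because each component system is dominated by the full one (constant $2$). However, your justification of the $\Omega$-direction rests on a false claim. Writing $\TRS{R}$ for $\TRS{R}_1 \cup \TRS{R}_2$ and $\TRS{S}_i$ for $\TRS{R}_{3-i} \cup \TRS{S}$, you assert the relation-level inclusion ${\to_\REL{\TRS{R}_i}{\TRS{S}_i}} \subseteq {\to_\REL{\TRS{R}}{\TRS{S}}}$, i.e.\ that every $\REL{\TRS{R}_i}{\TRS{S}_i}$-step is a \emph{single} $\REL{\TRS{R}}{\TRS{S}}$-step. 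This fails in general: a step $s \to_{\TRS{R}_{3-i}} u \to_{\TRS{R}_i} v$ is one $\REL{\TRS{R}_i}{\TRS{S}_i}$-step, but the $\TRS{R}_{3-i}$-step is an $\TRS{R}$-step and typically not an $\TRS{S}$-step, so it cannot be absorbed into the $\to_\TRS{S}^*$ parts of the definition ${\to_\REL{\TRS{R}}{\TRS{S}}} = {\to_\TRS{S}^* \cdot \to_\TRS{R} \cdot \to_\TRS{S}^*}$; here $s$ and $v$ are related by \emph{two} $\REL{\TRS{R}}{\TRS{S}}$-steps, not one. Your proposed resolution in the final paragraph---that the $\TRS{R}_{3-i}$-steps ``can equally well be treated as $\TRS{R}$-steps that happen not to be counted''---does not work, because in $\to_\REL{\TRS{R}}{\TRS{S}}$ every $\TRS{R}$-step \emph{is} counted.

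The damage is local and the repair is exactly the point where the paper is more careful: it claims only that $t \to_\REL{\TRS{R}_i}{\TRS{S}_i} t'$ implies $t \to^+_\REL{(\TRS{R}_1\cup\TRS{R}_2)}{\TRS{S}} t'$, i.e.\ inclusion into the \emph{transitive closure}. That weaker statement suffices: a $\REL{\TRS{R}_i}{\TRS{S}_i}$-derivation of length $m$ expands into an $\REL{\TRS{R}}{\TRS{S}}$-derivation of length at least $m$, so $\dl{t}{\to_\REL{\TRS{R}_i}{\TRS{S}_i}} \leqslant \dl{t}{\to_\REL{\TRS{R}}{\TRS{S}}}$ still holds for every terminating $t$, and the rest of your argument---summing over $i \in \{1,2\}$ and passing to suprema over terms in $L$ of size at most $n$---goes through unchanged. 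With that one correction your proof is sound and is essentially the paper's proof, differing only in that you phrase the two bounds after taking suprema rather than termwise.
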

\begin{proof}
We have to show that there are constants $M,N$ and $M',N'$
such that for any term $t \in L$ the following two properties hold
(for $N$ and $N'$ choose 0, i.e., a term $t$ being a normal form):
\begin{itemize}
\item
$
\dl{t}{\to_\REL{(\TRS{R}_1\cup\TRS{R}_2)}{\TRS{S}}} \leqslant
 M\cdot(\dl{t}{\to_\REL{\TRS{R}_1}{(\TRS{R}_2\cup\TRS{S})}} +
        \dl{t}{\to_\REL{\TRS{R}_2}{(\TRS{R}_1\cup\TRS{S})}})
$
\item
$
M'\cdot\dl{t}{\to_\REL{(\TRS{R}_1\cup\TRS{R}_2)}{\TRS{S}}} \geqslant
\dl{t}{\to_\REL{\TRS{R}_1}{(\TRS{R}_2\cup\TRS{S})}} +
\dl{t}{\to_\REL{\TRS{R}_2}{(\TRS{R}_1\cup\TRS{S})}}
$
\end{itemize}
The result then follows from this. Lemma~\ref{LEM:mod} shows the first
property with $M = 1$. For the second property we reason as
follows. Let $i \in \{1,2\}$ and $\TRS{S}_i = \TRS{R}_{3-i} \cup \TRS{S}$.
Since $t \to_\REL{\TRS{R}_i}{\TRS{S}_i} t'$ implies
$t \to^+_\REL{(\TRS{R}_1\cup\TRS{R}_2)}{\TRS{S}} t'$
we obtain
$
\dl{t}{\to_\REL{(\TRS{R}_1\cup\TRS{R}_2)}{\TRS{S}}}
\geqslant
 \dl{t}{\to_\REL{\TRS{R}_i}{\TRS{S}_i}}
$.
The claim is shown by choosing $M' = 2$.
\end{proof}

Theorem~\ref{THM:modeq} allows to split a relative
TRS~$\REL{(\TRS{R}_1\cup\TRS{R}_2)}{\TRS{S}}$
into \emph{smaller} components
$\REL{\TRS{R}_1}{(\TRS{R}_2\cup\TRS{S})}$ and
$\REL{\TRS{R}_2}{(\TRS{R}_1\cup\TRS{S})}$
and evaluate the complexities of these
components (e.g., by different complexity pairs) independently. Note that
this approach is not restricted to relative rewriting. To estimate the
complexity of a (non-relative) TRS~\TRS{R} just consider the
relative TRS~$\REL{\TRS{R}}{\varnothing}$. The next example shows how
proofs in the modular framework look like. Section~\ref{IMP:main} gives
more details on proof trees.

\begin{exa}
\label{EX:reverse}
Proofs in the modular setting can be viewed as trees. We sketch such a proof
in Figure~\ref{FIG:sketch} using the TRS~$\TRS{R}$ consisting of the following
five rules:
\begin{alignat*}{2}
1\colon&& \m{rev}(x) &\to \m{rev}'(x,\m{nil}) \\
2\colon&& \m{rev}'(\m{nil},y) &\to y \\
3\colon&& \m{rev}'(\m{cons}(x,y),z) &\to \m{rev}'(y,\m{append}(\m{cons}(x,\m{nil}),z)) \\
4\colon&& \m{append}(\m{nil},y) &\to y \\
5\colon&& \m{append}(\m{cons}(x,y),z) &\to \m{cons}(x,\m{append}(y,z))
\end{alignat*}
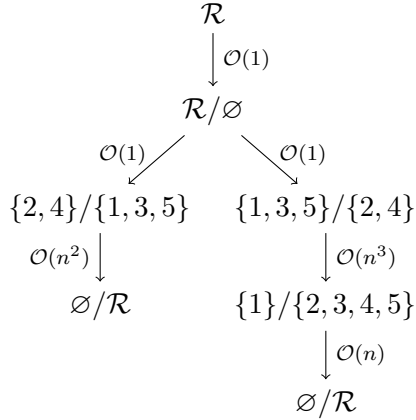
\begin{figure}
\begin{tikzpicture}[node distance=13mm and 15mm,on grid]
\node (1)                         {$\TRS{R}$};
\node (2)   [below=of 1]          {$\REL{\TRS{R}}{\varnothing}$};
\node (21)  [below left=of 2]     {$\REL{\{2,4\}}{\{1,3,5\}}$};
\node (211) [below=of 21]         {$\REL{\varnothing}{\TRS{R}}$};
\node (22)  [below right=of 2]    {$\REL{\{1,3,5\}}{\{2,4\}}$};
\node (221) [below=of 22]         {$\REL{\{1\}}{\{2,3,4,5\}}$};
\node (222) [below=of 221]        {$\REL{\varnothing}{\TRS{R}}$};
\draw[->] (1)   to node[right]      {{\scriptsize$\OO(1)$}}   (2);
\draw[->] (2)   to node[base left]  {{\scriptsize$\OO(1)$}}   (21);
\draw[->] (2)   to node[base right] {{\scriptsize$\OO(1)$}}   (22);
\draw[->] (21)  to node[left]       {{\scriptsize$\OO(n^2)$}} (211);
\draw[->] (22)  to node[right]      {{\scriptsize$\OO(n^3)$}} (221);
\draw[->] (221) to node[right]      {{\scriptsize$\OO(n)$}}   (222);
\end{tikzpicture}
\caption{Sketch of a modular complexity proof}
\label{FIG:sketch}
\end{figure}%
The root node of the tree is the TRS of interest and the other nodes are
relative rewrite systems representing intermediate
complexity problems. The edges indicate the (derivational) complexity of
the proof steps.  It is possible to apply Theorem~\ref{THM:modeq}
explicitly to split a problem into two (or more) problems as demonstrated
in the second node.  Such situations do not affect the complexity of the
given problem which justifies the labels~$\OO(1)$.  The remaining proof
steps measure the complexity of the rewrite rules that are moved from the
first into the second component (relative to the remaining rules).
These steps rely on an implicit application of Theorem~\ref{THM:modeq}.
For instance in the proof tree shown in Figure~\ref{FIG:sketch} there is an
edge from $\REL{\{1,3,5\}}{\{2,4\}}$ to $\REL{\{1\}}{\{2,3,4,5\}}$ labeled
$\OO(n^3)$, stating that the (derivational) complexity of
$\REL{\{3,5\}}{\{1,2,4\}}$ is at most cubic. This step is sound because
from Theorem~\ref{THM:modeq} we know that computing an upper bound on
$\REL{\{1\}}{\{2,3,4,5\}}$ and $\REL{\{3,5\}}{\{1,2,4\}}$ suffice to get
a valid upper bound on $\REL{\{1,3,5\}}{\{2,4\}}$.  In
Sections~\ref{MAT:main} and~\ref{BOUNDS:main} we study criteria that
allow to perform such proof steps.  Since the leaves in the tree give
rise to constant complexity, the complexity of the original problem can
be overestimated by summing up the complexities annotated to the edges;
yielding a cubic upper bound in this exemplary case. Later
(Example~\ref{EX:opt}) we will see that this bound is not tight.
\end{exa}

In the next two sections we study how matrix interpretations and the
match-bounds technique can be suited for relative complexity analysis.

\section{Matrix Interpretations}
\label{MAT:main}

This section is aimed at formulating complexity pairs based on matrix
interpretations~\cite{EWZ08}. Since our interest is in polynomial upper
bounds, triangular matrix interpretations~\cite{MSW08} and arctic matrix
interpretations~\cite{KW08} are considered. The last part of this section
generalizes the weight gap principle from~\cite{HM08} to 
(a restriction of) triangular matrix interpretations and relative rewriting.

\subsection{Preliminaries}

An \emph{\SIG{F}-algebra}~\ALG{A} consists of a non-empty carrier~$A$
and a set of interpretations~$f_\ALG{A}$ for every~$f\in\SIG{F}$. By
$[\alpha]_\ALG{A}(\cdot)$ we denote the usual evaluation function
of~\ALG{A} according to an assignment~$\alpha$.
An~\SIG{F}-algebra~\ALG{A} together with two relations~$\relgt$
and~$\relge$ on~$A$ is called a \emph{monotone algebra} if
every $f_\ALG{A}$ is monotone with respect to~$\relgt$ and~$\relge$,
$\relgt$ is a well-founded order, and $\relgt$ and $\relge$ are compatible.
Any monotone algebra~$(\ALG{A},\relgt,\relge)$ induces a well-founded
order on terms, i.e., $s \relgt_\ALG{A} t$ if for any assignment $\alpha$
the condition $[\alpha]_\ALG{A}(s) \relgt [\alpha]_\ALG{A}(t)$ holds. The
order $\relge_\ALG{A}$ is defined similarly.
A relative TRS~$\REL{\TRS{R}}{\TRS{S}}$ is \emph{compatible} with a
monotone algebra~$(\ALG{A},\relgt,\relge)$ if \REL{\TRS{R}}{\TRS{S}}
is compatible with $(\relgt_\ALG{A},\relge_\ALG{A})$.

\subsection{Triangular Matrix Interpretations}

\emph{Matrix interpretations}~$(\ALG{M},\relgt,\relge)$
(often just denoted~\ALG{M}) are a special form of monotone algebras.
Here the carrier is $\Nat^{d}$ for some fixed
dimension~$d \in \Nat \setminus \{0\}$. The order
$\relge$ is the point-wise extension of $\geqslant_\Nat$ to
vectors and
$\vec{u} \relgt \vec{v}$ if $u_1 >_\Nat v_1$ and
$\vec{u} \relge \vec{v}$. If every $f\in\SIG{F}$ of arity~$n$
is interpreted as
$f_\ALG{M}(\vec{x_1},\dots,\vec{x_n}) = F_1\vec{x_1} + \dots +
F_n\vec{x_n} + \vec{f}$ where $F_i \in \Nat^{d\times d}$ for all
$1\leqslant i \leqslant n$ and $\vec{f} \in \Nat^d$ then
monotonicity of~$\relgt$ is achieved by demanding
${F_i}_{(1,1)} \geqslant 1$ for any 
$f \in \SIG{F}$ and 
$1\leqslant i\leqslant n$.
Such interpretations have been introduced in~\cite{EWZ08}.

A matrix interpretation where for every $f\in\SIG{F}$ all $F_i$
($1\leqslant i\leqslant n$ where~$n$ is the arity of~$f$)
are upper triangular is called
\emph{triangular matrix interpretation} (abbreviated by TMI).
A square matrix~$A$ of dimension~$d$ is of \emph{upper triangular}
shape if $A_{(i,i)} \leqslant 1$ and $A_{(i,j)} = 0$ if $i > j$ for all
$1\leqslant i,j\leqslant d$.
For historic reasons a TMI based on matrices of dimension one is also
called \emph{strongly linear interpretation} (SLI for short).
In~\cite{MSW08} it is shown that the derivational complexity of
a TRS~\TRS{R} is bounded by a polynomial of degree~$d$ if there
exists a TMI~\ALG{M} of dimension$~d$
such that~$\TRS{R} \subseteq {\relgt_\ALG{M}}$.
For our setting the following formulation is more useful.

\begin{thm}
\label{THM:tmi}
Let $\ALG{M}$ be a TMI of dimension~$d$ over a
signature $\SIG{F}$. Then $(\relgt_\ALG{M},\relge_\ALG{M})$
is a complexity pair. Furthermore
$\cp{n}{\relgt_\ALG{M}}{\FVTERMS} = \OO(n^d)$.
\end{thm}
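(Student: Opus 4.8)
The plan is to prove the two assertions in turn. That $(\relgt_\ALG{M},\relge_\ALG{M})$ is a complexity pair is largely inherited from the monotone-algebra structure of $\ALG{M}$: monotonicity of every $f_\ALG{M}$ makes both induced relations closed under contexts, the universal quantification over assignments in their definition makes them closed under substitutions, so both are rewrite relations; and compatibility lifts coordinatewise from $\Nat^d$, since $s\relge_\ALG{M} t\relgt_\ALG{M} u$ forces $[\alpha]_\ALG{M}(s)\relge[\alpha]_\ALG{M}(t)\relgt[\alpha]_\ALG{M}(u)$ for every $\alpha$, whence $s\relgt_\ALG{M} u$ by ${\relge\cdot\relgt}\subseteq{\relgt}$ on vectors (the symmetric inclusion is identical). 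The only clause needing a separate argument is well-foundedness and the finiteness of the derivation heights on which Lemma~\ref{LEM:bound} and Corollary~\ref{COR:bound} rest; I would obtain these from the fact that a single $\relgt_\ALG{M}$-step strictly decreases the first coordinate of the interpretation, a value in $\Nat$.

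That same observation drives the quantitative bound. Let $\alpha_0$ assign the zero vector to every variable. Along any chain $t\relgt_\ALG{M} t_1\relgt_\ALG{M}\cdots\relgt_\ALG{M} t_m$ the numbers $[\alpha_0]_\ALG{M}(t_i)_1$ strictly decrease in $\Nat$, so $\dl{t}{\relgt_\ALG{M}}\leqslant[\alpha_0]_\ALG{M}(t)_1$, and it remains to prove $[\alpha_0]_\ALG{M}(t)_1=\OO(|t|^{d})$. Under $\alpha_0$ variables contribute $\vec{0}$, so only the at most $|t|$ function positions of $t$ matter.

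The heart of the argument is a coordinatewise induction that exploits upper triangularity. I would show, by downward induction on $j$ from $d$ to $1$, that $[\alpha_0]_\ALG{M}(t)_j\leqslant c_j\,|t|^{\,d-j+1}$ for a constant $c_j$. Upper triangularity means that coordinate $j$ of $f_\ALG{M}(\vec{x_1},\dots,\vec{x_n})$ depends only on the coordinates $k\geqslant j$ of its arguments, with the diagonal coefficient ${F_i}_{(j,j)}\leqslant 1$; hence $[\alpha_0]_\ALG{M}(t)_j\leqslant\sum_i[\alpha_0]_\ALG{M}(t_i)_j+E(t)$, where the coordinate-$j$ part of the arguments merely accumulates (coefficient $\leqslant 1$) and $E(t)$ gathers the coordinates $k>j$ of the immediate arguments together with the additive constant $f_j$. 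For $j=d$ the term $E(t)$ is a bounded constant, so unfolding the recursion over the function positions gives a linear bound. For $j<d$ the induction hypothesis bounds each coordinate $k>j$ by $\OO(|t|^{\,d-k+1})=\OO(|t|^{\,d-j})$, so $E$ contributes $\OO(|t|^{\,d-j})$ per symbol; summing these local contributions over the at most $|t|$ function positions yields $[\alpha_0]_\ALG{M}(t)_j=\OO(|t|^{\,d-j+1})$. Instantiating $j=1$ gives $[\alpha_0]_\ALG{M}(t)_1=\OO(|t|^{d})$, and taking the supremum over terms of size at most $n$ produces $\cp{n}{\relgt_\ALG{M}}{\FVTERMS}=\OO(n^{d})$.

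I expect the main obstacle to be making this induction fully rigorous rather than any single clever idea. One must cleanly separate the coordinate-$j$ contribution of the subterms, which only accumulates with coefficient at most $1$ and therefore costs exactly one extra factor of $|t|$ once the recursion is unfolded over all positions, from the strictly higher coordinates, which are already polynomially controlled by the induction hypothesis and supply the lower-degree part. Organizing the per-symbol constants and justifying that the recursion unfolds into a sum of at most $|t|$ locally bounded terms is the real bookkeeping effort; the shape of upper triangular matrices with diagonal entries $\leqslant 1$ is precisely what guarantees that no coordinate can exceed its claimed degree.
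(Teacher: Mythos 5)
Your proof is correct, but note that the paper does not actually prove this theorem in-line: its ``proof'' is a one-line citation of \cite[Theorem~6]{MSW08}, so what you have written is a self-contained reconstruction of the delegated argument. Your route---bounding $\dl{t}{\relgt_\ALG{M}}$ by the first coordinate of $[\alpha_0]_\ALG{M}(t)$ and then performing a downward induction on coordinates $j = d,\dots,1$ to show $[\alpha_0]_\ALG{M}(t)_j = \OO(|t|^{d-j+1})$---is a term-level rendering of the same structural fact that the cited proof exploits at the matrix level, namely that in products of upper triangular matrices with diagonal entries at most one the $(i,j)$ entry can grow only polynomially, with degree controlled by $j-i$; one then reads the interpretation of a term off such products. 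The two formulations buy slightly different things: yours is elementary and self-contained, requiring no separate lemma about matrix products, whereas the matrix-product formulation isolates a reusable quantitative statement, which is what later refinements (e.g.\ the spectral-radius criteria of \cite{NZM10} mentioned in the conclusion) build on. A final remark in your favour: the paper's definition of complexity pair literally demands finitely branching relations, and $\relgt_\ALG{M}$ need not be finitely branching (take a unary symbol interpreted by the identity matrix); by instead establishing directly that every $\relgt_\ALG{M}$-chain from $t$ has length at most $[\alpha_0]_\ALG{M}(t)_1$, you supply exactly the property that Lemma~\ref{LEM:bound} and Corollary~\ref{COR:bound} actually need, which is also the implicit reading under which the paper's appeal to \cite{MSW08} is sound.
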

\begin{proof}
Straightforward from \cite[Theorem~6]{MSW08}.
\end{proof}

The following example familiarizes the reader with
TMIs.

\begin{exa}
\label{EX:tmi}
Consider the relative TRS~$\REL{\TRS{R}}{\TRS{S}}$ over the signature
$\SIG{F} = \{\m{f},\m{g}\}$ defined as
$\TRS{R} = \{\m{f}(\m{f}(x)) \to \m{f}(\m{g}(\m{f}(x)))\}$ and
$\TRS{S} = \{\m{f}(x) \to x\}$. Then the TMI~$\ALG{M}$ of dimension two
with
\begin{xalignat*}{2}
\m{f}_\ALG{M}(\vec x) &= \BM
1\NE 0\NR
0\NE 1\NR
\EM \vec x +
\BM
1\NR
1\NR
\EM
&
\m{g}_\ALG{M}(\vec x) &= \BM
1\NE 0\NR
0\NE 0\NR
\EM \vec x
\end{xalignat*}
induces the complexity pair
$\!(\relgt_\ALG{M},\relge_\ALG{M})$.
Furthermore $\!\REL{\TRS{R}}{\TRS{S}}\!$ is compatible
with $\!(\relgt_\ALG{M},\relge_\ALG{M})\!$.
Theorem~\ref{THM:tmi} gives a quadratic upper bound on
$\cp{n}{\relgt_\ALG{M}}{\FVTERMS}$.
Hence
the derivational complexity of~$\REL{\TRS{R}}{\TRS{S}}$ is at most quadratic
by Corollary~\ref{COR:bound}.
It is easy to see (cf.\ Example~\ref{EX:ami}) that this bound is not
tight. We remark that there cannot exist an SLI that establishes a linear
upper bound because no SLI can orient the rule $\m{f}(\m{f}(x)) \to
\m{f}(\m{g}(\m{f}(x)))$ strictly.
\end{exa}

\subsection{Arctic Matrix Interpretations}

We define $\Arctic = \Nat \cup \{\minfty\}$.
For matrices $A \in \Arctic^{n\times m}$ and
$B \in \Arctic^{m\times p}$ the operation
$\otimes$ yields an $n \times p$ matrix and is defined as follows:
$(A \otimes B)_{(i,j)} = \max_{1\leqslant k\leqslant m} \{A_{(i,k)}+B_{(k,j)}\}$
where $+$ and $\max$ are extended naturally to deal with~$\minfty$
(see~\cite{KW08}). Furthermore $x >_\Arctic y$ if and only if $x >_\Nat y$
or $x = y = \minfty$, and $x \geqslant_\Arctic y$ if and only if
$x \geqslant_\Nat y$ or $y = \minfty$.%
\footnote{\ Note that $\minfty >_\Arctic \minfty$ and hence $>_\Arctic$ is 
not well-founded. Hence such comparisons are disallowed at certain matrix
positions.}

An \emph{arctic matrix interpretation}~$(\ALG{A},\relgt,\relge)$
(abbreviated by AMI and often just denoted~\ALG{A}) is a special
form of a monotone algebra. 
Here the carrier is $\Arctic^{d}$ for some
fixed dimension~$d \in \Nat \setminus \{0\}$. The orders
$\relge$ and $\relgt$ are the
point-wise extensions of $\geqslant_\Arctic$ and $>_\Arctic$
to vectors, respectively.
Every unary function symbol~$f\in \SIG{F}$ is interpreted
as $f_\ALG{A}(\vec x) = F \otimes \vec x$ where
$F \in \Arctic^{d\times d}$ and every constant~$c$ as
$c_\ALG{A} = \vec c$ where $\vec c \in \Arctic^d$.
Monotonicity of~$\relgt$ is achieved by the restriction to at
most unary function symbols and by demanding that
$F_{(1,1)}$ and $c_1$ are different from~$\minfty$ for unary
function symbols~$f$ and constants~$c$, respectively.
In~\cite{KW08} it is shown that the derivational complexity of a
TRS~\TRS{R}, which contains unary and constant function symbols only, is
at most linear if there exists an AMI~\ALG{A} (of some dimension~$d$)
such that~$\TRS{R} \subseteq {\relgt_\ALG{A}}$.

\begin{thm}
\label{THM:ami}
Let $\ALG{A}$ be an AMI of dimension~$d$ over a signature $\SIG{F}$
that contains constants and unary function symbols only.
Then $(\relgt_\ALG{A},\relge_\ALG{A})$
is a complexity pair. Furthermore
$\cp{n}{\relgt_\ALG{A}}{\FVTERMS} = \OO(n)$.
\end{thm}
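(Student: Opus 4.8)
The plan is to follow the same route as Theorem~\ref{THM:tmi}: first check that $(\relgt_\ALG{A},\relge_\ALG{A})$ meets the requirements of a complexity pair, and then establish the linear bound $\cp{n}{\relgt_\ALG{A}}{\FVTERMS} = \OO(n)$, the latter being exactly the content of the linearity result for arctic interpretations from~\cite{KW08}. For the complexity-pair part there is nothing genuinely new to do: since an AMI is by definition a monotone algebra $(\ALG{A},\relgt,\relge)$, the induced relations $\relgt_\ALG{A}$ and $\relge_\ALG{A}$ are closed under contexts (monotonicity of every $f_\ALG{A}$) and under substitutions (they are defined through all assignments $\alpha$), hence are rewrite relations; their compatibility ${\relge_\ALG{A}\cdot\relgt_\ALG{A}}\subseteq{\relgt_\ALG{A}}$ and ${\relgt_\ALG{A}\cdot\relge_\ALG{A}}\subseteq{\relgt_\ALG{A}}$ is inherited from the compatibility of $\relgt$ and $\relge$ on $\Arctic^{d}$. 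Well-foundedness and the finite-branching condition are obtained exactly as in the triangular case, so I would simply point to the monotone-algebra setup and to~\cite{KW08} for these facts rather than reprove them.

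The heart of the matter is the bound $\cp{n}{\relgt_\ALG{A}}{\FVTERMS}=\OO(n)$, and I would prove it by bounding $\dh{t}{\relgt_\ALG{A}}$ directly for an arbitrary term $t$. Fix the assignment $\alpha_{0}$ sending every variable to the zero vector and consider any chain $t \relgt_\ALG{A} t_{1} \relgt_\ALG{A} t_{2} \relgt_\ALG{A} \cdots$. By definition of $\relgt_\ALG{A}$ each step yields $[\alpha_{0}]_\ALG{A}(t_{j}) \relgt [\alpha_{0}]_\ALG{A}(t_{j+1})$, and in particular the first coordinates satisfy $[\alpha_{0}]_\ALG{A}(t_{j})_{1} >_{\Arctic} [\alpha_{0}]_\ALG{A}(t_{j+1})_{1}$. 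The key observation is that the monotonicity restrictions $F_{(1,1)}\neq\minfty$ and $c_{1}\neq\minfty$ force the first coordinate of the interpretation of every term to be a genuine natural number (no $\minfty$ can ever appear there), so $>_{\Arctic}$ on first coordinates is just $>_{\Nat}$ and each step decreases this natural number by at least one. Consequently $\dh{t}{\relgt_\ALG{A}} \leqslant [\alpha_{0}]_\ALG{A}(t)_{1}$.

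It then remains to bound $[\alpha_{0}]_\ALG{A}(t)_{1}$ linearly in $|t|$. Here the restriction to constants and unary symbols is essential: such a $t$ is a chain $f_{i_{1}}(\cdots f_{i_{k}}(c)\cdots)$ with $k=\|t\|$, and its interpretation is the single max-plus product $F_{i_{1}} \otimes \cdots \otimes F_{i_{k}} \otimes \vec c$. Writing $C$ for the largest finite entry occurring in any $F_{i}$ or in $\vec c$, one shows by induction along the chain that the maximal coordinate of the interpretation grows by at most $C$ per applied symbol, so $[\alpha_{0}]_\ALG{A}(t)_{1} \leqslant C\cdot\|t\| + C = \OO(|t|)$. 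Combining this with the previous paragraph gives $\dh{t}{\relgt_\ALG{A}} = \OO(|t|)$ for every $t$, and since the constant $C$ is uniform in $t$, taking the supremum over terms of size at most $n$ yields $\cp{n}{\relgt_\ALG{A}}{\FVTERMS} = \OO(n)$ as claimed.

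The step I expect to be the real point of the argument is the linear (rather than merely polynomial) growth of the interpretation value: in the max-plus algebra an iterated product along a chain of length $k$ only adds a bounded amount at each factor, whereas for ordinary matrix interpretations the analogous quantity can grow like $k^{d}$, which is precisely why Theorem~\ref{THM:tmi} yields degree $d$ and the present theorem yields degree one. The unary-plus-constant restriction is what guarantees that the interpretation of a term is one such chain product and not a more complicated expression; relaxing it would break the linear estimate. All of this is carried out in~\cite{KW08}, and the proof of the present formulation is obtained by reading off the bound on $\dh{t}{\relgt_\ALG{A}}$ from that argument, exactly as Theorem~\ref{THM:tmi} reads off its bound from~\cite{MSW08}.
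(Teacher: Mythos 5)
Your proposal is correct and follows essentially the same route as the paper: the paper's proof is simply a citation of \cite[Lemma~17]{KW08}, and what you write out (fixing an assignment with finite first coordinates, using $F_{(1,1)},c_1\neq\minfty$ to keep first coordinates in $\Nat$ so each $\relgt_\ALG{A}$-step is a genuine $>_\Nat$-decrease, and bounding the max-plus chain product by $C\cdot\|t\|+C$ thanks to the unary-plus-constant restriction) is precisely the content of that lemma. The unpacked argument is sound, so it matches the paper's proof in substance while being more self-contained.
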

\begin{proof}
Straightforward from \cite[Lemma~17]{KW08}.
\end{proof}

\begin{exa}
\label{EX:ami}
Consider the TRSs from Example~\ref{EX:tmi}.
Then the AMI~$\ALG{A}$ satisfying
\begin{xalignat*}{2}
\m{f}_\ALG{A}(\vec x) &= \BM
1\NE 3\NR
0\NE 3\NR
\EM \vec x
&
\m{g}_\ALG{A}(\vec x) &= \BM
0\NE 1\NR
\minfty\NE \minfty\NR
\EM \vec x
\end{xalignat*}
induces the complexity pair
$\!(\relgt_\ALG{A},\relge_\ALG{A})$. Furthermore
$\!\REL{\TRS{R}}{\TRS{S}}$ is compatible with
$\!(\relgt_\ALG{A},\relge_\ALG{A})$.\break  Theorem~\ref{THM:ami} gives a
linear upper bound on $\cp{n}{\relgt_\ALG{A}}{\FVTERMS}$. Hence the
derivational complexity of~$\REL{\TRS{R}}{\TRS{S}}$ is at most linear
by Corollary~\ref{COR:bound}.  It is easy to see that this bound is
tight.
\end{exa}

\subsection{Complexity Gap Principle}

An obvious question is whether it suffices to estimate polynomial
complexity of $\REL{(\TRS{R}_1\cup\TRS{R}_2)}{\TRS{S}}$ by establishing
polynomial upper bounds on the complexities of
$\REL{\TRS{R}_1}{(\TRS{R}_2 \cup \TRS{S})}$ and $\REL{\TRS{R}_2}{\TRS{S}}$
(in contrast to $\REL{\TRS{R}_2}{(\TRS{R}_1 \cup \TRS{S})}$
as in Theorem~\ref{THM:modeq}). The following example by
Hofbauer~\cite{HW06t} shows that in general the complexity of
$\REL{(\TRS{R}_1 \cup \TRS{R}_2)}{\TRS{S}}$ might be much larger
than the sum of the components above; even for systems where both
parts have linear complexity. Here $\TRS{S} = \varnothing$.

\begin{exa}
\label{EX:counterexample}
Consider the TRS~$\TRS{R}_1$ consisting of the single rule
\begin{xalignat*}{3}
&& \m{c}(\m{L}(x)) &\to \m{R}(x) &&
\intertext{
and the TRS~$\TRS{R}_2$ consisting of the rewrite rules
}
\m{R}(\m{a}(x)) &\to \m{b}(\m{b}(\m{R}(x))) &
\m{R}(x) &\to \m{L}(x) &
\m{b}(\m{L}(x)) &\to \m{L}(\m{a}(x))
\end{xalignat*}
The derivational complexity of the relative
TRS $\REL{\TRS{R}_1}{\TRS{R}_2}$ is
linear, due to the SLI that just counts the $\m{c}$'s. The derivational
complexity of $\TRS{R}_2$ is linear as well since the system can be proved
terminating by the match-bound technique~\cite{GHWZ07}. However, the
TRS $\TRS{R}_1 \cup \TRS{R}_2$ admits exponentially long
derivations in the size of the starting term:
\[
\begin{array}{r@{~\to~}l@{~\to~}l@{~}l@{~}l}
\m{c}^n(\m{L}(\m{a}(x))) &
\m{c}^{n-1}(\m{R}(\m{a}(x))) &
\m{c}^{n-1}(\m{b}(\m{b}(\m{R}(x)))) & \to &
\m{c}^{n-1}(\m{b}(\m{b}(\m{L}(x)))) \\[1ex]
&
\m{c}^{n-1}(\m{b}(\m{L}(\m{a}(x)))) &
\m{c}^{n-1}(\m{L}(\m{a}(\m{a}(x)))) &\to^* &
\m{L}(\m{a}^{2^n}(x))
\end{array}
\]
\end{exa}

Under certain circumstances the problem of the preceding example
does not occur. Inspired by the weight gap principle of Hirokawa and
Moser~\cite{HM08} (which was developed to estimate weak dependency pair
steps relative to usable rule steps), below we state abstract criteria
on $\TRS{R}_1$ and $\TRS{R}_2$ such that the complexity of
$\REL{\TRS{R}_1}{(\TRS{R}_2 \cup \TRS{S})}$ and $\REL{\TRS{R}_2}{\TRS{S}}$
determines the complexity of $\REL{(\TRS{R}_1 \cup \TRS{R}_2)}{\TRS{S}}$.

\begin{thm}[Complexity Gap Principle]
\label{THM:dgp}
Let $\REL{(\TRS{R}_1 \cup \TRS{R}_2)}{\TRS{S}}$ be a relative TRS
and $L$ be a set of terminating terms. If there exist a complexity
pair $(\relgt,\relge)$ and a constant~$\Delta$ such that
$\REL{\TRS{R}_2}{\TRS{S}}$ is compatible with $(\relgt,\relge)$
and $u \to_{\TRS{R}_1} v$ implies
$\dl{u}{\relgt} + \Delta \geqslant \dl{v}{\relgt}$ then
$\cp{n}{\to_{\REL{(\TRS{R}_1 \cup \TRS{R}_2)}{\TRS{S}}}}{L} =
\OO(\cp{n}{\to_{\REL{\TRS{R}_1}{(\TRS{R}_2 \cup \TRS{S})}}}{L} +
\cp{n}{\relgt}{L})$.
\end{thm}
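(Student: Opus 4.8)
The plan is to prove the sharper \emph{pointwise} estimate
\[
\dl{t}{\to_{\REL{(\TRS{R}_1\cup\TRS{R}_2)}{\TRS{S}}}}
\leqslant
(1+\Delta)\,\dl{t}{\to_{\REL{\TRS{R}_1}{(\TRS{R}_2\cup\TRS{S})}}}
+ \dl{t}{\relgt}
\]
for every terminating term $t$, and then to take the supremum over all $t \in L$ with $|t| \leqslant n$. Since $\Delta$ is a fixed constant, this immediately yields the asserted $\OO$-bound with multiplicative constant $M = 1+\Delta$ (using $(1+\Delta)A + B \leqslant (1+\Delta)(A+B)$). Throughout I write $h(t)$ for the measure $\dl{t}{\relgt}$; this is a natural number because $\relgt$ is a finitely branching well-founded order, so by König's lemma the $\relgt$-reduction tree out of $t$ is finite (if $\relgt$ is not well-founded then $\cp{n}{\relgt}{L}$ is infinite and the bound is vacuous). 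To set up the counting, fix a $\REL{(\TRS{R}_1\cup\TRS{R}_2)}{\TRS{S}}$-derivation out of $t$ of maximal length $m = \dl{t}{\to_{\REL{(\TRS{R}_1\cup\TRS{R}_2)}{\TRS{S}}}}$ and unfold each relative step $\to_{\TRS{S}}^* \cdot \to_{\TRS{R}_1\cup\TRS{R}_2} \cdot \to_{\TRS{S}}^*$ into individual $\TRS{S}$-, $\TRS{R}_1$-, and $\TRS{R}_2$-steps. The resulting sequence contains exactly $a$ applications of $\TRS{R}_1$ and $b$ applications of $\TRS{R}_2$, with $a + b = m$.

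Next I would record how $h$ changes along a single unfolded step, using the hypotheses of the theorem. For an $\TRS{S}$-step $u \to_{\TRS{S}} v$, compatibility of $\REL{\TRS{R}_2}{\TRS{S}}$ gives $\TRS{S} \subseteq {\relge}$, hence $u \relge v$; together with the complexity-pair condition $\relge \cdot \relgt \subseteq \relgt$ this shows $h(u) \geqslant h(v)$, i.e.\ an $\TRS{S}$-step never increases $h$. For an $\TRS{R}_2$-step $u \to_{\TRS{R}_2} v$, compatibility gives $\TRS{R}_2 \subseteq {\relgt}$, hence $u \relgt v$, so any $\relgt$-derivation out of $v$ extends to one out of $u$ and $h(u) \geqslant h(v) + 1$; thus each $\TRS{R}_2$-step decreases $h$ by at least one. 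Finally, the weight-gap hypothesis says precisely that an $\TRS{R}_1$-step $u \to_{\TRS{R}_1} v$ satisfies $h(v) \leqslant h(u) + \Delta$, so each $\TRS{R}_1$-step increases $h$ by at most $\Delta$.

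Telescoping these three effects over the whole unfolded sequence (and using $h \geqslant 0$ at the endpoint) bounds the total decrease from below by $b - \Delta a$, which gives $b \leqslant h(t) + \Delta a = \dl{t}{\relgt} + \Delta a$. Independently, reading the \emph{same} sequence as a $\REL{\TRS{R}_1}{(\TRS{R}_2\cup\TRS{S})}$-derivation—each $\TRS{R}_1$-step being a genuine step and all intervening $\TRS{R}_2$- and $\TRS{S}$-steps being absorbed into the relative part $\TRS{R}_2 \cup \TRS{S}$, exactly as in the proof of Lemma~\ref{LEM:mod}—shows $a \leqslant \dl{t}{\to_{\REL{\TRS{R}_1}{(\TRS{R}_2\cup\TRS{S})}}}$. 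Combining, $m = a + b \leqslant (1+\Delta)a + \dl{t}{\relgt}$, which is exactly the displayed pointwise inequality; taking suprema then finishes the proof.

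The step I expect to be the main obstacle is the bookkeeping of the potential $h$ across $\TRS{R}_1$-steps. Unlike $\TRS{S}$- and $\TRS{R}_2$-steps, which respect the ordering monotonically, an $\TRS{R}_1$-step may \emph{raise} $h$, and this is exactly the ``recharging'' phenomenon responsible for the exponential blow-up in Example~\ref{EX:counterexample}. The weight-gap assumption caps each such increase at $\Delta$, and the decisive point is that the total recharge $\Delta a$ is controlled because the number $a$ of $\TRS{R}_1$-steps is itself bounded by the first summand $\dl{t}{\to_{\REL{\TRS{R}_1}{(\TRS{R}_2\cup\TRS{S})}}}$; making this two-way accounting airtight (and ensuring $h$ is finite so the arithmetic is legitimate) is where the care is needed, whereas the per-step estimates are routine consequences of compatibility.
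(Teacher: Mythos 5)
Your proof is correct and is essentially the paper's own argument: both fix a maximal derivation, use $\dl{\cdot}{\relgt}$ as a potential that $\TRS{S}$-steps do not increase, $\TRS{R}_2$-steps decrease by at least one, and $\TRS{R}_1$-steps raise by at most $\Delta$, telescope this to bound the number of $\TRS{R}_2$-steps by $\dl{t}{\relgt}$ plus $\Delta$ times the number of $\TRS{R}_1$-steps, bound the latter count by $\dl{t}{\to_{\REL{\TRS{R}_1}{(\TRS{R}_2\cup\TRS{S})}}}$, and conclude with the same constant $M = \Delta+1$. The differences are purely presentational: you unfold the relative steps and telescope explicitly where the paper iterates a block-wise inequality, and you add an explicit (and harmless) discussion of the finiteness of $\dl{\cdot}{\relgt}$.
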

\begin{proof}
We show that under the above assumptions, for any term $s \in L$
there exists a constant~$M$ such that
$\dl{s}{\to_{\REL{(\TRS{R}_1 \cup \TRS{R}_2)}{\TRS{S}}}} \leqslant
M \cdot \dl{s}{\to_{\REL{\TRS{R}_1}{(\TRS{R}_2 \cup \TRS{S})}}} +
\dl{s}{\relgt}$.
Consider a derivation of maximal length
in $\REL{(\TRS{R}_1 \cup \TRS{R}_2)}{\TRS{S}}$,
written as follows:
\begin{xalignat}{1}
\label{CGP:seq}
s = s_0 \to^{k_0}_{\REL{\TRS{R}_2}{\TRS{S}}} \cdot \to^*_\TRS{S}
t_0 \to_{\TRS{R}_1}
s_1 \to^{k_1}_{\REL{\TRS{R}_2}{\TRS{S}}} \cdot \to^*_\TRS{S}
t_1 \to_{\TRS{R}_1}
\cdots \to_{\TRS{R}_1}
s_m \to^{k_m}_{\REL{\TRS{R}_2}{\TRS{S}}} \cdot \to^*_\TRS{S}
t_m
\end{xalignat}
Since sequence~\eqref{CGP:seq} is maximal,
$\dl{s_0}{\to_{\REL{(\TRS{R}_1\cup \TRS{R}_2)}{\TRS{S}}}} \leqslant
\dl{s_0}{\to_{\REL{\TRS{R}_1}{(\TRS{R}_2 \cup \TRS{S})}}} +
\sum_{0\leqslant i\leqslant m} k_i$.
Because $\REL{\TRS{R}_2}{\TRS{S}}$ is compatible with $(\relgt,\relge)$
we have $\dl{s_0}{\relgt} \geqslant \dl{t_0}{\relgt} + k_0$.
From the assumption, $\dl{t_0}{\relgt} + \Delta \geqslant \dl{s_1}{\relgt}$
follows and hence
$\dl{s_0}{\relgt} + \Delta \geqslant \dl{s_1}{\relgt} + k_0$.
Repeating this argument shows
$\dl{s_0}{\relgt} + {m\cdot\Delta} \geqslant
\sum_{0\leqslant i\leqslant m} k_i$.
Because $m \leqslant \dl{s_0}{\to_{\REL{\TRS{R}_1}{(\TRS{R}_2 \cup \TRS{S})}}}$
(note that equality does not hold since sequence~\eqref{CGP:seq} need not be
maximal for $\REL{\TRS{R}_1}{(\TRS{R}_2 \cup \TRS{S})}$) we obtain
$\dl{s_0}{\to_{\REL{(\TRS{R}_1 \cup \TRS{R}_2)}{\TRS{S}}}} \leqslant
\dl{s_0}{\to_{\REL{\TRS{R}_1}{(\TRS{R}_2 \cup \TRS{S})}}} + \dl{s_0}{\relgt} +
{\dl{s_0}{\to_{\REL{\TRS{R}_1}{(\TRS{R}_2 \cup \TRS{S})}}}} \cdot \Delta$
which simplifies to
$\dl{s_0}{\to_{\REL{(\TRS{R}_1\cup \TRS{R}_2)}{\TRS{S}}}} \leqslant
(\Delta + 1) \cdot \dl{s_0}{\to_{\REL{\TRS{R}_1}{(\TRS{R}_2 \cup \TRS{S})}}} +
\dl{s_0}{\relgt}$.
Finally, taking $M = \Delta + 1$ concludes the proof.
\end{proof}

To implement the above theorem the question arises which further requirements
besides compatibility of $\REL{\TRS{R}_2}{\TRS{S}}$ with a complexity pair
$(\relgt,\relge)$ are required such that for any terms $u$ and $v$ a step
$u \to_{\TRS{R}_1} v$ implies the desired
$\dl{u}{\relgt} + \Delta \geqslant \dl{s}{\relgt}$ for some constant~$\Delta$.
One idea is to test $\dl{l}{\relgt} + \Delta \geqslant \dl{r}{\relgt}$
explicitly 
for any $l\to r \in \TRS{R}_1$ and 
demand that the complexity pair $(\relgt,\relge)$ then satisfies
$\dh{C[l\sigma]}{\relgt} + \Delta \geqslant \dh{C[r\sigma]}{\relgt}$
for all contexts~$C$ and substitutions~$\sigma$.

As we know from~\cite{HM08}, SLIs can be used to get a concrete instance
of Theorem~\ref{THM:dgp} with respect to derivational complexity, if
\TRS{S} is empty. Below we state the result in the relative setting, which
is more useful for our purposes.

\begin{cor}
\label{COR:sli}
Let $\REL{(\TRS{R}_1 \cup \TRS{R}_2)}{\TRS{S}}$ be a relative TRS,
$\TRS{R}_1$ be non-duplicating, and $\REL{\TRS{R}_2}{\TRS{S}}$ be
compatible with an SLI. Then
$\dc{n}{\REL{(\TRS{R}_1 \cup \TRS{R}_2)}{\TRS{S}}} =
\OO(\dc{n}{\REL{\TRS{R}_1}{(\TRS{R}_2 \cup \TRS{S})}} + n)$.
\end{cor}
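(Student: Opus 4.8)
The plan is to feed the given SLI into the Complexity Gap Principle (Theorem~\ref{THM:dgp}). Write $\ALG{M}$ for the SLI with which $\REL{\TRS{R}_2}{\TRS{S}}$ is compatible. As an SLI is a TMI of dimension one, Theorem~\ref{THM:tmi} shows that $(\relgt_\ALG{M},\relge_\ALG{M})$ is a complexity pair and that $\cp{n}{\relgt_\ALG{M}}{\FVTERMS} = \OO(n)$. Thus $\REL{\TRS{R}_2}{\TRS{S}}$ is compatible with a complexity pair whose $\relgt$-complexity is linear, and the only hypothesis of Theorem~\ref{THM:dgp} still to be checked is the existence of a constant $\Delta$ with $u \to_{\TRS{R}_1} v \Rightarrow \dl{u}{\relgt_\ALG{M}} + \Delta \geqslant \dl{v}{\relgt_\ALG{M}}$. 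Granting this, Theorem~\ref{THM:dgp} gives $\dc{n}{\REL{(\TRS{R}_1\cup\TRS{R}_2)}{\TRS{S}}} = \OO(\dc{n}{\REL{\TRS{R}_1}{(\TRS{R}_2\cup\TRS{S})}} + \cp{n}{\relgt_\ALG{M}}{\FVTERMS}) = \OO(\dc{n}{\REL{\TRS{R}_1}{(\TRS{R}_2\cup\TRS{S})}} + n)$, which is the claim.

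So everything hinges on constructing $\Delta$, and this is where non-duplication of $\TRS{R}_1$ enters. I would first record the rigid shape of an SLI: since the single coefficient of each matrix must be both $\leqslant 1$ (upper triangularity) and $\geqslant 1$ (monotonicity), every $f$ of arity $m$ is interpreted by $f_\ALG{M}(x_1,\dots,x_m) = x_1 + \dots + x_m + c_f$ with $c_f \in \Nat$. Consequently $[\alpha]_\ALG{M}(t) = d(t) + \sum_x |t|_x\,\alpha(x)$, where $d(t)$ is the total weight $\sum c_f$ of the function symbols in $t$, and for terms $s,t$ one has $s \relgt_\ALG{M} t$ exactly when $d(s) > d(t)$ and $|s|_x \geqslant |t|_x$ for every variable~$x$. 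Now consider a step $u = C[l\sigma] \to_{\TRS{R}_1} C[r\sigma] = v$. The context cancels, and non-duplication ($|r|_z \leqslant |l|_z$ for the rule variables~$z$) forces two things simultaneously: evaluating at the zero assignment yields $d(v) \leqslant d(u) + \Delta$ with $\Delta = \max\,\{\max(0,d(r)-d(l)) \mid l \to r \in \TRS{R}_1\}$ (a finite constant), while comparing variable occurrences yields $|v|_x \leqslant |u|_x$ for every~$x$.

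The remaining step, and the main obstacle, is to convert this bound on values into the desired bound on derivation heights of the \emph{order} $\relgt_\ALG{M}$; note that $\dl{t}{\relgt_\ALG{M}}$ cannot be replaced by $d(t)$, since a term of minimal weight may already be a $\relgt_\ALG{M}$-normal form even when its weight is large. I would argue by a prepending trick. Take any chain $v = v_0 \relgt_\ALG{M} v_1 \relgt_\ALG{M} \cdots \relgt_\ALG{M} v_k$. Along it the weights strictly decrease while the variable counts only shrink, so $|v_i|_x \leqslant |v|_x \leqslant |u|_x$ for all $i$ and~$x$. Let $j$ be minimal with $d(v_j) < d(u)$; then $u \relgt_\ALG{M} v_j$ holds, because $u$ dominates $v_j$ both in weight and---thanks to non-duplication---in every variable count. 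The discarded prefix $v_0,\dots,v_{j-1}$ carries $j$ distinct weights in the interval $[d(u),d(v)]$ of length at most~$\Delta$, so $j \leqslant \Delta + 1$, and $u \relgt_\ALG{M} v_j \relgt_\ALG{M} \cdots \relgt_\ALG{M} v_k$ is a chain from~$u$ of length $1 + (k-j) \geqslant k - \Delta$. Hence $\dl{u}{\relgt_\ALG{M}} \geqslant k - \Delta$ for every such chain, i.e.\ $\dl{v}{\relgt_\ALG{M}} \leqslant \dl{u}{\relgt_\ALG{M}} + \Delta$, which verifies the hypothesis of Theorem~\ref{THM:dgp} and completes the proof. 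The crucial point is that non-duplication supplies not only the weight bound but also the variable-count domination that makes the prepended first step legal.
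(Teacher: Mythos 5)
Your proposal is correct, and its skeleton coincides with the paper's proof: Corollary~\ref{COR:sli} is obtained by instantiating the Complexity Gap Principle (Theorem~\ref{THM:dgp}) with the complexity pair $(\relgt_\ALG{M},\relge_\ALG{M})$ supplied by Theorem~\ref{THM:tmi} for the given SLI~$\ALG{M}$. The difference lies in how much is actually proved. The paper's proof of the corollary is a one-line reference to those two theorems; the remaining hypothesis of Theorem~\ref{THM:dgp} --- a constant $\Delta$ with $\dl{u}{\relgt_\ALG{M}} + \Delta \geqslant \dl{v}{\relgt_\ALG{M}}$ whenever $u \to_{\TRS{R}_1} v$ --- is never verified there, but delegated to the remark preceding the corollary, which attributes exactly this fact about SLIs and non-duplicating rules to~\cite{HM08}; within the paper itself it only follows from the later, more general Theorem~\ref{THM:cgp_tmi} on constant-growth matrix interpretations. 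Your proof, by contrast, is self-contained on this point, and your verification genuinely differs from the paper's constant-growth argument: where the proof of Theorem~\ref{THM:cgp_tmi} passes from a value inequality to a height inequality by asserting that the derivation height is ``determined by'' the first component of the interpretation, you explicitly isolate the delicate point (interpretation values only bound derivation heights from above, so comparing values does not immediately compare heights) and resolve it with the chain-prepending argument, whose legality rests on the variable-count domination that non-duplication provides. This buys a rigorous, citation-free proof of the corollary; the paper's route buys brevity and, via Theorem~\ref{THM:cgp_tmi}, greater generality (arbitrary constant-growth matrix interpretations rather than only SLIs).

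One small patch is needed: your prepending argument presumes the existence of an index $j$ with $d(v_j) < d(u)$. If no such $j$ exists, then all $k+1$ terms of the chain carry pairwise distinct weights lying in the interval $[d(u), d(v)]$, which contains at most $\Delta+1$ integers, so $k \leqslant \Delta \leqslant \dl{u}{\relgt_\ALG{M}} + \Delta$ by the very counting you already use; the conclusion is unaffected, but the case should be stated.
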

\begin{proof}
Follows from Theorems~\ref{THM:dgp} and~\ref{THM:tmi} using the complexity pair
$(\relgt_\ALG{M},\relge_\ALG{M})$ induced by the SLI~$\ALG{M}$.
\end{proof}

An immediate consequence of the above corollary is that for any relative
TRS~$\REL{\TRS{R}}{\TRS{S}}$ we can shift rewrite rules in~\TRS{R} that
are strictly oriented by an SLI~\ALG{M} into the~\TRS{S}-component,
provided that~\TRS{R} is non-duplicating and all rules in~\TRS{S} behave
nicely with respect to $\relge_\ALG{M}$. Note that the above corollary
does not require that all rules from~\TRS{R} are (strictly) oriented.
This causes some kind of non-determinism which is demonstrated in the
next example.

\begin{exa}
\label{EX:non-determinism}
Consider the TRS (\tpdb{Bouchare\_06/12})\footnote{\ Labels in
$\mathsf{sans\text{-}serif}$ font refer to TRSs from the TPDB~7.0.2,
see \url{http://termination-portal.org}.} consisting of the rules:
\begin{xalignat*}{3}
1\colon\m{b}(\m{b}(x) &\to \m{a}(\m{a}(\m{a}(x)))&
2\colon\m{b}(\m{a}(\m{b}(x))) &\to \m{a}(x)&
3\colon\m{b}(\m{a}(\m{a}(x))) &\to \m{b}(\m{a}(\m{b}(x)))
\end{xalignat*}
The SLI~$\ALG{M}$ with
$\m{a}_\ALG{M}(x) = x + 2$ and
$\m{b}_\ALG{M}(x) = x + 1$ transforms the TRS into
$\REL{\{1\}}{\{2,3\}}$ which is compatible with the
AMI~\ALG{A} (where all matrix coefficients are smaller than two)
\begin{xalignat*}{2}
\m{a}_\ALG{A}(\vec x) &= \BM
0 \NE 0 \NE 0 \NR
\minfty \NE \minfty \NE 0 \NR
\minfty \NE 0 \NE \minfty \NR
\EM \vec x
&
\m{b}_\ALG{A}(\vec x) &= \BM
0 \NE 1 \NE 0 \NR
0 \NE 1 \NE 0 \NR
\minfty \NE 0 \NE \minfty \NR
\EM \vec x
\end{xalignat*}
showing linear derivational complexity of this TRS. If a different SLI is
used in the first step, e.g., the one that counts just $\m{b}$'s then the
intermediate problem $\REL{\{3\}}{\{1,2\}}$ remains to be solved. For
this problem there exists no AMI of dimension three where all
entries are less than 2 (but there exists one where all entries are less
than 3). For an implementation this means that depending on
the rules the SLI orients, later techniques may succeed or fail.
\end{exa}

Next we remark on another subtlety of~Theorem~\ref{THM:dgp}.
Assume that $\REL{\TRS{R}_1}{(\TRS{R}_2\cup\TRS{S})}$ is compatible
with a complexity pair $(\relgt,\relge)$. Then
$\REL{(\TRS{R}_1\cup\TRS{R}_2)}{\TRS{S}}$
is transformed into the problem
$\REL{\TRS{R}_2}{(\TRS{R}_1\cup\TRS{S})}$ and this proof step estimates
the complexity of $\REL{\TRS{R}_1}{(\TRS{R}_2\cup\TRS{S})}$.
If the complexity gap principle is used the situation changes.
Since it does not require (weak) compatibility with $\TRS{R}_1$, it does not
make a statement about the complexity of
$\REL{\TRS{R}_1}{(\TRS{R}_2\cup\TRS{S})}$. Instead it states that
the complexity of $\REL{(\TRS{R}_1\cup\TRS{R}_2)}{\TRS{S}}$ is
dominated by the complexity of $\REL{\TRS{R}_1}{(\TRS{R}_2\cup\TRS{S})}$
or the complexity of $\REL{\TRS{R}_2}{\TRS{S}}$.
This behavior is illustrated in the next example.

\begin{exa}
\label{EX:wgp}
Consider the relative TRS
$\TRS{R}$ consisting of the two rules
\begin{xalignat*}{2}
1: \m{c}(x) &\to \m{a}(x) &
2: \m{a}(\m{b}(x)) &\to \m{b}(\m{b}(\m{c}(x)))
\end{xalignat*}
We observe that the derivational complexity of the TRS~$\TRS{R}$ is at
least exponential because
\begin{align*}
\m{a}^n(\m{b}(x)) \to^2
\m{a}^{n-1}(\m{b}(\m{b}(\m{a}(x)))) \to^4
\m{a}^{n-2}(\m{b}(\m{b}(\m{b}(\m{b}(\m{a}(x)))))) \to^8
\cdots \to^{2^n}
\m{b}^{2^{n}}(\m{a}(x))
\end{align*}
Obviously both rules are applied exponentially often in this sequence.
Nevertheless by an SLI that counts $\m{c}$'s Corollary~\ref{COR:sli}
can be applied to~$\TRS{R}$ to obtain the relative TRS~$\REL{\{2\}}{\{1\}}$.
As remarked earlier this step does not yield an upper
bound on the complexity of the TRS~$\REL{\{1\}}{\{2\}}$ but only
on the TRS~$\{1\}$.
\end{exa}

Next we give counterexamples that TMIs, AMIs, and
match-bounds cannot be used to implement Theorem~\ref{THM:dgp}.
A suitable but severe restriction of TMIs is considered in
Theorem~\ref{THM:cgp_tmi}.

\begin{exas}
Recall the two TRSs $\TRS{R}_1$ and $\TRS{R}_2$ from
Example~\ref{EX:counterexample}. Here $\TRS{S} = \varnothing$.
Since $\dc{n}{\REL{\TRS{R}_1}{\TRS{R}_2}} = \OO(n)$ and
$\dc{n}{\TRS{R}_1\cup\TRS{R}_2} = \Omega(2^n)$ any method
that establishes $\dc{n}{\TRS{R}_2} = \OO(n^k)$ for some
$k \in \Nat$ cannot be used to implement the complexity gap principle.
\item
Since the TMI~$\ALG{M}$ with
\begin{xalignat*}{4}
\m{a}_\ALG{M}(\vec x) &= \vec x + \BM 0\NR1 \EM &
\m{b}_\ALG{M}(\vec x) &= \BM 1\NE0\NR0\NE0 \EM \vec x + \BM 1\NR0 \EM &
\m{R}_\ALG{M}(\vec x) &= \BM 1\NE3\NR0\NE0 \EM \vec x + \BM 2\NR0 \EM &
\m{L}_\ALG{M}(\vec x) &= \BM 1\NE0\NR0\NE0 \EM \vec x
\end{xalignat*}
orients all rules in $\TRS{R}_2$ strictly---and hence gives a quadratic
upper bound on $\dc{n}{\TRS{R}_2}$---in general TMIs cannot adhere to
Theorem~\ref{THM:dgp}. The problem for the interpretation above is
that although there exists
a~$\Delta$ with $\dl{l}{\relgt} + \Delta \geqslant \dl{r}{\relgt}$
for all $l \to r \in \TRS{R}_1$ this property is not closed under
substitutions. (The situation is different, however, if the matrix
interpretation has constant growth, see Theorem~\ref{THM:cgp_tmi} below.)
\item
Similarly, the AMI~$\ALG{A}$ (inducing at most linear derivational
complexity of~$\TRS{R}_2$) with
\begin{xalignat*}{4}
\m{a}_\ALG{A}(\vec x) &= \BM 0\NE\minfty\NR3\NE3 \EM \vec x &
\m{b}_\ALG{A}(\vec x) &= \BM 1\NE2\NR\minfty\NE0 \EM \vec x &
\m{R}_\ALG{A}(\vec x) &= \BM 1\NE3\NR0\NE2 \EM \vec x &
\m{L}_\ALG{A}(\vec x) &= \BM 0\NE\minfty\NR\minfty\NE\minfty \EM \vec x
\end{xalignat*}
violates the same requirement in Theorem~\ref{THM:dgp} as the TMI
$\ALG{M}$ above.
\item
A similar reasoning also holds for match-bounds; one easily verifies that
match-bounds apply to the TRS~$\TRS{R}_2$ and hence this system admits
linear derivational complexity. The problem in this setting is that a valid
termination proof of $\TRS{R}_2$ using match-bounds does not necessarily
yield a rewrite relation $\relgt$ such that
$\dl{u}{\relgt} + \Delta \geqslant \dl{v}{\relgt}$
whenever $u \to_{\TRS{R}_1} v$, as required by Theorem~\ref{THM:dgp}.
\end{exas}

Finally we present a criterion that allows to implement Theorem~\ref{THM:dgp}
based on TMIs. To this end we introduce the following concepts. 
A matrix interpretation~\ALG{M} has \emph{constant growth} if there is a
matrix $A$ such that for any $p \in \Nat$ and matrices $M_1, \ldots, M_p$
in~\ALG{M} we have $M_1 \cdot \ldots \cdot M_p \leqslant A$. Here
$\leqslant$ is the pointwise extension of $\leqslant_\Nat$ to matrices.
Because of the shape of matrix interpretations for terms $s$ and $t$
there exist $k \in \Nat$, matrices $S_1, \ldots, S_k, T_1, \ldots, T_k$, 
and vectors $\vec s$, $\vec t$ such that
$[\alpha]_\ALG{M}(s) = S_1\alpha(x_1) + \cdots + S_k\alpha(x_k) + \vec s$
and
$[\alpha]_\ALG{M}(t) = T_1\alpha(x_1) + \cdots + T_k\alpha(x_k) + \vec t$.
In such a case we denote the non-constant part of the interpretation of $s$
by
$[\alpha]^\ncp_\ALG{M}(s) = S_1\alpha(x_1) + \cdots + S_k\alpha(x_k)$;
similarly for $t$.
We write $s \relge_\ALG{M}^\ncp t$ if 
$[\alpha]^\ncp_\ALG{M}(s) \relge [\alpha]^\ncp_\ALG{M}(t)$ holds for all
assignments $\alpha$. Note that this condition can effectively be tested
by requiring $S_i \geqslant T_i$ ($1\leqslant i\leqslant k$). 

\begin{thm}
\label{THM:cgp_tmi}
Let $\REL{(\TRS{R}_1 \cup \TRS{R}_2)}{\TRS{S}}$ be a relative TRS,
$L$ a set of terminating terms,
$\ALG{M}$ a matrix interpretation with constant growth,
$\TRS{R}_1 \subseteq {\relge^\ncp_\ALG{M}}$, and $\REL{\TRS{R}_2}{\TRS{S}}$ be
compatible with~\ALG{M}. Then
$\cp{n}{\to_\REL{(\TRS{R}_1 \cup \TRS{R}_2)}{\TRS{S}}}{L} =
\OO(\cp{n}{\to_\REL{\TRS{R}_1}{(\TRS{R}_2 \cup \TRS{S})}}{L} + n)$.
\end{thm}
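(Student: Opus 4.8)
Theorem~\ref{THM:cgp_tmi} is a concrete instance of the Complexity Gap Principle (Theorem~\ref{THM:dgp}), so my plan is to verify the two hypotheses of that theorem for the complexity pair $(\relgt_\ALG{M},\relge_\ALG{M})$ induced by the given matrix interpretation $\ALG{M}$. The conclusion then follows verbatim from Theorem~\ref{THM:dgp}, with $\cp{n}{\relgt_\ALG{M}}{L}$ replaced by $n$. One hypothesis is free: $\REL{\TRS{R}_2}{\TRS{S}}$ is compatible with $\ALG{M}$ by assumption, which is exactly compatibility of $\REL{\TRS{R}_2}{\TRS{S}}$ with $(\relgt_\ALG{M},\relge_\ALG{M})$. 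The real content is the remaining hypothesis: I must exhibit a constant $\Delta$ such that $u \to_{\TRS{R}_1} v$ implies $\dl{u}{\relgt_\ALG{M}} + \Delta \geqslant \dl{v}{\relgt_\ALG{M}}$.

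The approach to the $\Delta$-condition is to work at the level of interpretations rather than derivation heights directly, exploiting that for a matrix interpretation the derivation height of a term is controlled by its interpretation. First I would fix a rule $l \to r \in \TRS{R}_1$ and a step $u \to_{\TRS{R}_1} v$, so $u = C[l\sigma]$ and $v = C[r\sigma]$ for some context $C$ and substitution $\sigma$. Because $\TRS{R}_1 \subseteq {\relge^\ncp_\ALG{M}}$, the non-constant parts of the interpretations of $l\sigma$ and $r\sigma$ satisfy $[\alpha]^\ncp_\ALG{M}(l\sigma) \relge [\alpha]^\ncp_\ALG{M}(r\sigma)$, so the interpretations differ only by a bounded constant vector; here the constant-growth assumption enters, since the coefficient matrices accumulated by the surrounding context $C$ and by the substitution $\sigma$ are uniformly bounded above by the matrix $A$. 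Consequently $[\alpha]_\ALG{M}(v)$ exceeds $[\alpha]_\ALG{M}(u)$ by at most a fixed vector whose entries depend only on $\ALG{M}$, $A$, and the (finitely many) rules of $\TRS{R}_1$, \emph{not} on $u$, $v$, $C$, or $\sigma$. This is the crux, and it is precisely what fails for the unrestricted TMI and AMI in the counterexamples above: there the difference between the interpretations of $r\sigma$ and $l\sigma$ could be amplified arbitrarily by the context, whereas constant growth caps that amplification.

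The second link is to convert the bounded interpretation gap into a bounded derivation-height gap. For a triangular matrix interpretation the derivation height of a term is bounded by (a polynomial in) the entries of its interpretation, and conversely a bounded increase in the interpretation translates into a bounded additive increase in the maximal number of $\relgt_\ALG{M}$-steps, yielding a uniform $\Delta$. I would make this precise using the monotonicity and the upper-triangular shape that forces the top-left entries to govern the linear growth, so that adding a constant vector to the interpretation raises the derivation height by at most a constant. The main obstacle I anticipate is exactly this step: making the quantifier order correct, i.e.\ showing that a single $\Delta$ works \emph{simultaneously} for all $u, v, C, \sigma$, which requires that the constant-growth bound $A$ decouples the increment from the size of the surrounding term. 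Once $\Delta$ is established, I invoke Theorem~\ref{THM:dgp} with $(\relgt,\relge) = (\relgt_\ALG{M},\relge_\ALG{M})$ and use Theorem~\ref{THM:tmi} to replace $\cp{n}{\relgt_\ALG{M}}{L}$ by $\OO(n)$—more care is needed here since the hypothesis is about a general matrix interpretation with constant growth rather than a fixed-dimension TMI, so I would argue that constant growth already forces linear complexity of $\relgt_\ALG{M}$, giving the stated $\OO(\cp{n}{\to_\REL{\TRS{R}_1}{(\TRS{R}_2 \cup \TRS{S})}}{L} + n)$ bound.
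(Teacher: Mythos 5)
Your proposal follows the paper's proof essentially step for step: reduce to the Complexity Gap Principle (Theorem~\ref{THM:dgp}) with the pair $(\relgt_\ALG{M},\relge_\ALG{M})$, obtain a uniform $\Delta$ from the fact that $\TRS{R}_1 \subseteq {\relge^\ncp_\ALG{M}}$ makes the interpretation gap of a rewrite step a fixed constant vector which the surrounding context can amplify only up to the constant-growth bound $A$ (the paper takes $\Delta = (A\delta)_{11}$ with $\delta$ the pointwise maximum of the constant parts of right-hand sides of $\TRS{R}_1$), and use constant growth directly for $\cp{n}{\relgt_\ALG{M}}{L} = \OO(n)$, exactly as you note at the end when you correctly back away from invoking Theorem~\ref{THM:tmi}. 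The only inessential deviations are your mid-proof appeal to upper-triangular shape, which the paper never uses (it needs only that every $\relgt_\ALG{M}$-step strictly decreases the first component of the interpretation), and your attributing part of the amplification bound to the substitution $\sigma$, whose effect is in fact absorbed by the coefficient domination $L_i \geqslant R_i$ rather than by $A$.
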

\begin{proof}
Throughout this proof we assume that $L = \FVTERMS$.
Since the matrix interpretation~\ALG{M} has constant growth we have
$\cp{n}{\relgt_\ALG{M}}{L} = \OO(n)$.
Since $\REL{\TRS{R}_2}{\TRS{S}}$ is compatible with the complexity pair
$(\relgt_\ALG{M},\relge_\ALG{M})$ 
using Theorem~\ref{THM:dgp} it remains to show that there is a constant
$\Delta$ such that $u \to_{\TRS{R}_1} v$ implies 
$\dh{u}{\relgt_\ALG{M}} + \Delta \geqslant \dh{v}{\relgt_\ALG{M}}$.
Since~\ALG{M} has constant growth there is a matrix $A$ such that
$A \geqslant M_1\cdot \ldots \cdot M_p$ for any $p\in\Nat$ where the
$M_i$'s are matrices occurring in~\ALG{M}. 
Let $\delta = \max\,\{\vec r \mid l \to r \in \TRS{R}_1\}$ (here $\vec r$
is the constant part of the interpretation of $r$ and $\max$
denotes the pointwise maximum of vectors). Note that $\delta$ is a
vector.

Let $\Delta = (A\delta)_{11}$. Because the derivation height of
a term~$t$ with respect to
$\relgt_\ALG{M}$ is determined by the first component of the vector 
$[\alpha]_\ALG{M}(t)$
we have
$\dh{u}{\relgt_\ALG{M}} + \Delta \geqslant \dh{v}{\relgt_\ALG{M}}$
whenever
$[\alpha]_\ALG{M}(u) + A\delta \geqslant [\alpha]_\ALG{M}(v)$.
To show the latter let 
$l\to r \in \TRS{R}_1$,
$u = C[l\sigma]$,
$v = C[r\sigma]$,
$[\alpha]_\ALG{M}(l) = L_1\alpha(x_1) + \cdots + L_k\alpha(x_k) + \vec l$,
and
$[\alpha]_\ALG{M}(r) = R_1\alpha(x_1) + \cdots + R_k\alpha(x_k) + \vec r$.

By definition of $\delta$ we have $\vec l + \delta \geqslant \vec r$.
Since $\TRS{R}_1 \subseteq {\relge^\ncp_\ALG{M}}$ we have 
$L_i \geqslant R_i$ for all $1\leqslant i\leqslant k$ and hence
$L_1\alpha(x_1) + \cdots + L_k\alpha(x_k) + \vec l + \delta \geqslant 
 R_1\alpha(x_1) + \cdots + R_k\alpha(x_k) + \vec r$
for any $\alpha$ and furthermore 
$L_1\alpha(x_1\sigma) + \cdots + L_k\alpha(x_k\sigma) + \vec l + \delta 
 \geqslant 
 R_1\alpha(x_1\sigma) + \cdots + R_k\alpha(x_k\sigma) + \vec r$ for any
$\sigma$. The latter implies
$D(L_1\alpha(x_1\sigma) + \cdots + L_k\alpha(x_k\sigma) + \vec l + \delta) 
\geqslant 
D(R_1\alpha(x_1\sigma) + \cdots + R_k\alpha(x_k\sigma) + \vec r)$ for any
non-negative matrix $D$ and especially
$D L_1\alpha(x_1\sigma) + \cdots + D L_k\alpha(x_k\sigma) + D \vec l + A \delta 
\geqslant
D R_1\alpha(x_1\sigma) + \cdots + D R_k\alpha(x_k\sigma) + D\vec r$ 
if $A\geqslant D$ (which is no restriction since~\ALG{M} has constant
growth and any $D$ that can occur is a matrix product of the shape
$M_1\cdot \ldots \cdot M_p \leqslant A$ for some $p\in \Nat$).
The proof concludes by the observation that the above inequation implies 
$[\alpha]_\ALG{M}(C[l\sigma]) + A\delta \geqslant [\alpha]_\ALG{M}(C[r\sigma])$
for any context $C$.
\end{proof}

We conclude this section with a discussion of the above theorem.
Due to~\cite[Theorem~9]{NZM10} TMIs where each matrix
$M$ satisfies $M_{(i,i)} < 1$ for any $i \geqslant 2$ have constant growth.
Since SLIs trivially adhere to this restriction Theorem~\ref{THM:cgp_tmi} 
subsumes Corollary~\ref{COR:sli}. The next example shows that this inclusion
is strict.

\begin{exa}
Let
$\TRS{R}_1 = \{\m{a}(x) \to \m{c}(x)\}$,
$\TRS{R}_2 = \{\m{a}(\m{b}(\m{a}(x))) \to \m{a}(\m{b}(\m{b}(\m{a}(x))))\}$, and
$\TRS{S} = \varnothing$. 
Then the TMI~\ALG{M} with
\begin{xalignat*}{2}
\m{a}_\ALG{M}(\vec x) &= \BM 
1\NE 1\NE 0\NR
0\NE 0\NE 0\NR
0\NE 0\NE 0\NR
\EM + \BM
0\NR
0\NR
1\NR
\EM \vec x &
\m{b}_\ALG{M}(\vec x) &= \BM 
1\NE 0\NE 0\NR
0\NE 0\NE 1\NR
0\NE 0\NE 0\NR
\EM + \vec x 
\end{xalignat*}
where $\m{c}_\ALG{M}(\vec x) = \m{a}_\ALG{M}(\vec x)$ has constant
growth and transforms $\REL{(\TRS{R}_1\cup\TRS{R}_2)}{\TRS{S}}$ into 
$\REL{\TRS{R}_1}{(\TRS{R}_2\cup\TRS{S})}$
according to Theorem~\ref{THM:cgp_tmi}. However, there exists no SLI 
that orients the rule in~$\TRS{R}_2$ strictly which shows that
Corollary~\ref{COR:sli} cannot achieve this step.
\end{exa}

\section{Relative Match-Bounds}
\label{BOUNDS:main}

In this section we illustrate how the match-bound technique can be used to
prove relative termination and estimate complexity bounds for relative 
rewriting.
To maximize the power of the method we combine the ideas
in~\cite{W07} with the ones in~\cite{ZK10}. Preliminaries for match-bounds
are introduced in Section~\ref{BOUNDS:pre}. Section~\ref{BOUNDS:match} shows
how the technique works for linear systems before
Section~\ref{BOUNDS:raise} extends applicability to non-left-linear
systems. Automation is addressed in Section~\ref{BOUNDS:auto}.
Throughout this section we consider
$L \subseteq \FTERMS$ which does not
affect the results by assuming that the signature~\SIG{F} always contains
a constant.

\subsection{Preliminaries}
\label{BOUNDS:pre}

Let $\SIG{F}$ be a signature, $\TRS{R}$ a TRS over $\SIG{F}$, and
$L \subseteq \FTERMS$ a set of ground terms. The set
$\{t \in \FTERMS \mid \text{$s \to_\TRS{R}^* t$ for some $s \in L$}\}$
of reducts of~$L$ is denoted by $\Succ{\TRS{R}}{L}$. Given a set
$N \subseteq \NAT$ of natural numbers, the signature $\SIG{F} \times N$ is
denoted by $\SIG{F}_N$. Here function symbols $(f,c)$ with $f \in \SIG{F}$
and $c \in N$ have the same arity as~$f$ and are written as $f_c$.
The mappings $\LIFT_c\colon \SIG{F} \to \SIG{F}_\NAT$,
$\BASE\colon \SIG{F}_\NAT \to \SIG{F}$, and
$\HEIGHT\colon \SIG{F}_\NAT \to \NAT$ are defined as $\lift{f}{c} = f_c$,
$\base{f_c} = f$, and $\height{f_c} = c$ for all $f \in \SIG{F}$ and
$c \in \NAT$. They are extended to terms, sets of terms, and TRSs
in the obvious way.
The TRS $\Raise{\SIG{F}}$ over the signature $\SIG{F}_\NAT$ consists of
all rules $f_c(x_1,\ldots,x_n) \to f_{c+1}(x_1,\ldots,x_n)$ with~$f$ an
$n$-ary function symbol in $\SIG{F}$, $c \in \NAT$, and $x_1,\ldots,x_n$
pairwise distinct variables. The restriction of $\Raise{\SIG{F}}$
to the signature $\SIG{F}_{\{0,\dots,c\}}$ is denoted by $\RAISE_c(\SIG{F})$.
For terms $s,t \in \TERMS{\SIG{F}_\NAT}{\VAR{V}}$ we write
$s \uparrow t$ for the least term~$u$ with
$s \to_{\Raise{\SIG{F}}}^* u$ and $t \to_{\Raise{\SIG{F}}}^* u$.
Here least refers to the (sum of the) lengths of the joining sequences.
We extend this notion to ${\uparrow} S$ for finite non-empty sets
$S \subseteq \TERMS{\SIG{F}_\NAT}{\VAR{V}}$ in the obvious way.
Note that ${\uparrow} S$ is undefined whenever~$S$ contains
two terms~$s$ and~$t$ such that $\base{s} \neq \base{t}$.
The TRS $\match{\TRS{R}}$ over the signature $\SIG{F}_\NAT$ consists of
all rewrite rules $l' \to \lift{r}{c}$ for which there exists a rule
$l \to r \in \TRS{R}$ such that $\base{l'} = l$ and
$c = 1 + \min \{\height{l'(p)} \mid p \in \FPos(l)\}$. Here $c \in \NAT$.
The restriction of $\match{\TRS{R}}$ to the signature $\SIG{F}_{\{0,\dots,c\}}$
is denoted by $\MATCH_c(\TRS{R})$.
To be able to apply the match-bound technique to non-left-linear TRSs we
define the relation $\rto_{\match{\TRS{R}}}$ on $\TERMS{\SIG{F}_\NAT}{\VAR{V}}$
as follows: $s \rto_{\match{\TRS{R}}} t$ if and only if there exist a rewrite
rule $l \to r \in \match{\TRS{R}}$, a position $p \in \Pos(s)$, a context~$C$,
and terms $s_1,\ldots,s_n$ such that $l = C[x_1,\ldots,x_n]$ with all variables
displayed, $s|_p = C[s_1,\ldots,s_n]$, $\base{s_i} = \base{s_j}$ whenever
$x_i = x_j$ for all $i,j \in \{1,\dots,n\}$, and $t = s[r\sigma]_p$. Here
the substitution $\sigma$ is defined as follows:
\[
\sigma(x) =
\begin{cases}
{\uparrow} \{s_i \mid \text{$x_i = x$ with $i \in \{1,\dots,n\}$}\} &
\text{if $x \in \{x_1,\ldots,x_n\}$} \\
x & \text{otherwise}
\end{cases}
\]
Let~$L$ be a set of ground terms. A TRS~$\TRS{R}$ is called \emph{match-bounded}
for~$L$ if there exists a $c \in \NAT$ such that the maximum height of
function symbols occurring in terms in $\Succ{\match{\TRS{R}}}{\lift{L}{0}}$
is at most~$c$. Similarly, a TRS~$\TRS{R}$ is called
\emph{match-raise-bounded} for~$L$ if there exists a $c \in \NAT$ such
that the maximum height of function symbols occurring in terms belonging
to $\Succ[\rto]{\match{\TRS{R}}}{\lift{L}{0}}$ is at most~$c$. If we want
to make the bound~$c$ precise, we say that $\TRS{R}$ is match(-raise)-bounded
for~$L$ \emph{by~$c$}. If we do not specify the set of terms~$L$ then it
is assumed that $L = \FTERMS$. The main result underlying the match-bound
technique states that a TRS $\TRS{R}$ is terminating for a language~$L$
if~$\TRS{R}$ is linear and match-bounded for~$L$ or~$\TRS{R}$ is
non-duplicating and match-raise-bounded for~$L$.

In order to prove that a TRS $\TRS{R}$ is match(-raise)-bounded for
some language~$L$, the idea is to construct a (quasi-deterministic
and raise-consistent) tree automaton that is compatible with
$\match{\TRS{R}}$ and $\lift{L}{0}$. In the following we briefly
recall the most important definitions in this connection.
A \emph{tree automaton} $\TA{A} = (\SIG{F},Q,Q_f,\Delta)$ consists of a
signature~$\SIG{F}$, a finite set of states~$Q$, a set of final states
$Q_f \subseteq Q$, and a set of transitions $\Delta$ of the form
$f(q_1,\ldots,q_n) \to q$ or $p \to q$ where~$f$ is an $n$-ary function symbol
in $\SIG{F}$ and $p,q,q_1,\ldots,q_n \in Q$. The language $\Lg{\TA{A}}$ of
$\TA{A}$ is the set of ground terms $t \in \GTERMS{\SIG{F}}$ such
that $t \to_\Delta^* q$ for some $q \in Q_f$.
We say that $\TA{A}$ is \emph{compatible} with a TRS $\TRS{R}$ and
a language~$L$ if $L \subseteq \Lg{\TA{A}}$ and for each rewrite rule
$l \to r \in \TRS{R}$ and state substitution $\sigma\colon \Var(l) \to Q$
such that $l\sigma \to_\Delta^* q$ it holds that $r\sigma \to_\Delta^* q$.
For left-linear $\TRS{R}$ it is known that
$\Succ{\TRS{R}}{L} \subseteq \Lg{\TA{A}}$ whenever $\TA{A}$ is compatible
with $\TRS{R}$ and~$L$~\cite{G98}. To obtain a similar result for
non-left-linear TRSs, in~\cite{KM07} quasi-deterministic automata
are introduced. Let $\TA{A} = (\SIG{F},Q,Q_f,\Delta)$ be a tree automaton.
We say that a state~$p$ \emph{subsumes} a state~$q$ if~$p$ is final when
$q$ is final and for all transitions
$f(u_1,\dots,q,\dots,u_n) \to u \in \Delta$, the transition
$f(u_1,\dots,p,\dots,u_n) \to u$ belongs to $\Delta$. For a
left-hand side $l \in \lhs{\Delta}$ of a transition, the set
$\{q \mid l \to q \in \Delta\}$ of possible right-hand sides is denoted
by $Q(l)$. The automaton $\TA{A}$ is said to be \emph{quasi-deterministic}
if for every $l \in \lhs{\Delta}$ there exists a state $p \in Q(l)$ which
subsumes every other state in $Q(l)$. In general, $Q(l)$ may contain more
than one state that satisfies the above property. In the following we
assume that there is a unique designated state in $Q(l)$, which we denote by
$p_l$. The set of all designated states is denoted by $Q_d$ and the
restriction of $\Delta$ to transitions $l \to q$ that satisfy $q = p_l$
is denoted by $\Delta_d$. In~\cite{KM07} it is
shown that the tree automaton
induced by $\Delta_d$ is deterministic and accepts the same language as
$\TA{A}$. For non-left-linear TRSs $\TRS{R}$ we modify the above definition
of compatibility by demanding that the tree automaton $\TA{A}$ is
quasi-deterministic and for each rewrite rule $l \to r \in \TRS{R}$
and state substitution $\sigma\colon \Var(l) \to Q_d$ with
$l\sigma \to_{\Delta_d}^* q$ it holds that $r\sigma \to_{\Delta}^* q$.
To ensure that quasi-deterministic and compatible tree automata can be
used to prove match-raise-boundedness of a TRS $\TRS{R}$ it must be
guaranteed that the obtained tree automata are closed under the implicit
raise-steps caused by the relation $\rto_{\match{\TRS{R}}}$. To this end
we additionally require that the resulting tree automata fulfill the
property defined below. Let $\TA{A} = (\SIG{F}_N,Q,Q_f,\Delta)$ be a
tree automaton with~$N$ a finite subset of $\NAT$. We say that $\TA{A}$
is \emph{raise-consistent} if for every transition
$f_c(q_1,\ldots,q_n) \to q \in \Delta$ and left-hand side
$f_d(q_1,\ldots,q_n) \in \lhs{\Delta}$ with $c <_\Nat d$, the
transition $f_d(q_1,\ldots,q_n) \to q$ belongs to $\Delta$.

By a remark in~\cite{GHWZ07} we know that the derivation height of a
term in~$L$ is at most linear in the size of the term whenever $\TRS{R}$
is match-bounded for~$L$.
It is easy to extend this result to match-raise-boundedness and hence to
non-duplicating TRSs. To this end we need the following notions.
Let $\Mul{\NAT}$ denote the set of all finite multisets over $\NAT$. For
any $M \in \Mul{\NAT}$ we write $M(n)$ to denote how often the number
$n \in \NAT$ occurs in~$M$. Let $M,N \in \Mul{\NAT}$ be two multisets.
We write $M \cup N$ for the multiset sum of~$M$ and~$N$ where
$(M \cup N)(n) = M(n) + N(n)$ for all $n \in \NAT$ and $M \subseteq N$
for the multiset inclusion, i.e., $M(n) \leqslant N(n)$ for all
$n \in \NAT$. The multiset difference $M \setminus N$ is defined as
$(M \setminus N)(n) = M(n) - N(n)$ if $M(n) > N(n)$ and
$(M \setminus N)(n) = 0$ otherwise, for all $n \in \NAT$. We write
$M \mgt N$ if there are multisets~$X$ and $Y$ such that
$N = (M \setminus X) \cup Y$, $X \neq \varnothing$, and
for all $m \in Y$ there is an $n \in X$ such that $n <_\Nat m$.
We write $M \mge N$ if $M \mgt N$ or $M = N$. Let $\SIG{F}$ be some
signature. We extend the orderings $\mgt$ and $\mge$ to terms over
the signature $\SIG{F}_\NAT$ as follows: we have $s \mgt t$
if $\MFun{s} \mgt \MFun{t}$ and $s \mge t$ if $\MFun{s} \mge \MFun{t}$
for terms $s,t \in \TERMS{\SIG{F}_\NAT}{\VAR{V}}$. Here
$\MFun{t}  = \{\height{t(p)} \mid p \in \FPos(t)\}$ denotes
the multiset of the heights of function symbols occurring in
the term~$t$.

\begin{thm}
\label{THM:match(-raise)-bounds => linear complexity}
Let $\TRS{R}$ be a TRS and~$L$ be a language. If $\TRS{R}$ is
linear and
match-bounded or non-duplicating and match-raise-bounded for~$L$ then
$\cp{n}{\to_\TRS{R}}{L} = \OO(n)$.
\end{thm}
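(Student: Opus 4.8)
The plan is to reduce the statement to a \emph{length} bound on derivations of the lifted system and to establish that bound through a weight function that exploits the height bound~$c$. First I would invoke the boundedness hypothesis: match-boundedness (respectively match-raise-boundedness) for~$L$ provides a constant $c \in \NAT$ such that every term in $\Succ{\match{\TRS{R}}}{\lift{L}{0}}$ (respectively $\Succ[\rto]{\match{\TRS{R}}}{\lift{L}{0}}$) has all function symbol heights bounded by~$c$. Next I would use the simulation property of the match construction: for the linear case a $\to_\TRS{R}$-step on $\base{s}$ lifts to a $\to_{\match{\TRS{R}}}$-step on~$s$, whereas for the non-duplicating case it lifts to a $\rto_{\match{\TRS{R}}}$-step (left-linearity is precisely what makes plain $\match{\TRS{R}}$ simulate $\TRS{R}$, while the $\uparrow$-based relation $\rto_{\match{\TRS{R}}}$ removes this need at the cost of implicit raise steps). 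Consequently, for any $t \in L$ a $\to_\TRS{R}$-derivation of length~$m$ from~$t$ lifts to a derivation of length~$m$ from $\lift{t}{0}$, giving $\dl{t}{\to_\TRS{R}} \leqslant \dl{\lift{t}{0}}{\to_{\match{\TRS{R}}}}$ (resp.\ with $\rto_{\match{\TRS{R}}}$). It thus suffices to bound the length of such lifted derivations linearly in~$|t|$.

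For the length bound I would introduce a weight function. Let $K \geqslant 1$ be an integer bounding $\|r\|$ for every $l \to r \in \TRS{R}$, set the base $b = K+1 \geqslant 2$, and define for every term $s$ over $\SIG{F}_\NAT$ with heights bounded by~$c$
\[
w(s) = \sum_{p \in \FPos(s)} b^{\,c - \height{s(p)}},
\]
which is a positive integer since every exponent lies in $\{0,\dots,c\}$. The claim is that a single match (or match-raise) step $s \to u$ strictly decreases~$w$. In such a step the function symbols matched by the left-hand side are removed and replaced by the (at most~$K$) right-hand side symbols, all placed at the common height $c' = 1 + h$, where $h$ is the minimal height among the matched symbols; the surrounding context is untouched, and because $\TRS{R}$ is linear resp.\ non-duplicating the substitution parts of~$u$ form a sub-multiset of those of~$s$ and hence cannot raise the weight. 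Boundedness forces $c' \leqslant c$, so $c - h \geqslant 1$; the single lowest matched symbol alone contributes $b^{\,c-h} = b \cdot b^{\,c-h-1}$, whereas all created symbols together contribute at most $K \cdot b^{\,c-h-1} < b^{\,c-h}$. Thus the removed weight strictly exceeds the created weight, $w(s) > w(u)$, and being integer-valued $w$ drops by at least~$1$ per step. Any implicit raise steps inside a $\rto_{\match{\TRS{R}}}$-step only increase heights and therefore weakly decrease~$w$, so the strict decrease of the whole step is preserved. This is exactly the quantitative refinement of the multiset decrease $s \mgt u$ captured by the orderings introduced above.

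Finally, since $\lift{t}{0}$ has all heights~$0$, its weight is $\|t\| \cdot b^{\,c} \leqslant |t| \cdot b^{\,c}$, so the number of steps in any lifted derivation is at most $|t| \cdot b^{\,c}$. With $b$ and $c$ constants this yields $\dl{t}{\to_\TRS{R}} = \OO(|t|)$ uniformly in $t \in L$, and taking the supremum over $t \in L$ with $|t| \leqslant n$ gives $\cp{n}{\to_\TRS{R}}{L} = \OO(n)$. I expect the main obstacle to be the bookkeeping in the weight step: one must verify that the context and the variable-substitution parts genuinely cancel (using left-linearity in the match case and non-duplication in the match-raise case) and that the $\uparrow$-raising inside a $\rto_{\match{\TRS{R}}}$-step never increases~$w$. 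Choosing the base $b = K+1$ so that one lowest matched symbol dominates an entire right-hand side---thereby absorbing the possible growth in the number of function symbols---is the crux; once this is set up, the strict decrease, and hence the linear bound, is immediate.
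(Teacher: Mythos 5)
Your proposal is correct, and its skeleton coincides with the paper's: lift an $\TRS{R}$-derivation starting in $L$ to a derivation of the same length in the enriched system (plain $\to_{\match{\TRS{R}}}$ in the linear case, $\rto_{\match{\TRS{R}}}$ in the non-duplicating case), then use the height bound~$c$ to show that the lifted derivation has length linear in the size of the starting term. Where you diverge is in how that length bound is obtained. The paper argues abstractly: it invokes the fact that for non-duplicating $\TRS{R}$ one has ${\rto_{\match{\TRS{R}}}} \subseteq {\mgt}$ (citing the proof of a lemma of Korp and Middeldorp), and then appeals to a remark of Dershowitz and Manna to bound the length of any $\mgt$-chain over $\GTERMS{\SIG{F}_{\{0,\dots,c\}}}$ by $\|t\| \cdot (k+1)^c$, where $k$ bounds the number of function symbols in right-hand sides. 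You instead build the explicit potential function $w(s) = \sum_{p \in \FPos(s)} b^{\,c-\height{s(p)}}$ with base $b = K+1$ and verify directly that every match(-raise) step decreases $w$ by at least one: the lowest matched symbol contributes $b^{\,c-h}$, the at most $K$ created symbols at height $h+1$ contribute less than $b^{\,c-h}$, the substitution parts cannot gain weight by non-duplication, and the implicit $\uparrow$-raising only lowers $w$. This is, in effect, an inlined proof of the two cited facts---your $w$ is the standard witness behind the Dershowitz--Manna chain-length bound, and your step-decrease analysis is a quantitative version of ${\rto_{\match{\TRS{R}}}} \subseteq {\mgt}$---and it yields literally the same constant $\|t\|\cdot(K+1)^c$. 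What the paper's route buys is brevity and reuse: the multiset machinery ($\mgt$, $\mge$, and later $\mgt^c$, $\mge^c$) is needed anyway for the relative-rewriting theorems in the same section. What your route buys is self-containedness: a reader can check the linear bound without consulting the two external references, and the constant in the $\OO(n)$ estimate is exhibited explicitly rather than recovered from a cited remark.
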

\begin{proof}
Assume that $\TRS{R}$ is match-raise-bounded for~$L$ and hence terminating
on~$L$. (Note that for a linear TRS $\TRS{R}$, match-boundedness coincides
with match-raise-boundedness.) Let
\[
t
\to_\TRS{R} t_1
\to_\TRS{R} \cdots
\to_\TRS{R} t_{m-1}
\to_\TRS{R} t_m
\]
be an arbitrary (terminating) rewrite sequence with $t \in L$. Since
every $\to_\TRS{R}$ rewrite sequence can be lifted to a $\rto_{\match{\TRS{R}}}$
rewrite sequence~\cite[Lemma~12]{KM09} we obtain a derivation
\[
t'
\rto_{\match{\TRS{R}}} t_1'
\rto_{\match{\TRS{R}}} \cdots
\rto_{\match{\TRS{R}}} t_{m-1}'
\rto_{\match{\TRS{R}}} t_m'
\]
such that $t' = \lift{t}{0}$ and $\base{t_i'} = t_i$ for all
$i \in \{1,\dots,m\}$. From the proof of \cite[Lemma~8]{KM09} we know
that for any non-duplicating TRS $\TRS{R}$ we have
${\rto_{\match{\TRS{R}}}} \subseteq {\mgt}$. It follows that
$t_i' \mgt t_{i+1}'$ for all $i \in \{0,\dots,m-1\}$.
Here $t_0' = t'$. Since $\TRS{R}$ is match-raise-bounded for~$L$, all
terms in this latter sequence belong to $\GTERMS{\SIG{F}_{\{0,\dots,c\}}}$
for some $c \in \NAT$. Let~$k$ be the maximal number of function symbols
occurring in some right-hand side in $\TRS{R}$. Due to a remark
in~\cite{DM79} we know that the length of the $\mgt$ chain
from $t'$ to $t_m'$ is bounded by $\|t'\| \cdot (k+1)^c$. Since
$\|t'\| = \|t\|$ and the $\mgt$ chain starting at $t'$ is at least
as long as the lifted and hence original rewrite sequence, we conclude
that the length of the $\TRS{R}$-rewrite sequence starting at the
term~$t$ is bounded by $\|t\| \cdot (k+1)^c$.
\end{proof}

Based on Theorem~\ref{THM:match(-raise)-bounds => linear complexity}
it is easy to use the match-bound technique to estimate the complexity
of a relative TRS $\REL{\TRS{R}}{\TRS{S}}$;
just check for match(-raise)-boundedness of $\TRS{R} \cup \TRS{S}$.
This process either succeeds by proving that the combined TRS is
match(-raise)-bounded, or, when $\TRS{R} \cup \TRS{S}$ cannot be proved
to be match(-raise)-bounded, it fails. Since the construction of a
(quasi-deterministic, raise-consistent, and) compatible tree
automaton does not terminate for TRSs that are not match(-raise)-bounded,
the latter situation typically does not happen. This behavior causes
a serious problem since we cannot benefit from
relative rewriting, i.e., $\REL{\TRS{R}}{\TRS{S}}$ is
match(-raise)-bounded if and only if $\TRS{R} \cup \TRS{S}$ is.
In~\cite{W07} this problem has been addressed by specifying an
upper bound on the heights that can be introduced by rewrite rules in
$\match{\TRS{S}}$. So one tries to find a $c \in \NAT$ such that the
maximum height of function symbols occurring in reductions with the TRS
$\MATCH_{c+1}(\TRS{R}) \cup \MATCH_c(\TRS{S}) \cup \lift{\TRS{S}}{c}$
is at most~$c$. If such a bound can be established we know that
$\REL{\TRS{R}}{\TRS{S}}$ is terminating and in addition that it admits
at most linear complexity. In the following we extend this approach
to better suit relative rewriting. To this end we introduce a new
enrichment $\matchRT{\TRS{R}}{\TRS{S}}{c}$ where the rewrite rules in
$\matchRT{\TRS{R}}{\TRS{S}}{c}$ which originate from size-preserving
or size-decreasing rules in $\TRS{S}$ are labeled in such a way that
they do not increase the heights of the function symbols in a contracted
redex.

To simplify the presentation we first consider linear TRSs
only. The extension to non-duplicating TRSs is explained in
Section~\ref{BOUNDS:raise}.

\subsection{RT-Bounds for Left-Linear Relative TRSs}
\label{BOUNDS:match}

As proposed in~\cite{W07} we design the new enrichment
$\matchRT{\TRS{R}}{\TRS{S}}{c}$ such that rules originating
from~\TRS{S} may introduce function symbols with height at
most~$c$. In addition (as in~\cite{ZK10}) we try to keep the
heights of the function symbols in a contracted redex if a
size-preserving or size-decreasing rewrite rule in~$\TRS{S}$
(after dropping all heights) is applied.

\begin{defi}
\label{DEF:RT enrichment}
Let $\TRS{S}$ be a TRS over a signature $\SIG{F}$ and $c \in \NAT$. The
TRS $\MATCHRT^c(\TRS{S})$ over the signature $\SIG{F}_\NAT$ consists of
all rules $l' \to \lift{r}{d}$ such that $\base{l'} \to r \in \TRS{S}$
and
\[
d =
\begin{cases}
\min\,\{c,\height{l'(\epsilon)}\}
& \text{if $\|\base{l'}\| \geqslant \|r\|$ and }\\[-0.5ex]
& \text{$\lift{\base{l'}}{\height{l'(\epsilon)}} = l'$}
\\
\min\,\{c,1 + \height{l'(p)} \mid p \in \FPos(l')\}
& \text{otherwise}
\\
\end{cases}
\]
For a relative TRS $\REL{\TRS{R}}{\TRS{S}}$ we define
$\matchRT{\TRS{R}}{\TRS{S}}{c}$ as
$\REL{\match{\TRS{R}}}{\MATCHRT^c(\TRS{S}})$.
Let~$d~\in~\NAT$. The restriction of $\MATCHRT^c(\TRS{S})$ to the signature
$\SIG{F}_{\{0,\dots,d\}}$ is denoted by $\MATCHRT_d^c(\TRS{S})$. Likewise
the relative TRS $\MATCHRT_d^c(\REL{\TRS{R}}{\TRS{S}})$
is defined as $\REL{\MATCH_d(\TRS{R})}{\MATCHRT_d^c(\TRS{S})}$.
In case $c = d$ then $\MATCHRT_d^c(\REL{\TRS{R}}{\TRS{S}})$
is abbreviated by $\MATCHRT_c(\REL{\TRS{R}}{\TRS{S}})$
and $\MATCHRT_d^c(\TRS{S}) = \MATCHRT_c(\TRS{S})$.
\end{defi}

The idea behind the requirement $\|\base{l'}\| \geqslant \|r\|$ in the above 
definition is that such rules cannot yield an increase with respect to
the multiset measure of heights.
Let us illustrate the above definition on an example.

\begin{exa}
\label{EXPL:RT enrichment}
Consider the relative TRS $\REL{\TRS{R}}{\TRS{S}}$ with
$\TRS{R}$ consisting of the rewrite rule
\begin{xalignat*}{1}
1\colon \m{rev}(x) &\to \m{rev}'(x,\m{nil})
\end{xalignat*}
and $\TRS{S}$ consisting of the rewrite rules
\begin{xalignat*}{2}
2\colon \m{rev}'(\m{nil},y) &\to y&
3\colon \m{rev}'(\m{cons}(x,y),z) &\to \m{rev}'(y,\m{cons}(x,z))
\intertext{Then the rewrite rules}
\m{rev}_0(x) &\to \m{rev}'_1(x,\m{nil}_1) &
\m{rev}_1(x) &\to \m{rev}'_2(x,\m{nil}_2) \\
\m{rev}_2(x) &\to \m{rev}'_3(x,\m{nil}_3) &
&\cdots
\intertext{belong to $\MATCH(\TRS{R})$ and
$\MATCHRT^1(\TRS{S})$ contains the rules}
\m{rev}'_0(\m{nil}_0,y) &\to y &
\m{rev}'_0(\m{cons}_0(x,y),z) &\to \m{rev}'_0(y,\m{cons}_0(x,z)) \\
\m{rev}'_0(\m{nil}_1,y) &\to y &
\m{rev}'_0(\m{cons}_1(x,y),z) &\to \m{rev}'_1(y,\m{cons}_1(x,z)) \\
&\cdots &
\m{rev}'_2(\m{cons}_1(x,y),z) &\to \m{rev}'_1(y,\m{cons}_1(x,z))
\end{xalignat*}
Both TRSs together constitute $\matchRT{\TRS{R}}{\TRS{S}}{1}$.
\end{exa}

The new enrichment $\matchRT{\TRS{R}}{\TRS{S}}{c}$ allows to prove the
complexity of the rewrite rules in~\TRS{R} relative to the rules in~\TRS{S}.

\begin{defi}
\label{DEF:match-RT-boundedness}
Let $\REL{\TRS{R}}{\TRS{S}}$ be a relative TRS. We call
$\REL{\TRS{R}}{\TRS{S}}$ \emph{match-RT-bounded} for a language~$L$ if there
exists a $c \in \NAT$ such that the height of function symbols occurring in
terms in $\Succ{\matchRT{\TRS{R}}{\TRS{S}}{c}}{\lift{L}{0}}$
is at most~$c$.
\end{defi}

An immediate consequence of the next lemma is that every
derivation in $\REL{\TRS{R}}{\TRS{S}}$ can be lifted to a
$\matchRT{\TRS{R}}{\TRS{S}}{c}$-sequence of the same length.
This result is used later on to infer termination
and complexity results for relative rewriting.

\begin{lem}
\label{LEM:simulation via match-RT enrichment}
Let $\REL{\TRS{R}}{\TRS{S}}$ be a left-linear relative TRS
and $c \in \NAT$. If $u \to_{\TRS{R}} v$ ($u \to_{\TRS{S}} v$)
then for all terms $u'$ with $\base{u'} = u$
there exists a term $v'$ such that $\base{v'} = v$ and
$u' \to_{\match{\TRS{R}}} v'$ ($u' \to_{\MATCHRT^c(\TRS{S})} v'$).
\end{lem}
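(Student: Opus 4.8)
The plan is to unfold the given step and transfer it symbol-by-symbol to the labeled signature, using left-linearity to fix the labels at the variable positions. First I would write $u \to_{\TRS{R}} v$ explicitly: there are a rule $l \to r \in \TRS{R}$, a position $p \in \Pos(u)$, and a substitution $\sigma$ with $u|_p = l\sigma$ and $v = u[r\sigma]_p$. Since $\base{\cdot}$ acts symbol-wise and preserves arities, it preserves the position structure of terms, so $\Pos(u') = \Pos(u)$ and $\base{u'|_p} = u|_p = l\sigma$. In particular $u'|_p$ carries, at every position in $\FPos(l)$, a labeled copy of the corresponding symbol of $l$, and below the variable positions of $l$ it carries labeled copies of the subterms $\sigma(x)$.

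Next I would read off a labeled left-hand side and a labeled substitution from $u'|_p$. Define $l'$ over $\SIG{F}_\NAT$ by retaining the labels of $u'|_p$ at the positions in $\FPos(l)$ and replacing each maximal subterm rooted at a variable position of $l$ by the corresponding variable; set $\sigma'(x) = u'|_{pq}$, where $q$ is the position of $x$ in $l$. Here left-linearity is essential: because each variable of $l$ occurs exactly once, there is no clash between the labels demanded for $\sigma'(x)$ at different occurrences, so $\sigma'$ is well defined and $u'|_p = l'\sigma'$ with $\base{l'} = l$ and $\base{\sigma'(x)} = \sigma(x)$ for all $x$. By the definition of $\match{\TRS{R}}$ the rule $l' \to \lift{r}{c}$ with $c = 1 + \min\{\height{l'(q)} \mid q \in \FPos(l)\}$ belongs to $\match{\TRS{R}}$, so putting $v' = u'[\lift{r}{c}\sigma']_p$ gives $u' \to_{\match{\TRS{R}}} v'$. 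Finally, since $\base{\cdot}$ commutes with contexts and substitutions and satisfies $\base{\lift{r}{c}} = r$, I would compute $\base{v'} = u[r\sigma]_p = v$, as required.

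For the $\TRS{S}$-case the argument is verbatim the same, with $\match{\TRS{R}}$ replaced by $\MATCHRT^c(\TRS{S})$. The only point to check is that for the labeled left-hand side $l'$ extracted above (with $\base{l'} = l$ and $l \to r \in \TRS{S}$) the enrichment really contains a rule $l' \to \lift{r}{d}$; this is immediate from Definition~\ref{DEF:RT enrichment}, which supplies exactly one such rule for every labeling $l'$ of a left-hand side of $\TRS{S}$. Which of the two cases of that definition fixes the height~$d$ is irrelevant for the simulation, since only $\base{l'} = l$, $\base{\lift{r}{d}} = r$, and the applicability of the rule at position~$p$ are needed to produce $v'$ with $\base{v'} = v$.

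The main obstacle is the construction of the labeled substitution $\sigma'$, and this is precisely where left-linearity enters. Without it, a variable occurring twice in $l$ could force $\sigma'(x)$ to equal two differently labeled subterms of $u'$, so $u'|_p$ would fail to be an instance of any single labeled left-hand side and the ordinary rewrite relation $\to_{\match{\TRS{R}}}$ could not fire. This is exactly the gap that the relation $\rto_{\match{\TRS{R}}}$ together with the join operation~${\uparrow}$ is introduced to close in the non-left-linear development of Section~\ref{BOUNDS:raise}; under the left-linearity hypothesis of this lemma the consistent match is handed to us directly.
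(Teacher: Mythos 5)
Your proof is correct and is precisely the routine simulation argument the paper intends when it dismisses this lemma as ``straightforward'': unfold the rewrite step, use left-linearity to extract a labeled left-hand side $l'$ and a well-defined labeled substitution $\sigma'$ from $u'|_p$, and fire the corresponding rule of $\match{\TRS{R}}$ resp.\ $\MATCHRT^c(\TRS{S})$, noting that $\BASE$ commutes with contexts, substitutions, and $\LIFT$. One cosmetic nit: in the $\TRS{R}$-case you reuse the letter $c$ for the match-height $1 + \min\,\{\height{l'(q)} \mid q \in \FPos(l)\}$, which clashes with the fixed $c$ of the lemma statement (relevant only to the $\TRS{S}$-case); rename it to avoid confusion.
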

\begin{proof}
Straightforward.
\end{proof}

To be able to prove that a relative TRS~\REL{\TRS{R}}{\TRS{S}}
admits a linear upper complexity bound whenever it is match-RT-bounded
for a language~$L$ we slightly modify the orderings $\mgt$ and $\mge$.
Let $M,N \in \Mul{\NAT}$ be multisets. The function $\drop{M}{n}$ removes
all occurrences of the number $n \in \NAT$ from~$M$. So for all $m \in \NAT$
we have $\drop{M}{n}(m) = 0$ if $m = n$ and $\drop{M}{n}(m) = M(m)$
otherwise. The orderings $\mgt^c$ and $\mge^c$ are defined as $M \mgt^c N$
if $\drop{M}{c} \mgt \drop{N}{c}$ and $M \mge^c N$ if
$\drop{M}{c} \mge \drop{N}{c}$. Let $\SIG{F}$ be some signature. We extend
$\mgt^c$ and $\mge^c$ to terms over the signature $\SIG{F}_\NAT$ as follows:
we have $s \mgt^c t$ if $\MFun{s} \mgt^c \MFun{t}$ and $s \mge^c t$ if
$\MFun{s} \mge^c \MFun{t}$ for terms $s,t \in \TERMS{\SIG{F}_\NAT}{\VAR{V}}$.
The basic idea behind the new orderings $\mgt^c$ and $\mge^c$ is that rewrite
rules in $\MATCHRT_c(\REL{\TRS{R}}{\TRS{S}})$ which originate from~\TRS{R} are
compatible with $\mgt^c$ and the rules originating from~\TRS{S} are compatible
with $\mge^c$. However there is one problem. If \TRS{R} contains a collapsing
rule $l \to r$ then the rule $\lift{l}{c} \to \lift{r}{c}$ appears in
$\MATCHRT_c(\REL{\TRS{R}}{\TRS{S}})$ which cannot be oriented via the ordering
$\mgt^c$ although $\lift{l}{c} \mgt \lift{r}{c}$. The problem is that
collapsing rewrite rules do not increase the heights of function symbols
in a contracted redex because the right-hand sides consist just of single
variables. To avoid this problem we assume in the following that $\TRS{R}$
is non-collapsing. For collapsing~\TRS{R} one could follow the approach
in~\cite{ZK10} which can handle collapsing rewrite rules because it does not
not use an upper bound~$c$ to limit the heights that can be introduced by the
enriched system. However, a disadvantages of this approach is that the heights
of a contracted redex are increased more often. So, apart from the collapsing
case the approach presented here is more powerful than the one introduced
in~\cite{ZK10} and completely subsumes the approach in~\cite{W07}.

\begin{lem}
\label{LEM:compatibility RT enrichment}
Let $\TRS{R}$ and $\TRS{S}$ be two non-duplicating
TRSs and $c \in \NAT$. If $\TRS{R}$ is non-collapsing then
${\to_{\MATCH_c(\TRS{R})}} \subseteq {\mgt^c}$ and
${\to_{\MATCHRT_c(\TRS{S})}} \subseteq {\mge^c}$.
\end{lem}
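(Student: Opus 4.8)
The plan is to reduce both inclusions to a single local analysis of one rewrite step and then to track the change of the height multiset $\MFun{\cdot}$. First I would record the elementary fact that $\drop{\cdot}{c}$ merely deletes the entries of value $c$ and leaves every other entry untouched, so it distributes over multiset union and difference on multisets supported away from $c$; concretely $\drop{(M \setminus X) \cup Y}{c} = (\drop{M}{c} \setminus \drop{X}{c}) \cup \drop{Y}{c}$ whenever $X \subseteq M$. Then, for a step $u = C[l'\sigma] \to v = C[\rho\sigma]$ contracting a redex with an enriched rule $l' \to \rho$, I would split $\MFun{u}$ and $\MFun{v}$ into the contribution of the (unchanged) context $C$, the contribution of the pattern ($\MFun{l'}$ versus the heights occurring in $\rho$), and the contribution of the substitution. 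Because $\TRS{R}$ and $\TRS{S}$ are non-duplicating, every variable occurs in the right-hand side at most as often as in the left-hand side, so the substitution part of $\MFun{v}$ is a sub-multiset of that of $\MFun{u}$. Hence $\MFun{v} = (\MFun{u} \setminus X) \cup Y$ with $\MFun{l'} \subseteq X \subseteq \MFun{u}$ and $Y$ the multiset of heights freshly created by $\rho$, and it remains to compare $\drop{X}{c}$ with $\drop{Y}{c}$.

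For the first inclusion the rule is $l' \to \lift{r}{d}$ with $d = 1 + h_{\min}$ and $h_{\min} = \min\{\height{l'(p)} \mid p \in \FPos(\base{l'})\}$, so $Y$ consists of $\|r\|$ copies of $d$ while $h_{\min}$ itself lies in $\MFun{l'} \subseteq X$. The key point, and the only use of the non-collapsing hypothesis, is that $r$ then contains a function symbol; since its height $d$ must be at most $c$ for the rule to live over $\SIG{F}_{\{0,\dots,c\}}$, we get $h_{\min} = d - 1 < c$. Consequently $h_{\min}$ survives the drop, so $\drop{X}{c} \neq \varnothing$, whereas every element of $\drop{Y}{c}$ is either $d > h_{\min}$ (if $d < c$) or absent (if $d = c$). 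In both situations removing $\drop{X}{c}$ and adding $\drop{Y}{c}$ is a valid $\mgt$-step, giving $u \mgt^c v$.

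For the second inclusion I would follow the two cases of Definition~\ref{DEF:RT enrichment}. When $\|\base{l'}\| \geqslant \|r\|$ and $l'$ is uniformly lifted to its root height $h$, all pattern heights equal $h$ and $d = \min\{c,h\} = h$; since $\|r\| \leqslant \|\base{l'}\|$, after dropping $c$ the right-hand multiset is a sub-multiset of the left-hand one, so at worst equality holds and $u \mge^c v$. In the remaining case $d = \min\{c, 1 + h_{\min}\}$, and the computation of the first inclusion applies verbatim to yield either a strict $\mgt^c$-step or, when the relevant symbols all sit at height $c$, a sub-multiset; either way $u \mge^c v$. Note that here no collapsing or non-collapsing assumption on $\TRS{S}$ is needed, precisely because $\mge^c$ is reflexive and absorbs the equality that the size-preserving and size-decreasing rules deliberately produce.

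The delicate part throughout is the behaviour at height exactly $c$: once $\drop{\cdot}{c}$ erases all height-$c$ symbols a step can degenerate to equality instead of a strict decrease. This is harmless for the weak ordering $\mge^c$ but fatal for $\mgt^c$, which is exactly why the first inclusion needs non-collapsing: a collapsing rule $l' \to r$ whose pattern $l'$ is uniformly lifted to height $c$ is admissible in $\MATCH_c(\TRS{R})$ (the variable right-hand side imposes no height bound), yet its drop erases the entire pattern and leaves $\drop{\MFun{u}}{c} = \drop{\MFun{v}}{c}$, breaking strictness. Checking that the non-duplication hypothesis really keeps the substitution part from growing, and that the minimum-height witness $h_{\min}$ survives the drop, are the two points that deserve care; the surrounding multiset arithmetic is routine.
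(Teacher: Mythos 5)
Your proof is correct and takes essentially the same route as the paper's: the same case analysis along Definition~\ref{DEF:RT enrichment}, with the identical key observation that non-collapsingness forces the freshly created height $d$ to satisfy $d \leqslant c$, so the minimal redex height $d-1 <_\Nat c$ survives $\DROP_c$ and keeps the $\mgt^c$-step strict, while $\mge^c$ absorbs the degenerate height-$c$ and collapsing cases for $\TRS{S}$. The only difference is presentational: you re-derive the underlying multiset decomposition (context/pattern/substitution split plus non-duplication) from scratch, whereas the paper obtains it by citing the proof of Lemma~17 of~\cite{GHWZ07}.
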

\begin{proof}
From the proof of~\cite[Lemma~17]{GHWZ07} we know that for a non-duplicating
TRS $\TRS{R}$ and terms~$s$ and~$t$ such that $s \to_{\MATCH_c(\TRS{R})} t$
we have $s \mgt t$. So there are multisets~$X$ and~$Y$ such that
$\MFun{t} = (\MFun{s} \setminus X) \cup Y$, $X \neq \varnothing$, and
for all $d' \in Y$ there is a $d \in X$ such that $d <_\Nat d'$.
Because $\TRS{R}$ is non-collapsing we know from the definition of
$\MATCH_c(\TRS{R})$ that there is a $d \in X$ such that
$d <_\Nat c$ and $d <_\Nat d'$ for all $d' \in Y$.
From this it follows that $\drop{\MFun{t}}{c} =
(\drop{\MFun{s}}{c} \setminus \drop{X}{c}) \cup \drop{Y}{c}$,
$\drop{X}{c} \neq \varnothing$, and for all $d' \in \drop{Y}{c}$ there is a
$d \in \drop{X}{c}$ such that $d <_\Nat d'$. As an immediate consequence we
have $\drop{\MFun{s}}{c} \mgt \drop{\MFun{t}}{c}$ and hence $s \mgt^c t$.

Now let~$s$ and~$t$ be terms and $l \to r$ be a rewrite rule in
$\MATCHRT_c(\TRS{S})$ such that $s \to_{\{l \to r\}} t$. According to
Definition~\ref{DEF:RT enrichment} we have to consider two cases. The first 
case amounts to $\|l\| \geqslant \|r\|$ where all function symbols in $l$
and $r$ have the same heights. But then non-duplication of~\TRS{S}
implies $\MFun{s} \supseteq \MFun{t}$ and thus $s \mge^c t$. In the other
case if $l\to r$ is non-collapsing and $l \notin \lift{\base{l}}{c}$ then we
obtain $s \mgt^c t$ as before and hence also $s \mge^c t$. 
If $l \in \lift{\base{l}}{c}$ then 
$\drop{\MFun{s}}{c} \supseteq \drop{\MFun{t}}{c}$ since
$\drop{\MFun{l}}{c} = \drop{\MFun{r}}{c} = \varnothing$
and if $l \to r$ is collapsing then $\MFun{s} \supseteq \MFun{t}$ 
since $\MFun{r} = \varnothing$.
Hence in both situations $s \mge^c t$.
\end{proof}

Since the length of every $\mgt^c$ chain is bounded by a function linear in
the size of the starting term---if the size-increase of the terms in the
chain can be bounded by a constant---we can prove that the complexity induced
by the relative TRS $\REL{\TRS{R}}{\TRS{S}}$ on some language~$L$ is at most
linear if $\REL{\TRS{R}}{\TRS{S}}$ is match-RT-bounded for~$L$.

\begin{thm}
\label{THM:match-RT-bounds => linear complexity}
Let $\REL{\TRS{R}}{\TRS{S}}$ be a linear relative TRS and~\TRS{R} be
non-collapsing. If $\REL{\TRS{R}}{\TRS{S}}$ is match-RT-bounded for
a language~$L$ then $\REL{\TRS{R}}{\TRS{S}}$ is terminating on~$L$
and $\cp{n}{\to_{\REL{\TRS{R}}{\TRS{S}}}}{L} = \OO(n)$.
\end{thm}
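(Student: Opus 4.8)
The plan is to transfer the given $\REL{\TRS{R}}{\TRS{S}}$-derivation into the enriched setting, convert it into a chain that descends with respect to the multiset orders $\mgt^c$ and $\mge^c$, and then count the strict steps by a bound that is linear in $\|t\|$. I would follow the pattern of the proof of Theorem~\ref{THM:match(-raise)-bounds => linear complexity}, adapting it from a single TRS to the relative situation. Fix the constant $c$ from Definition~\ref{DEF:match-RT-boundedness} witnessing match-RT-boundedness of $\REL{\TRS{R}}{\TRS{S}}$ for $L$. For a term $t \in L$ and any $\REL{\TRS{R}}{\TRS{S}}$-derivation starting from $t$, I would use Lemma~\ref{LEM:simulation via match-RT enrichment} to lift it step-by-step to a $\matchRT{\TRS{R}}{\TRS{S}}{c}$-derivation starting from $\lift{t}{0}$ of the very same length, in which each $\TRS{R}$-step becomes a $\match{\TRS{R}}$-step and each $\TRS{S}$-step a $\MATCHRT^c(\TRS{S})$-step. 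Since $\TRS{R}$ and $\TRS{S}$ are non-duplicating (being linear) and $\TRS{R}$ is non-collapsing, the hypotheses of Lemma~\ref{LEM:compatibility RT enrichment} are met; and by match-RT-boundedness all symbols along the lifted derivation have height at most $c$, so only rules from $\MATCH_c(\TRS{R})$ and $\MATCHRT_c(\TRS{S})$ are actually used. Lemma~\ref{LEM:compatibility RT enrichment} then yields a chain $t_0'\ (\mge^c \text{ or } \mgt^c)\ t_1'\ (\mge^c \text{ or } \mgt^c)\ \cdots$ with $t_0' = \lift{t}{0}$ in which each $\TRS{R}$-step contributes a strict $\mgt^c$ step and each $\TRS{S}$-step at least a weak $\mge^c$ step.

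For termination I would argue by contradiction: an infinite $\REL{\TRS{R}}{\TRS{S}}$-derivation is an infinite $(\TRS{R}\cup\TRS{S})$-derivation with infinitely many $\TRS{R}$-steps, hence lifts to an infinite $\mge^c$-chain containing infinitely many strict $\mgt^c$ steps. Because all heights lie in $\{0,\dots,c\}$, the dropped multisets range over the finite set $\{0,\dots,c-1\}$, on which $\mgt^c$ is well-founded; using $\mgt^c \cdot \mge^c \subseteq \mgt^c$ and $\mge^c \cdot \mgt^c \subseteq \mgt^c$ to absorb the weak steps produces an infinite strictly $\mgt^c$-decreasing sequence, a contradiction. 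For the complexity bound I would count the number of $\TRS{R}$-steps, which equals $\dl{t}{\to_{\REL{\TRS{R}}{\TRS{S}}}}$. This number is at most the total number of strict $\mgt^c$ steps in the chain, and by the remark of~\cite{DM79} invoked in Theorem~\ref{THM:match(-raise)-bounds => linear complexity} such a count is bounded by $\|\lift{t}{0}\|\cdot(k+1)^c = \|t\|\cdot(k+1)^c$, where $k$ is the maximal number of function symbols in a right-hand side of $\TRS{R}\cup\TRS{S}$. Since $\|t\| \leqslant |t| \leqslant n$ and $(k+1)^c$ is a fixed constant, this gives $\cp{n}{\to_{\REL{\TRS{R}}{\TRS{S}}}}{L} = \OO(n)$.

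The delicate point is the final counting step. Unlike in Theorem~\ref{THM:match(-raise)-bounds => linear complexity}, not every step of the chain is strict: the $\TRS{S}$-steps are only weak, and some may add function symbols (size-increasing rules of $\TRS{S}$, falling into the second case of Definition~\ref{DEF:RT enrichment}) or remove them (size-decreasing rules). I expect the main obstacle to be arguing that these weak steps cannot inflate the number of strict steps beyond the linear bound. The observation that makes this work is that symbols reaching the ceiling height $c$ are discarded by $\drop{\cdot}{c}$ and thereby become inert for the remaining chain, so the dropped measure behaves exactly like the bounded-height multiset in the cited proof; every symbol can thus spawn only a bounded tree of descendants (branching at most $k$, depth at most $c$) before its height is exhausted, which is precisely what the~\cite{DM79} bound quantifies. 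Care must also be taken to invoke match-RT-boundedness \emph{before} appealing to well-foundedness of $\mgt^c$, since that order is well-founded only once the heights are known to be bounded.
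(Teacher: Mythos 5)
Your proposal is correct and follows essentially the same route as the paper's proof: lift the derivation via Lemma~\ref{LEM:simulation via match-RT enrichment}, use match-RT-boundedness to confine the lifted sequence to $\MATCHRT_c(\REL{\TRS{R}}{\TRS{S}})$, apply Lemma~\ref{LEM:compatibility RT enrichment} together with compatibility and transitivity of $\mgt^c$ and $\mge^c$ to obtain a strict decrease per relative step, conclude termination from well-foundedness of $\mgt^c$ on bounded heights, and bound the chain length by $\|t\|\cdot(k+1)^c$ via the remark of~\cite{DM79}. Your closing discussion of the weak $\TRS{S}$-steps and the role of $\DROP$ merely makes explicit what the paper compresses into ``just replace $\mgt$ by $\mgt^c$,'' so it is elaboration rather than a different argument.
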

\begin{proof}
First we show that $\REL{\TRS{R}}{\TRS{S}}$ is terminating on~$L$. Assume
to the contrary that there is an infinite rewrite sequence of the form
\[
t_1
\to_{\REL{\TRS{R}}{\TRS{S}}} t_2
\to_{\REL{\TRS{R}}{\TRS{S}}} t_3
\to_{\REL{\TRS{R}}{\TRS{S}}} \cdots
\]
with $t_1 \in L$. Because $\TRS{R} \cup \TRS{S}$ is left-linear and
$\REL{\TRS{R}}{\TRS{S}}$ is match-RT-bounded for~$L$ by a $c \in \NAT$,
according to Lemma~\ref{LEM:simulation via match-RT enrichment},
the above derivation can be lifted to an infinite
$\matchRT{\TRS{R}}{\TRS{S}}{c}$
rewrite sequence
\[
t_1'
\to_{\matchRT{\TRS{R}}{\TRS{S}}{c}} t_2'
\to_{\matchRT{\TRS{R}}{\TRS{S}}{c}} t_3'
\to_{\matchRT{\TRS{R}}{\TRS{S}}{c}} \cdots
\]
starting from $t_1' = \lift{t_1}{0}$ such that $\base{t_i'} = t_i$
for all $i \geqslant 1$ and the height of every function symbol occurring
in a term in the lifted sequence is at most~$c$. Hence the employed
rewrite rules in the derivation emanating from $t_1'$ must come from
$\MATCHRT_c(\REL{\TRS{R}}{\TRS{S}})$. With help of
Lemma~\ref{LEM:compatibility RT enrichment}, transitivity of $\mge^c$,
and compatibility of the orderings $\mgt^c$ and $\mge^c$ we deduce that
$t_i' \mgt^c t_{i+1}'$ for all $i \geqslant 1$. However, this is excluded
because~$<_\Nat$ is well-founded on $\{0,\dots,c\}$ and hence $\mgt^c$ is
well-founded on $\TERMS{\SIG{F}_{\{0,\dots,c\}}}{\VAR{V}}$.

To prove the second part of the theorem, consider an arbitrary
(terminating) rewrite sequence
\[
u
\to_{\REL{\TRS{R}}{\TRS{S}}} u_1
\to_{\REL{\TRS{R}}{\TRS{S}}} \cdots
\to_{\REL{\TRS{R}}{\TRS{S}}} u_m
\]
with $u \in L$. Similar as before this rewrite sequence can be lifted
to a $\matchRT{\TRS{R}}{\TRS{S}}{c}$-sequence of the same length
\[
u'
\to_{\matchRT{\TRS{R}}{\TRS{S}}{c}} u_1'
\to_{\matchRT{\TRS{R}}{\TRS{S}}{c}} \cdots
\to_{\matchRT{\TRS{R}}{\TRS{S}}{c}} u_m'
\]
such that $u' = \lift{u}{0}$ and $u_i' \mgt^c u_{i+1}'$
for all $i \in \{0,\dots,m-1\}$. Here $u_0' = u'$ and $c \in \NAT$ such
that the relative TRS $\REL{\TRS{R}}{\TRS{S}}$ is match-RT-bounded for~$L$
by~$c$. Similar as in the proof of
Theorem~\ref{THM:match(-raise)-bounds => linear complexity} we
can conclude that the length of the $\REL{\TRS{R}}{\TRS{S}}$-rewrite
sequence starting at the term~$u$ is bounded by
$\|u\| \cdot (k+1)^c$ where~$k$ is the maximal number of function
symbols occurring in some right-hand side in $\TRS{R} \cup \TRS{S}$;
just replace $\mgt$ by $\mgt^c$.
\end{proof}

We conclude this subsection with an example.

\begin{exa}
The relative TRS $\REL{\TRS{R}}{\TRS{S}}$
of Example~\ref{EXPL:RT enrichment}
is match-RT-bounded for $\GTERMS{\SIG{F}}$ by~$1$.
Here $\SIG{F} = \{\m{nil},\m{cons},\m{rev},\m{rev}'\}$. Due to
Theorem~\ref{THM:match-RT-bounds => linear complexity} we can
conclude that $\REL{\TRS{R}}{\TRS{S}}$ admits at most linear
derivational complexity. In Section~\ref{BOUNDS:auto} it is
explained how match-RT-boundedness can be checked automatically.
\end{exa}

\subsection{Raise-RT-Bounds for Non-Left-Linear Relative TRSs}
\label{BOUNDS:raise}

In order to generalize Theorem~\ref{THM:match-RT-bounds => linear complexity}
to non-duplicating relative TRSs we consider the relation
$\rto_{\matchRT{\TRS{R}}{\TRS{S}}{c}}$ instead of
$\to_{\matchRT{\TRS{R}}{\TRS{S}}{c}}$ which uses raise-rules
to deal with non-left-linearity. Thereby the rewrite relation
$\rto_{\matchRT{\TRS{R}}{\TRS{S}}{c}}$ is defined as
$\rto_{\MATCHRT^c(\TRS{S})}^*
\cdot \rto_{\match{\TRS{R}}}
\cdot \rto_{\MATCHRT^c(\TRS{S})}^*$
where $\rto_{\MATCHRT^c(\TRS{S})}$ is
defined similar to $\rto_{\match{\TRS{R}}}$ (but based on
$\MATCHRT^c(\TRS{S})$ instead of $\match{\TRS{R}}$).
This is essential to lift rewrite sequences in the relative TRS
$\REL{\TRS{R}}{\TRS{S}}$ to sequences in $\matchRT{\TRS{R}}{\TRS{S}}{c}$.

\begin{defi}
Let $\REL{\TRS{R}}{\TRS{S}}$ be a relative TRS.
We call $\REL{\TRS{R}}{\TRS{S}}$ \emph{match-raise-RT-bounded}
for a language~$L$ if there exists a number $c \in \NAT$ such that
the height of function symbols occurring in terms belonging to
$\Succ[\rto]{\matchRT{\TRS{R}}{\TRS{S}}{c}}{\lift{L}{0}}$
is at most~$c$.
\end{defi}

Note that for left-linear relative TRSs, match-raise-RT-boundedness
coincides with match-RT-boundedness. By using the relation
$\rto_{\matchRT{\TRS{R}}{\TRS{S}}{c}}$ every derivation
induced by the relative TRS $\REL{\TRS{R}}{\TRS{S}}$ can be
simulated via the rewrite rules in $\matchRT{\TRS{R}}{\TRS{S}}{c}$.

\begin{lem}
\label{LEM:simulation via match-raise-RT enrichment}
Let $\REL{\TRS{R}}{\TRS{S}}$ be a relative TRS and $c \in \NAT$. 
If $u \to_{\TRS{R}} v$ ($u \to_{\TRS{S}} v)$ then for all terms
$u'$ with $\base{u'} = u$
there exists a term $v'$ such that $\base{v'} = v$ and
$u' \rto_{\match{\TRS{R}}} v'$ ($u' \rto_{\MATCHRT^c(\TRS{S})} v'$).
\end{lem}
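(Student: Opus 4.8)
The plan is to mirror the proof of Lemma~\ref{LEM:simulation via match-RT enrichment}, the only new ingredient being that the raise-steps hidden inside the relation $\rto$ absorb the mismatch created by non-left-linearity. I would treat the $\TRS{R}$-case explicitly; the $\TRS{S}$-case is the same after replacing $\match{\TRS{R}}$ by $\MATCHRT^c(\TRS{S})$ and reading off the label~$d$ from Definition~\ref{DEF:RT enrichment}, whose case distinction will turn out to be irrelevant at the level of bases.

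First I would fix witnesses for the base step: a rule $l \to r \in \TRS{R}$, a position $p \in \Pos(u)$, and a substitution $\tau$ with $u|_p = l\tau$ and $v = u[r\tau]_p$. Given any $u'$ with $\base{u'} = u$, I would build the enriched left-hand side $l'$ by copying the function symbols of $u'|_p$ together with their heights at every position in $\FPos(l)$ and displaying the variables of $l$ at the remaining positions. Writing $l' = C[x_1,\ldots,x_n]$ with all variable occurrences displayed and letting $s_i$ be the subterm of $u'$ under the $i$-th displayed variable, this gives $u'|_p = C[s_1,\ldots,s_n]$ and $\base{l'} = l$; hence with $c = 1 + \min\{\height{l'(q)} \mid q \in \FPos(l)\}$ the rule $l' \to \lift{r}{c}$ belongs to $\match{\TRS{R}}$.

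Next I would discharge the side condition of the $\rto$-definition, namely $\base{s_i} = \base{s_j}$ whenever $x_i = x_j$. This is the one place where it matters that the base step is a genuine rewrite step: since $\base{u'} = u$ and $u|_p = l\tau$, the base of $s_i$ equals $l\tau$ read at the corresponding variable position, i.e.\ $\base{s_i} = \tau(x_i)$; thus $x_i = x_j$ forces $\base{s_i} = \tau(x_i) = \tau(x_j) = \base{s_j}$. This is exactly the condition under which ${\uparrow}\{s_i \mid x_i = x\}$ is defined, so the induced substitution $\sigma$ exists, and with $v' = u'[\lift{r}{c}\sigma]_p$ we obtain $u' \rto_{\match{\TRS{R}}} v'$.

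It then remains to verify $\base{v'} = v$, which reduces to $\base{\lift{r}{c}\sigma} = r\tau$. Here I would use that $\base$ commutes with substitution and that both $\lift{\cdot}{c}$ and ${\uparrow}$ preserve bases: $\base{\lift{r}{c}} = r$, and since raising never alters a function symbol's base, $\base{\sigma(x)} = \base{s_i} = \tau(x)$ for any $i$ with $x_i = x$. Combining these yields $\base{v'} = u[r\tau]_p = v$. I expect the only genuinely non-routine point to be the well-definedness of ${\uparrow}$ just used---that the subterms collapsed by a repeated variable share a base---while the remaining steps are the same bookkeeping as in the left-linear case and carry over verbatim to the $\TRS{S}$-variant.
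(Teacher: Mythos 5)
Your proof is correct and is exactly the routine simulation argument the paper has in mind: the paper's own proof of this lemma consists of the single word ``Straightforward,'' and your construction of the enriched redex $l'$ from the heights in $u'$, the discharge of the side condition $\base{s_i} = \base{s_j}$ via $\base{s_i} = \tau(x_i)$, and the observation that both $\LIFT$ and ${\uparrow}$ preserve bases are precisely the details being elided there. Your remark that the case distinction in Definition~\ref{DEF:RT enrichment} is irrelevant because the label $d$ never affects bases correctly disposes of the $\MATCHRT^c(\TRS{S})$ variant as well.
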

\begin{proof}
Straightforward.
\end{proof}

Before we can prove that match-raise-RT-boundedness of \REL{\TRS{R}}{\TRS{S}}
induces a linear upper bound on the complexity we have to ensure that the
raise-rules implicitly used by the relation
$\rto_{\matchRT{\TRS{R}}{\TRS{S}}{c}}$
can be oriented via $\mge^c$.

\begin{lem}
\label{LEM:compatibility raise}
For any signature $\SIG{F}$ and $c \in \NAT$ it holds that
${\to_{\RAISE_c(\SIG{F})}} \subseteq {\mge^c}$.
\end{lem}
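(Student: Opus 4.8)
The statement reduces to a purely combinatorial fact about the multiset of heights $\MFun{\cdot}$, so the plan is to trace how a single raise step changes this multiset and then how the cut-off operation $\drop{\cdot}{c}$ interacts with it. First I would unfold what a $\RAISE_c(\SIG{F})$-step does: a step $s \to_{\RAISE_c(\SIG{F})} t$ contracts an instance of a rule $f_d(x_1,\dots,x_n) \to f_{d+1}(x_1,\dots,x_n)$ at some position. Since the enrichment is restricted to the signature $\SIG{F}_{\{0,\dots,c\}}$, both $f_d$ and $f_{d+1}$ must be available, which forces $d <_\Nat c$. As the rule leaves the surrounding context and the arguments untouched and only relabels the root symbol of the contracted redex, at the level of heights we obtain $\MFun{t} = (\MFun{s} \setminus \{d\}) \cup \{d+1\}$; that is, exactly one occurrence of the height $d$ is replaced by one occurrence of $d+1$.

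Next I would push this through $\drop{\cdot}{c}$. The decisive observation is that $d <_\Nat c$, so the removed occurrence of $d$ is \emph{not} discarded by $\drop{\cdot}{c}$, and deleting it really does shrink $\drop{\MFun{s}}{c}$. The freshly introduced height $d+1$, on the other hand, is kept by $\drop{\cdot}{c}$ when $d+1 <_\Nat c$ and discarded when $d+1 = c$. Writing $X = \{d\}$ together with $Y = \{d+1\}$ if $d+1 <_\Nat c$, respectively $Y = \varnothing$ if $d+1 = c$, this yields $\drop{\MFun{t}}{c} = (\drop{\MFun{s}}{c} \setminus X) \cup Y$ in both sub-cases. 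Since $X \neq \varnothing$ and every element of $Y$ (namely $d+1$, when present) strictly exceeds the element $d \in X$, the definition of $\mgt$ gives $\drop{\MFun{s}}{c} \mgt \drop{\MFun{t}}{c}$, i.e.\ $s \mgt^c t$, and therefore $s \mge^c t$. As the step was arbitrary, this proves ${\to_{\RAISE_c(\SIG{F})}} \subseteq {\mge^c}$ (in fact the stronger inclusion in $\mgt^c$).

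I do not expect a genuine obstacle; the argument is essentially a one-step instance of the multiset reasoning already carried out in Lemma~\ref{LEM:compatibility RT enrichment}. The only point that needs care is the boundary case $d+1 = c$, where the newly raised symbol is immediately cut off by $\drop{\cdot}{c}$; keeping the two sub-cases uniform by allowing $Y = \varnothing$ and appealing to the vacuous truth of the order condition on $Y$ is what avoids an awkward split. I would also make a point of invoking the signature restriction $d <_\Nat c$ explicitly, since it is exactly this bound that guarantees the deleted height survives the drop, and hence that the step is a genuine $\mge^c$-comparison rather than a no-op.
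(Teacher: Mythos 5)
Your proof is correct and follows essentially the same route as the paper: both identify that a raise step effects $\MFun{t} = (\MFun{s} \setminus \{d\}) \cup \{d+1\}$ for some $d <_\Nat c$ and conclude $s \mgt^c t$, hence $s \mge^c t$. The paper leaves the interaction with $\DROP$ implicit; your explicit treatment of the boundary case $d+1 = c$ (allowing $Y = \varnothing$) just fills in detail the paper takes for granted.
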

\begin{proof}
Assume that there are terms~$s$ and~$t$ such that
$s \to_{\RAISE_c(\SIG{F})} t$. According to the definition of
$\RAISE_c(\SIG{F})$ we have
$\MFun{t} = (\MFun{s} \setminus \{d\}) \cup \{d+1\}$ for some
height $d <_\Nat c$. Thus $s \mgt^c t$ and hence $s \mge^c t$
according to the definition of $\mge^c$.
\end{proof}

Using Lemma~\ref{LEM:compatibility raise} it is easy to extend
Theorem~\ref{THM:match-RT-bounds => linear complexity} to
TRSs that are non-linear but non-duplicating.

\begin{thm}
\label{THM:match-raise-RT-bounds => linear complexity}
Let $\REL{\TRS{R}}{\TRS{S}}$ be a non-duplicating relative
TRS and \TRS{R} be non-collapsing. If $\REL{\TRS{R}}{\TRS{S}}$
is match-raise-RT-bounded for a language~$L$ then
$\REL{\TRS{R}}{\TRS{S}}$ is terminating on~$L$. Furthermore,
$\cp{n}{\to_{\REL{\TRS{R}}{\TRS{S}}}}{L} = \OO(n)$.
\end{thm}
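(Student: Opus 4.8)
The plan is to follow the proof of Theorem~\ref{THM:match-RT-bounds => linear complexity} essentially verbatim, replacing the relation $\to_{\matchRT{\TRS{R}}{\TRS{S}}{c}}$ by $\rto_{\matchRT{\TRS{R}}{\TRS{S}}{c}}$ throughout; the only genuinely new ingredient is that a single $\rto$-step now carries implicit raise-steps — introduced by the $\uparrow$ merging of subterms bound to the same variable — which I must show harmless for the multiset measure. For termination I would assume an infinite $\REL{\TRS{R}}{\TRS{S}}$-derivation from some $t_1 \in L$ and lift it, by Lemma~\ref{LEM:simulation via match-raise-RT enrichment}, to an infinite derivation $t_1' \rto_{\matchRT{\TRS{R}}{\TRS{S}}{c}} t_2' \rto_{\matchRT{\TRS{R}}{\TRS{S}}{c}} \cdots$ with $t_1' = \lift{t_1}{0}$ and $\base{t_i'} = t_i$, where $c$ witnesses match-raise-RT-boundedness. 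Since all heights stay $\leqslant c$, every rule used stems from $\MATCH_c(\TRS{R})$ or $\MATCHRT_c(\TRS{S})$ and every implicit raise is a $\RAISE_c(\SIG{F})$-step.

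The central step is to establish, on terms of bounded height, the inclusions ${\rto_{\match{\TRS{R}}}} \subseteq {\mgt^c}$ and ${\rto_{\MATCHRT^c(\TRS{S})}} \subseteq {\mge^c}$. I would obtain these by decomposing a step $s \rto t$ as $s \to_{\RAISE_c(\SIG{F})}^* s' \to t$: the raise prefix lifts the variable-bound subterms to their common reduct $\uparrow\{\dots\}$, after which the underlying rule applies as an ordinary (linear) match step. Each $\RAISE_c(\SIG{F})$-step is compatible with $\mge^c$ by Lemma~\ref{LEM:compatibility raise}, while the final base step contributes $\mgt^c$ (for $\TRS{R}$) or $\mge^c$ (for $\TRS{S}$) by Lemma~\ref{LEM:compatibility RT enrichment}, which applies because $\TRS{R}$ is non-duplicating and non-collapsing and $\TRS{S}$ is non-duplicating. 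Compatibility of $\mgt^c$ and $\mge^c$ then collapses the composite to $\mgt^c$ resp.\ $\mge^c$. Consequently every full $\rto_{\matchRT{\TRS{R}}{\TRS{S}}{c}}$-step — a block of $\MATCHRT^c(\TRS{S})$-steps around one $\match{\TRS{R}}$-step — yields $t_i' \mgt^c t_{i+1}'$, contradicting well-foundedness of $\mgt^c$ over the finite height set $\{0,\dots,c\}$ and thus proving termination.

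For the linear bound I would take an arbitrary terminating $\REL{\TRS{R}}{\TRS{S}}$-derivation of length $m$ from $u \in L$, lift it to a derivation $u' = \lift{u}{0}, u_1', \dots, u_m'$ of the same length with $u_i' \mgt^c u_{i+1}'$ via the decomposition above, and bound $m$ by the maximal length of a $\mgt^c$-chain. As in the proofs of Theorems~\ref{THM:match(-raise)-bounds => linear complexity} and~\ref{THM:match-RT-bounds => linear complexity}, the remark of~\cite{DM79} bounds this length by $\|u'\| \cdot (k+1)^c$ with $k$ the maximal number of function symbols in a right-hand side of $\TRS{R} \cup \TRS{S}$; since the raise-steps never change the number of function symbols we have $\|u'\| = \|u\| \leqslant n$, giving $\cp{n}{\to_{\REL{\TRS{R}}{\TRS{S}}}}{L} = \OO(n)$.

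I expect the main obstacle to be the decomposition step, specifically verifying that the raises induced by $\uparrow$ (i)~remain within the height bound $c$ — which follows because $\uparrow\{\dots\}$ occurs as a subterm of the already height-bounded target $t$ — and (ii)~can legitimately be pulled out as genuine $\RAISE_c(\SIG{F})$-steps in context, so that Lemma~\ref{LEM:compatibility raise} becomes applicable. The multiset bookkeeping itself mirrors the non-relative treatment in~\cite{KM09}, so once this decomposition is in place the remainder of the argument is routine.
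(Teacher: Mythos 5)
Your proposal follows the paper's own proof essentially step for step: the same simulation lemma lifts the derivation, the same two compatibility lemmas (Lemma~\ref{LEM:compatibility RT enrichment} for the enrichment, Lemma~\ref{LEM:compatibility raise} for the raise rules) together with transitivity of $\mge^c$ and compatibility of $\mgt^c$ and $\mge^c$ yield $t_i' \mgt^c t_{i+1}'$, well-foundedness of $\mgt^c$ over $\SIG{F}_{\{0,\dots,c\}}$ gives termination, and the $\|u\| \cdot (k+1)^c$ bound from~\cite{DM79} gives linearity; your explicit decomposition of a $\rto$-step into $\RAISE_c(\SIG{F})$-steps followed by an ordinary rewrite step is precisely what the paper leaves implicit in its appeal to the two lemmas. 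One small repair to your anticipated obstacle~(i): the heights in ${\uparrow}S$ stay at most $c$ because they are pointwise maxima of heights of subterms of the height-bounded source term, not because ${\uparrow}S$ reappears in the target $t$ --- the latter justification fails when the non-linear variable is erased by the rule, which non-duplication permits.
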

\begin{proof}
First we show that $\REL{\TRS{R}}{\TRS{S}}$ is terminating on~$L$.
Assume to the contrary that there is an infinite rewrite sequence
of the form
\[
t_1
\to_{\REL{\TRS{R}}{\TRS{S}}} t_2
\to_{\REL{\TRS{R}}{\TRS{S}}} t_3
\to_{\REL{\TRS{R}}{\TRS{S}}} \cdots
\]
with $t_1 \in L$. Let $\REL{\TRS{R}}{\TRS{S}}$ be
match-raise-RT-bounded for~$L$ by a $c \in \NAT$.
Lemma~\ref{LEM:simulation via match-raise-RT enrichment} yields
an infinite $\rto_{\matchRT{\TRS{R}}{\TRS{S}}{c}}$ rewrite sequence
\[
t_1'
\rto_{\matchRT{\TRS{R}}{\TRS{S}}{c}} t_2'
\rto_{\matchRT{\TRS{R}}{\TRS{S}}{c}} t_3'
\rto_{\matchRT{\TRS{R}}{\TRS{S}}{c}} \cdots
\]
starting from $t_1' = \lift{t_1}{0}$ such that $\base{t_i'} = t_i$
for all $i \geqslant 1$. Because $\REL{\TRS{R}}{\TRS{S}}$ is
match-raise-RT-bounded for~$L$ by~$c$, the height of every function symbol
occurring in a term in the lifted sequence is at most~$c$. Hence the
employed rewrite rules in the derivation emanating from $t_1'$ must come
from $\MATCHRT_c(\REL{\TRS{R}}{\TRS{S}})$. With help of
Lemmata~\ref{LEM:compatibility RT enrichment}
and~\ref{LEM:compatibility raise}, transitivity of $\mge^c$,
and compatibility of $\mgt^c$ and $\mge^c$ we deduce that
$t_i' \mgt^c t_{i+1}'$ for all $i \geqslant 1$. (Note that
Lemma~\ref{LEM:compatibility RT enrichment} requires that
$\REL{\TRS{R}}{\TRS{S}}$ is non-duplicating.) However, this
is excluded because~$<_\Nat$ is well-founded on $\{0,\dots,c\}$
and hence $\mgt^c$ is well-founded on
$\TERMS{\SIG{F}_{\{0,\dots,c\}}}{\VAR{V}}$.

To prove the second part of the theorem, consider an arbitrary
(terminating) rewrite sequence
\[
u
\to_{\REL{\TRS{R}}{\TRS{S}}} u_1
\to_{\REL{\TRS{R}}{\TRS{S}}} \cdots
\to_{\REL{\TRS{R}}{\TRS{S}}} u_m
\]
with $u \in L$. Similar as before this rewrite sequence can be lifted
to a $\rto_{\matchRT{\TRS{R}}{\TRS{S}}{c}}$-sequence of the same length
\[
u'
\rto_{\matchRT{\TRS{R}}{\TRS{S}}{c}} u_1'
\rto_{\matchRT{\TRS{R}}{\TRS{S}}{c}} \cdots
\rto_{\matchRT{\TRS{R}}{\TRS{S}}{c}} u_m'
\]
such that $u' = \lift{u}{0}$ and $u_i' \mgt^c u_{i+1}'$ for all
$i \in \{0,\dots,m-1\}$. Here $u_0' = u'$ and $c \in \NAT$ such that
the relative TRS $\REL{\TRS{R}}{\TRS{S}}$ is match-raise-RT-bounded
for~$L$ by~$c$. Similar as in the proof of
Theorem~\ref{THM:match(-raise)-bounds => linear complexity} we can
conclude that the length of the $\REL{\TRS{R}}{\TRS{S}}$-rewrite
sequence starting at the term~$u$ is bounded by
$\|u\| \cdot (k+1)^c$ where~$k$ is the maximal number of function
symbols occurring in some right-hand side in $\TRS{R} \cup \TRS{S}$;
just replace $\mgt$ by $\mgt^c$.
\end{proof}

\subsection{Automation}
\label{BOUNDS:auto}

To automatically prove that a given relative TRS is match(-raise)-RT-bounded
for some language~$L$ we use (quasi-deterministic, raise-consistent, and)
compatible tree automata. Here a tree automaton $\TA{A}$ is said to be
\emph{compatible} with a relative TRS
$\REL{\TRS{R}}{\TRS{S}}$ and a language~$L$ if $\TA{A}$ is compatible
with $\TRS{R} \cup \TRS{S}$ and~$L$.

\begin{lem}
Let $\REL{\TRS{R}}{\TRS{S}}$ be a left-linear relative TRS, $L$ a language,
and $c \in \NAT$. Let $\TA{A}$ be a tree automaton. If $\TA{A}$ is compatible
with the relative TRS $\matchRT{\TRS{R}}{\TRS{S}}{c}$ and $\lift{L}{0}$
such that the height of each function symbol occurring in transitions in
$\TA{A}$ is at most~$c$ then $\REL{\TRS{R}}{\TRS{S}}$ is match-RT-bounded
for~$L$.
\end{lem}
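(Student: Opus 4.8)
The plan is to reduce the claim to the closure property of compatible tree automata for left-linear systems recalled in Section~\ref{BOUNDS:pre}. Throughout write $\TRS{U}$ for the ordinary TRS $\match{\TRS{R}} \cup \MATCHRT^c(\TRS{S})$. Since $\matchRT{\TRS{R}}{\TRS{S}}{c} = \REL{\match{\TRS{R}}}{\MATCHRT^c(\TRS{S})}$, the notion of compatibility for relative TRSs fixed at the beginning of this subsection tells us that the hypothesis that $\TA{A}$ is compatible with $\matchRT{\TRS{R}}{\TRS{S}}{c}$ and $\lift{L}{0}$ means exactly that $\TA{A}$ is compatible with $\TRS{U}$ and $\lift{L}{0}$ (in particular $\lift{L}{0} \subseteq \Lg{\TA{A}}$).

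First I would observe that $\TRS{U}$ is left-linear. The rules of $\match{\TRS{R}}$ and of $\MATCHRT^c(\TRS{S})$ are obtained from those of $\TRS{R}$ and $\TRS{S}$ by labelling function symbols with heights; this relabelling leaves the variable occurrences of each left-hand side unchanged and hence preserves left-linearity. As $\REL{\TRS{R}}{\TRS{S}}$ is left-linear, so are $\match{\TRS{R}}$ and $\MATCHRT^c(\TRS{S})$, and thus their union $\TRS{U}$. The closure result for left-linear TRSs~\cite{G98} then applies and gives $\Succ{\TRS{U}}{\lift{L}{0}} \subseteq \Lg{\TA{A}}$.

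Next I would pass from the relative reducts to the reducts of the union. By definition $\to_{\matchRT{\TRS{R}}{\TRS{S}}{c}}$ equals ${\to_{\MATCHRT^c(\TRS{S})}^* \cdot \to_{\match{\TRS{R}}} \cdot \to_{\MATCHRT^c(\TRS{S})}^*}$, which is contained in $\to_{\TRS{U}}^*$; hence the same inclusion holds for the reflexive transitive closures and therefore $\Succ{\matchRT{\TRS{R}}{\TRS{S}}{c}}{\lift{L}{0}} \subseteq \Succ{\TRS{U}}{\lift{L}{0}} \subseteq \Lg{\TA{A}}$. Since every transition of $\TA{A}$ carries a function symbol of height at most~$c$, a routine induction on a successful run shows that every term in $\Lg{\TA{A}}$ uses only function symbols of height at most~$c$. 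Combining the two facts, all function symbols in every term of $\Succ{\matchRT{\TRS{R}}{\TRS{S}}{c}}{\lift{L}{0}}$ have height at most~$c$, which is precisely the defining condition of Definition~\ref{DEF:match-RT-boundedness} witnessing that $\REL{\TRS{R}}{\TRS{S}}$ is match-RT-bounded for~$L$ by~$c$.

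I expect the only delicate point to be the application of the cited closure result, since $\TRS{U}$ is an infinite TRS (it contains a relabelled rule for every height). This causes no genuine trouble: a term recognised by $\TA{A}$ never contains a symbol of height exceeding~$c$, so the rules of $\TRS{U}$ of larger height can never fire inside a recognised term during the completion argument, while the compatibility hypothesis already covers the relevant rules of height at most~$c$. The remaining steps---the transfer of left-linearity and the height bound on $\Lg{\TA{A}}$---are routine.
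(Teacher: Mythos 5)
Your proof is correct and takes essentially the same route as the paper: the paper's own proof is a one-liner citing Definition~\ref{DEF:match-RT-boundedness} together with the fact that compatible tree automata are closed under left-linear rewriting, which is precisely the argument you spell out (compatibility with $\matchRT{\TRS{R}}{\TRS{S}}{c}$ is by definition compatibility with the union $\match{\TRS{R}} \cup \MATCHRT^c(\TRS{S})$, whose left-linearity is inherited from $\REL{\TRS{R}}{\TRS{S}}$, the relative reducts are contained in the union's reducts and hence in $\Lg{\TA{A}}$, and every term accepted by $\TA{A}$ carries only symbols of height at most~$c$). Your closing worry about the enrichment being an infinite TRS is harmless, exactly as you conclude: the rule-wise closure argument behind the cited fact nowhere uses finiteness, and the paper itself applies compatibility to the infinite system $\match{\TRS{R}}$ throughout.
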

\begin{proof}
Easy consequence of Definition~\ref{DEF:match-RT-boundedness}
and the fact that compatible tree automata are closed under
left-linear rewriting.
\end{proof}

In case of non-left-linear TRSs we obtain the following result.

\begin{lem}
Let $\REL{\TRS{R}}{\TRS{S}}$ be a relative TRS, $L$ a language, and
$c \in \NAT$. Let $\TA{A}$ be a quasi-deterministic and raise-consistent
tree automaton. If $\TA{A}$ is compatible with $\matchRT{\TRS{R}}{\TRS{S}}{c}$
and $\lift{L}{0}$ such that the height of each function symbol occurring in
transitions in $\TA{A}$ is at most~$c$ then $\REL{\TRS{R}}{\TRS{S}}$ is
match-raise-RT-bounded for~$L$.
\end{lem}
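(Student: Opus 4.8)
The plan is to mirror the proof of the preceding (left-linear) lemma, replacing the closure result for ordinary left-linear rewriting by its counterpart for non-left-linear rewriting via quasi-deterministic, raise-consistent tree automata from~\cite{KM07}. First I would unfold the definitions. By the convention fixed above, compatibility of $\TA{A}$ with the relative TRS $\matchRT{\TRS{R}}{\TRS{S}}{c}$ means compatibility, in the modified quasi-deterministic sense, with the union TRS $\match{\TRS{R}} \cup \MATCHRT^c(\TRS{S})$ and $\lift{L}{0}$. Since the relation $\rto_{\matchRT{\TRS{R}}{\TRS{S}}{c}}$ is by definition $\rto^*_{\MATCHRT^c(\TRS{S})} \cdot \rto_{\match{\TRS{R}}} \cdot \rto^*_{\MATCHRT^c(\TRS{S})}$, each of its steps is an $\rto$-step using a rule of this union. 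Consequently the set $\Succ[\rto]{\matchRT{\TRS{R}}{\TRS{S}}{c}}{\lift{L}{0}}$ is contained in the set of $\rto$-reducts of $\lift{L}{0}$ with respect to $\match{\TRS{R}} \cup \MATCHRT^c(\TRS{S})$.

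The central step is then to invoke the closure property. Because $\TA{A}$ is quasi-deterministic and raise-consistent and compatible with $\match{\TRS{R}} \cup \MATCHRT^c(\TRS{S})$ and $\lift{L}{0}$, every term reachable from $\lift{L}{0}$ under $\rto$-rewriting with this TRS is accepted by $\TA{A}$, so that $\Succ[\rto]{\matchRT{\TRS{R}}{\TRS{S}}{c}}{\lift{L}{0}} \subseteq \Lg{\TA{A}}$. Raise-consistency is precisely the ingredient that makes this work: as discussed in the preliminaries, it guarantees that the implicit raise-steps folded into the relation $\rto$ are absorbed by the automaton.

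To finish, I would exploit the height bound on the transitions of $\TA{A}$. Since every function symbol occurring in a transition of $\TA{A}$ has height at most $c$, any ground term accepted by $\TA{A}$ can only be built from symbols in $\SIG{F}_{\{0,\dots,c\}}$, whence every function symbol in such a term has height at most $c$; that is, $\Lg{\TA{A}} \subseteq \GTERMS{\SIG{F}_{\{0,\dots,c\}}}$. Combining this with the inclusion above shows that all function symbols occurring in terms of $\Succ[\rto]{\matchRT{\TRS{R}}{\TRS{S}}{c}}{\lift{L}{0}}$ have height at most $c$, which is exactly the defining condition for $\REL{\TRS{R}}{\TRS{S}}$ to be match-raise-RT-bounded for $L$ by~$c$.

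The only genuinely nontrivial ingredient is the closure result of~\cite{KM07} for non-left-linear rewriting, applied to the union TRS $\match{\TRS{R}} \cup \MATCHRT^c(\TRS{S})$; everything else is unfolding of definitions together with a height inspection. The main point to verify is that the hypotheses of that closure theorem---quasi-determinism, raise-consistency, and compatibility in the modified sense---are met by $\TA{A}$ for the union TRS, and that raise-consistency indeed discharges the implicit raise-steps carried by $\rto$. Once this is secured, both the inclusion into $\Lg{\TA{A}}$ and the subsequent height argument are routine.
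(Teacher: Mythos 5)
Your proposal is correct and follows essentially the same route as the paper, whose proof is a one-line appeal to the fact that quasi-deterministic, raise-consistent, compatible tree automata are closed under $\rto$-rewriting. You simply spell out what the paper leaves implicit: unfolding compatibility with the relative TRS to compatibility with the union $\match{\TRS{R}} \cup \MATCHRT^c(\TRS{S})$, applying the closure result of~\cite{KM07} (with raise-consistency absorbing the implicit raise-steps), and then reading off the height bound from the transitions of $\TA{A}$.
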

\begin{proof}
Straightforward by using the fact that quasi-deterministic, raise-consistent
and compatible tree automata are closed under rewriting.
\end{proof}

To prove that a relative TRS $\REL{\TRS{R}}{\TRS{S}}$ is
match(-raise)-RT-bounded for a set of terms~$L$ we construct a
(quasi-deterministic and raise-consistent) tree automaton
$\TA{A} = (\SIG{F},Q,Q_f,\Delta)$ that is compatible with the rewrite rules
of $\matchRT{\TRS{R}}{\TRS{S}}{c}$ and $\lift{L}{0}$. Since the set
$\Succ{\matchRT{\TRS{R}}{\TRS{S}}{c}}{\lift{L}{0}}$ need not be
regular, even for left-linear $\TRS{R}$ and $\TRS{S}$ and regular~$L$
(see~\cite{GHWZ07}) we cannot hope to give an exact automaton construction.
The general idea~\cite{G98,GHWZ07} is to look for violations of the
compatibility requirement: $l\sigma \to_\Delta^* q$
($l\sigma \to_{\Delta_d}^* q$) and $r\sigma \not\to_\Delta^* q$
for some rewrite rule $l \to r$, state substitution
$\sigma\colon \Var(l) \to Q$ ($\sigma\colon \Var(l) \to Q_d$),
and state $q \in Q$ ($q \in Q_d$). Then we add new states and transitions
to the current automaton to ensure $r\sigma \to_\Delta^* q$. After
$r\sigma \to_\Delta^* q$ has been established, we repeat this process until
a (quasi-deterministic, raise-consistent, and) compatible automaton is
obtained. Note that this may never happen if new states are repeatedly added.
To guess an appropriate~$c$ we start with $c = 0$. As soon as a new transition
$f_d(q_1,\ldots,q_n) \to q$ with $d >_\Nat c$ is added to the constructed tree
automaton, we set $c = d$ and proceed with the construction.

\begin{exa}
We show that the relative TRS $\REL{\TRS{R}}{\TRS{S}}$
of Example~\ref{EXPL:RT enrichment} over the signature
$\SIG{F} = \{\m{nil},\m{cons},\m{rev},\m{rev}'\}$
is match-RT-bounded for~$\FTERMS$ by constructing a compatible
tree automaton. As starting point we consider the initial tree automaton
\begin{xalignat*}{4}
\m{nil}_0 &\to 1 &
\m{cons}_0(1,1) &\to 1 &
\m{rev}_0(1) &\to 1 &
\m{rev}'_0(1,1) &\to 1
\end{xalignat*}
which accepts all ground terms over the enriched signature $\lift{\SIG{F}}{0}$.
The first compatibility violation we consider is caused by the rewrite rule
$\m{rev}_0(x) \to_{\match{\TRS{R}}} \m{rev}'_1(x,\m{nil}_1)$. We have
$\m{rev}_0(1) \to 1$ but not $\m{rev}'_1(1,\m{nil}_1) \to^* 1$. To solve
this violation we add the transitions $\m{nil}_1 \to 2$ and
$\m{rev}'_1(1,2) \to 1$. The compatibility violation caused by the rewrite
rule $\m{rev}'_1(\m{nil}_0,y) \to_{\MATCHRT^1(\TRS{S})} y$ and
the derivation $\m{rev}'_1(\m{nil}_0,2) \to^* 1$ is solved by adding the
transition $2 \to 1$. Note that we are currently using $\MATCHRT^1(\TRS{S})$
because the maximal height of a function symbol occurring in the underlying
tree automaton is~$1$. Next we consider the compatibility violation
$\m{rev}'_1(\m{cons}_0(1,1),2) \to^* 1$ but
$\m{rev}'_1(1,\m{cons}_1(1,2)) \not\to^* 1$ induced by the rule
$\m{rev}'_1(\m{cons}_0(x,y),z) \to_{\MATCHRT^1(\TRS{S})}
\m{rev}'_1(y,\m{cons}_1(x,z))$. In order to ensure
$\m{rev}'_1(1,\m{cons}_1(1,2)) \to^* 1$ we reuse the transition
$\m{rev}'_1(1,2) \to 1$ and add the new transition
$\m{cons}_1(1,2) \to 2$. Finally,
$\m{rev}'_0(\m{cons}_1(1,2),1) \to^* 1$ and
$\m{rev}'_0(\m{cons}_1(x,y),z) \to_{\MATCHRT^1(\TRS{S})}
\m{rev}'_1(y,\m{cons}_1(x,z))$
give rise to the transition $\m{cons}_1(1,1) \to 2$. After
this step, the obtained tree automaton is compatible with
$\matchRT{\TRS{R}}{\TRS{S}}{1}$. Hence $\REL{\TRS{R}}{\TRS{S}}$
is match-RT-bounded for~$\FTERMS$ by~$1$. Due to
Theorem~\ref{THM:match-RT-bounds => linear complexity} we
can conclude that $\REL{\TRS{R}}{\TRS{S}}$ admits at most
linear complexity. We remark that the ordinary match-bound technique
(Theorem~\ref{THM:match(-raise)-bounds => linear complexity})
fails on $\REL{\TRS{R}}{\TRS{S}}$ because $\TRS{R} \cup \TRS{S}$ induces
quadratic complexity:
\begin{xalignat*}{1}
\m{rev}^n(x)\sigma^m &\to
\m{rev}^{n-1}(\m{rev}'(x,\m{nil}))\sigma^m \to^m
\m{rev}^{n-1}(\m{rev}'(\m{nil},x))\sigma^m \\
&\to \m{rev}^{n-1}(x)\sigma^m \to^{(n-1)(m+2)}
x\sigma^m
\end{xalignat*}
with $\sigma = \{x \mapsto \m{cons}(y,x)\}$) for all $n,m \geqslant 1$.
\end{exa}

\section{Assessment}
\label{ASS:main}

In this section we compare the complexity proving power
of the direct and the modular setting on a theoretical level.
Gains in power in practice are reported in Section~\ref{EXP:main}.
In the first part of this section we show that for TMIs of dimension one,
i.e.\ SLIs, in theory both approaches are equivalent but
in the general case the modular setting allows TMIs of smaller
dimensions to succeed. Since the dimension of the TMI corresponds to the
degree of the polynomial bound the modular setting allows to establish tighter
bounds. The second part of this section shows that the
modular setting is strictly more powerful than the direct one, i.e.,
there are systems where the modular setting admits a complexity proof but
all involved methods cannot succeed on its own in the direct setting.
To make the presentation easier we assume the original problems to be 
standard (in contrast to relative) TRSs. This has no effect on the results.
The next lemma states that for SLIs in theory there is no difference in power
between the two settings.

\begin{lem}
\label{LEM:pow1}
Let~$\TRS{R} = \TRS{R}_1 \cup \TRS{R}_2$ be a TRS. There exists
an~SLI~\ALG{M} compatible with~$\TRS{R}$ if and only if there
exist SLIs~$\ALG{M}_1$ and~$\ALG{M}_2$ such that
$\ALG{M}_1$ is compatible with $\REL{\TRS{R}_1}{\TRS{R}_2}$ and
$\ALG{M}_2$ is compatible with $\REL{\TRS{R}_2}{\TRS{R}_1}$.
\end{lem}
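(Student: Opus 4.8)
The plan is to reduce orientation by an SLI to two elementary conditions on a rule, and then to observe that these conditions combine additively. Because an SLI has dimension one, the triangular shape together with the monotonicity requirement ${F_i}_{(1,1)} \geqslant 1$ forces every coefficient $F_i$ to equal $1$, and the orders $\relgt_\ALG{M}, \relge_\ALG{M}$ restrict on the carrier to $>_\Nat, \geqslant_\Nat$. Hence the interpretation of a term $t$ under an assignment $\alpha$ is $[\alpha]_\ALG{M}(t) = \sum_x |t|_x\,\alpha(x) + W_\ALG{M}(t)$, where $|t|_x$ is the number of occurrences of $x$ in $t$ and $W_\ALG{M}(t) = \sum_f |t|_f\, w_\ALG{M}(f)$ collects the constant weights $w_\ALG{M}(f) \in \Nat$. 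The first step I would carry out is to record from this formula the characterization $l \relgt_\ALG{M} r$ iff $|l|_x \geqslant |r|_x$ for every variable $x$ and $W_\ALG{M}(l) > W_\ALG{M}(r)$, and $l \relge_\ALG{M} r$ iff the same variable condition holds and $W_\ALG{M}(l) \geqslant W_\ALG{M}(r)$. The point to emphasize is that the variable condition is merely non-duplication of the rule $l \to r$ and is independent of the chosen SLI, whereas the weight condition is additive in the symbol weights.

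The ``only if'' direction is immediate: I would take $\ALG{M}_1 = \ALG{M}_2 = \ALG{M}$. As $\relgt_\ALG{M} \subseteq \relge_\ALG{M}$, the assumption $\TRS{R}_1 \cup \TRS{R}_2 \subseteq \relgt_\ALG{M}$ yields $\TRS{R}_1 \subseteq \relgt_\ALG{M}$, $\TRS{R}_2 \subseteq \relge_\ALG{M}$ (and symmetrically), so $\ALG{M}$ is compatible with both $\REL{\TRS{R}_1}{\TRS{R}_2}$ and $\REL{\TRS{R}_2}{\TRS{R}_1}$.

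For the ``if'' direction I would build $\ALG{M}$ by summing weights, setting $w_\ALG{M}(f) = w_{\ALG{M}_1}(f) + w_{\ALG{M}_2}(f)$ for every $f \in \SIG{F}$; this is again a legal SLI (sums of naturals are natural, and the coefficients stay $1$), and additivity gives $W_\ALG{M}(t) = W_{\ALG{M}_1}(t) + W_{\ALG{M}_2}(t)$ for all $t$. Every rule of $\TRS{R}_1 \cup \TRS{R}_2$ satisfies the variable condition, since each is at least weakly oriented by one of the two given SLIs. For $l \to r \in \TRS{R}_1$ I would add the strict inequality $W_{\ALG{M}_1}(l) > W_{\ALG{M}_1}(r)$ from $\ALG{M}_1$ to the weak inequality $W_{\ALG{M}_2}(l) \geqslant W_{\ALG{M}_2}(r)$ from $\ALG{M}_2$ to obtain $W_\ALG{M}(l) > W_\ALG{M}(r)$, whence $l \relgt_\ALG{M} r$; the case $l \to r \in \TRS{R}_2$ is symmetric. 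Thus $\ALG{M}$ strictly orients all of $\TRS{R}$.

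I do not expect a genuine obstacle here: the whole argument rests on getting the orientation characterization right, and in particular on noticing that the variable (non-duplication) condition is common to all rules involved, which is precisely what makes the additive combination of a strict and a weak weight inequality come out strict. The only place demanding a little care is the justification of that characterization from the closed form of $[\alpha]_\ALG{M}$, where one checks necessity by setting $\alpha = 0$ and by letting a single $\alpha(x)$ grow.
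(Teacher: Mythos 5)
Your proposal is correct and follows essentially the same route as the paper: the forward direction takes $\ALG{M}_1 = \ALG{M}_2 = \ALG{M}$, and the reverse direction builds $\ALG{M}$ by summing the constant parts, $f_\ALG{M}(x_1,\dots,x_m) = x_1 + \dots + x_m + f_1 + f_2$, which is exactly your weight-addition construction. The only difference is that you spell out the verification (via the characterization of $\relgt_\ALG{M}$ and $\relge_\ALG{M}$ in terms of variable counts and additive weights) that the paper dismisses as ``straightforward to check.''
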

\begin{proof}
The implication from left to right obviously holds since \ALG{M} is a
suitable candidate for $\ALG{M}_1$ and $\ALG{M}_2$. For the reverse
direction we construct an SLI~\ALG{M} compatible with~\TRS{R} based
on the SLIs $\ALG{M}_1$ and $\ALG{M}_2$. 
Let $f_{\ALG{M}_1}(x_1, \dots,x_m) = x_1 + \dots + x_m + f_1$ and
$f_{\ALG{M}_2}(x_1, \dots,x_m) = x_1 + \dots + x_m + f_2$.
It is straightforward to check that 
$
f_\ALG{M}(x_1, \dots, x_m) =
x_1 + \dots+ x_m + f_1 + f_2
$
for any $f \in \SIG{F}$ yields an SLI~$\ALG{M}$ compatible with~\TRS{R}.
\end{proof}

Due to Theorem~\ref{THM:modeq} the complexity is not affected when using
the modular setting. Hence when using SLIs in theory both approaches can
prove the same bounds. But experiments in Section~\ref{EXP:main} show that
in practice proofs are easier to find in the modular setting since, e.g.,
the coefficients of the interpretations can be chosen smaller (cf.\ the
proof of Lemma~\ref{LEM:pow1}).
If TMIs of larger dimensions are applied then just the only-if direction
of Lemma~\ref{LEM:pow1} holds. This is shown with the help of the next
example.

\begin{exa}
\label{EX:power_tmi}
Consider the TRS~\TRS{R} (\tpdb{Strategy\_removed\_AG01/\#4.21})
consisting of the rules:
\begin{xalignat*}{4}
1\colon\m{f}(\m{1})    &\to \m{f}(\m{g}(\m{1}))&
2\colon\m{f}(\m{f}(x)) &\to \m{f}(x)&
3\colon\m{g}(\m{0})    &\to \m{g}(\m{f}(\m{0}))&
4\colon\m{g}(\m{g}(x)) &\to \m{g}(x)
\end{xalignat*}
The TMIs~$\ALG{M}_1$
\begin{xalignat*}{4}
\m{f}_{\ALG{M}_1}(\vec x) &= \BM
1\NE 1\NR
0\NE 0\NR
\EM \vec x +
\BM
0\NR
1\NR
\EM
&
\m{g}_{\ALG{M}_1}(\vec x) &= \BM
1\NE 0\NR
0\NE 0\NR
\EM \vec x
&
\m{0}_{\ALG{M}_1} &= \BM
0\NR
0\NR
\EM
&
\m{1}_{\ALG{M}_1} &= \BM
0\NR
1\NR
\EM
\intertext{and~$\ALG{M}_2$}
\m{g}_{\ALG{M}_2}(\vec x) &= \BM
1\NE 1\NR
0\NE 0\NR
\EM \vec x +
\BM
0\NR
1\NR
\EM
&
\m{f}_{\ALG{M}_2}(\vec x) &= \BM
1\NE 0\NR
0\NE 0\NR
\EM \vec x
&
\m{1}_{\ALG{M}_2} &= \BM
0\NR
0\NR
\EM
&
\m{0}_{\ALG{M}_2} &= \BM
0\NR
1\NR
\EM
\end{xalignat*}
show quadratic upper bounds on the derivational complexity of the systems
$\REL{\{3,4\}}{\{1,2\}}$ and $\REL{\{1,2\}}{\{3,4\}}$,
respectively. Theorem~\ref{THM:modeq} establishes a quadratic upper bound
for \TRS{R}.
\end{exa}

Although for the TRS in Example~\ref{EX:power_tmi} TMIs of dimension two
could establish a quadratic upper bound on the derivational complexity
in the modular setting, they cannot do so in the direct setting because
of the next lemma. (We remark that there exist TMIs of
dimension three that are compatible with this TRS).

\begin{lem}
\label{LEM:4.21}
The TRS~\tpdb{Strategy\_removed\_AG01/\#4.21} does not admit a TMI of
dimension two compatible with it.
\end{lem}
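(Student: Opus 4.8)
The plan is to assume for contradiction that a TMI $\ALG{M}$ of dimension two is compatible with the system, and to extract two mutually exclusive numerical constraints from rules~$1$ and~$3$ alone. First I would pin down the shape of the interpretations of $\m{f}$ and $\m{g}$. Upper-triangularity forces the diagonal entries to be at most~$1$, while monotonicity of $\relgt_\ALG{M}$ forces the $(1,1)$-entries to be at least~$1$; hence each matrix has $(1,1)$-entry exactly~$1$ and $(2,2)$-entry in $\{0,1\}$. I may therefore assume
\[
\m{f}_\ALG{M}(\vec x) = \BM 1 \NE a \NR 0 \NE p \NR \EM \vec x + \vec f
\qquad
\m{g}_\ALG{M}(\vec x) = \BM 1 \NE b \NR 0 \NE q \NR \EM \vec x + \vec g
\]
with $a,b \in \NAT$ and $p,q \in \{0,1\}$, writing $\vec f = (f_1,f_2)$, $\vec g = (g_1,g_2)$, $\m{0}_\ALG{M} = (z_1,z_2)$, and $\m{1}_\ALG{M} = (e_1,e_2)$ for the constant parts. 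Note that rules~$1$ and~$3$ are ground, so compatibility just compares constant vectors, and the strict part only requires strict decrease in the first component.

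Next I would evaluate the two ``growth'' rules. For rule~$1$, $\m{f}(\m{1}) \to \m{f}(\m{g}(\m{1}))$, expanding both sides and cancelling the common first-component terms ($e_1$ and $f_1$) leaves
\[
e_2\bigl(a(1-q) - b\bigr) > g_1 + a\,g_2 .
\]
Symmetrically, for rule~$3$, $\m{g}(\m{0}) \to \m{g}(\m{f}(\m{0}))$, the first-component comparison reduces to
\[
z_2\bigl(b(1-p) - a\bigr) > f_1 + b\,f_2 .
\]
In each case the right-hand side is a nonnegative integer, so each left-hand side is a positive integer; since $e_2, z_2 \geq 0$, both the outer factor and the bracketed factor must be strictly positive. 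From $a(1-q) - b \geq 1$ I read off $q = 0$ (otherwise the bracket is $-b \leq 0$) and then $a > b$; the second inequality forces $p = 0$ and $b > a$. As $a > b$ and $b > a$ cannot hold simultaneously, no dimension-two TMI can be compatible with the TRS.

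I expect the one delicate point to be this positivity step: one must argue carefully that a product of nonnegative integers exceeding a nonnegative integer forces \emph{both} factors to be positive, and in particular that the $(2,2)$-entries $p$ and $q$ must vanish — it is precisely the vanishing of these diagonal entries that collapses $a(1-q)-b$ and $b(1-p)-a$ to the contradictory pair $a > b$, $b > a$. It is worth remarking that rules~$2$ and~$4$ are not needed for the contradiction; they only constrain the constant parts $\vec f$ and $\vec g$, so the incompatibility is already forced by the two rules that couple $\m{f}$ and $\m{g}$.
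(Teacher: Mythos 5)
Your proof is correct, but it follows a genuinely different route from the paper's. The paper gives no mathematical argument at all for Lemma~\ref{LEM:4.21}: the authors encode the existence of a compatible dimension-two TMI for \tpdb{Strategy\_removed\_AG01/\#4.21} as a set of arithmetic constraints and let the SMT solver \MINISMT{} certify their unsatisfiability, pointing to a web site for the details. You instead give a self-contained pen-and-paper argument, and it checks out: upper-triangularity plus monotonicity do force the matrix shapes you state (with $a,b \in \NAT$, $p,q \in \{0,1\}$), and since rules~$1$ and~$3$ are ground, the first-component comparisons reduce exactly to $e_2\bigl(a(1-q)-b\bigr) > g_1 + a\,g_2 \geqslant 0$ and $z_2\bigl(b(1-p)-a\bigr) > f_1 + b\,f_2 \geqslant 0$, which over the naturals force $q=0$, $a>b$ and $p=0$, $b>a$, a contradiction. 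One phrasing caveat: strict orientation by $\relgt_\ALG{M}$ requires the first component to decrease strictly \emph{and} the second component to decrease weakly, so your first-component inequalities are only necessary conditions --- but that is all a proof by contradiction needs, so the argument is sound. As for what each approach buys: the solver-based proof requires no insight, uniformly handles all constraints (including rules~$2$ and~$4$ and the second-component conditions you never touch), and scales mechanically to other systems and dimensions, but it is not human-checkable without trusting the tool and the external certificate; your proof is independently verifiable inside the paper and exposes \emph{why} dimension two fails --- the symmetric growth coupling of $\m{f}$ and $\m{g}$ through the two ground rules --- information the unsatisfiability verdict does not reveal.
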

\begin{proof}
To address all possible interpretations we extracted the
set of constraints that represent a TMI of
dimension two compatible with the TRS~\tpdb{Strategy\_removed\_AG01/\#4.21}.
\MINISMT~\cite{ZM10} can detect unsatisfiability of these constraints.
Details of this proof can be found at the web site in Footnote~\ref{FOO:web}
on page~\pageref{FOO:web}.
\end{proof}

The next result shows that any direct proof transfers into the modular
setting without increasing the bounds on the complexity.

\begin{lem}
\label{LEM:pow}
Let~$\relgt$ be a finitely branching rewrite relation
and let~$\TRS{R} = \TRS{R}_1 \cup \TRS{R}_2$ be a TRS
compatible with~$\relgt$.
Then there exist complexity pairs $(\relgt_1,\relge_1)$ and
$(\relgt_2,\relge_2)$ which are
compatible with the relative TRSs
$\REL{\TRS{R}_1}{\TRS{R}_2}$ and
$\REL{\TRS{R}_2}{\TRS{R}_1}$, respectively.
Furthermore for any language~$L$ we have
$\cp{n}{\relgt}{L} = \Theta(\cp{n}{\relgt_1}{L} + \cp{n}{\relgt_2}{L})$.
\end{lem}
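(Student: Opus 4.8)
The plan is to use one single complexity pair for both relative systems, obtained from the given relation by taking closures. Concretely, I would set $\relgt_1 = \relgt_2 := {\relgt}^{+}$ and $\relge_1 = \relge_2 := {\relgt}^{*}$, where ${\relgt}^{+}$ and ${\relgt}^{*}$ denote the transitive and the reflexive–transitive closure of $\relgt$. Closure under contexts and substitutions is inherited from $\relgt$, so both are again rewrite relations, and the complexity-pair conditions ${\relge_i \cdot \relgt_i} \subseteq {\relgt_i}$ and ${\relgt_i \cdot \relge_i} \subseteq {\relgt_i}$ reduce to the obvious inclusions ${\relgt}^{*} \cdot {\relgt}^{+} \subseteq {\relgt}^{+}$ and ${\relgt}^{+} \cdot {\relgt}^{*} \subseteq {\relgt}^{+}$. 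Since $\TRS{R}_1 \cup \TRS{R}_2 \subseteq {\relgt} \subseteq {\relgt}^{+} \subseteq {\relgt}^{*}$, compatibility of $\REL{\TRS{R}_1}{\TRS{R}_2}$ with $(\relgt_1,\relge_1)$ and of $\REL{\TRS{R}_2}{\TRS{R}_1}$ with $(\relgt_2,\relge_2)$ is then immediate in both directions.

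The complexity identity comes almost for free. The key observation is that $\dl{t}{{\relgt}^{+}} = \dl{t}{\relgt}$ for every term~$t$: every $\relgt$-step is a single ${\relgt}^{+}$-step, so $\dl{t}{{\relgt}^{+}} \geqslant \dl{t}{\relgt}$, while expanding each ${\relgt}^{+}$-step of a chain into its (at least one) constituent $\relgt$-steps shows $\dl{t}{{\relgt}^{+}} \leqslant \dl{t}{\relgt}$. Consequently $\cp{n}{\relgt_i}{L} = \cp{n}{\relgt}{L}$ for $i \in \{1,2\}$, so the right-hand side equals $2\,\cp{n}{\relgt}{L}$, which is $\Theta(\cp{n}{\relgt}{L})$; the two $\OO$-bounds witnessing this hold with multiplicative constants $1$ and $2$.

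The hard part — really the only point needing care — is verifying that ${\relgt}^{+}$ and ${\relgt}^{*}$ are finitely branching, as the definition of a complexity pair demands. This is where well-foundedness of $\relgt$ enters: if $\relgt$ is finitely branching and well-founded, then the tree of all $\relgt$-derivations issuing from a fixed term~$t$ is finitely branching and has no infinite path, so by K\"onig's lemma it is finite; hence $t$ has only finitely many ${\relgt}^{+}$- and ${\relgt}^{*}$-successors. I would justify the availability of well-foundedness by the remark following Lemma~\ref{LEM:bound}: if $\relgt$ is not well-founded then $\cp{n}{\relgt}{L} = \infty$ and the claimed $\Theta$-identity is degenerate, so for the interesting case we may assume $\relgt$ well-founded. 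I would also note that in every concrete instantiation (matrix or arctic interpretations) $\relgt$ is already transitive, whence ${\relgt}^{+} = {\relgt}$ and one may simply take $\relgt_i = {\relgt}$ and $\relge_i = {\relgt} \cup {=}$, avoiding the closures altogether.
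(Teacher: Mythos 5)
Your route is genuinely different from the paper's, and part of it is an improvement. The paper's proof is a one-liner: for both $i$ it sets $(\relgt_i,\relge_i) = (\relgt,=)$, observes that this is a complexity pair, and gets the $\Theta$-identity from $\dl{t}{\relgt_1} + \dl{t}{\relgt_2} = 2\cdot\dl{t}{\relgt}$ --- no closures, no K\"onig's lemma, no well-foundedness anywhere. Note, however, that with $\relge_1 = {=}$ the weak compatibility requirement $\TRS{R}_2 \subseteq {\relge_1}$ fails for any nontrivial $\TRS{R}_2$, a point the paper's proof passes over in silence; your replacement of $=$ by ${\relgt}^*$ is aimed at exactly this defect, and your verification of the pair conditions, of strict/weak compatibility, and of $\dl{t}{{\relgt}^+} = \dl{t}{\relgt}$ (hence the $\Theta$-identity with constants $1$ and $2$) is correct as far as it goes.

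The genuine gap is the one you flagged yourself: finite branching. The lemma assumes $\relgt$ finitely branching, \emph{not} well-founded, and your argument for assuming well-foundedness does not work, for two reasons. First, non-well-foundedness of $\relgt$ does not imply $\cp{n}{\relgt}{L} = \infty$: take $\relgt$ to be the smallest rewrite relation containing $\TRS{R}$ together with $\m{a} \relgt \m{f}(\m{a})$ for symbols $\m{a},\m{f}$ outside the signature of $\TRS{R}$; this $\relgt$ is finitely branching and contains $\TRS{R}$, but for any $L$ over the signature of $\TRS{R}$ the complexity $\cp{n}{\relgt}{L}$ stays finite, so nothing is ``degenerate''. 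Second, even in a degenerate case the lemma still asserts the \emph{existence} of complexity pairs, which by definition consist of finitely branching relations; in the example above $\m{a}$ has the infinitely many ${\relgt}^{+}$-successors $\m{f}^{k}(\m{a})$, so your $({\relgt}^{+},{\relgt}^{*})$ are simply not complexity pairs, and you offer no substitute. A repair that keeps your idea but avoids well-foundedness of $\relgt$ is to absorb only $\TRS{R}_2$-steps rather than all $\relgt$-steps: set $\relgt_1 = {\to_{\TRS{R}_2}^{*}} \cdot {\relgt} \cdot {\to_{\TRS{R}_2}^{*}}$ and $\relge_1 = {\to_{\TRS{R}_2}^{*}}$ (symmetrically for $i = 2$). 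Pair compatibility is absorption, $\TRS{R}_1 \subseteq {\relgt} \subseteq {\relgt_1}$ and $\TRS{R}_2 \subseteq {\relge_1}$, finite branching of ${\to_{\TRS{R}_2}^{*}}$ follows by K\"onig's lemma from the paper's standing assumption that TRSs are finite and terminating, and since ${\to_{\TRS{R}_2}} \subseteq {\relgt}$ every $\relgt_1$-step expands into at least one $\relgt$-step, so $\dl{t}{\relgt_1} = \dl{t}{\relgt}$ pointwise and the $\Theta$-identity follows exactly as in your argument. Your closing remark (for the transitive orders coming from matrix or arctic interpretations one may take $({\relgt},{\relgt}\cup{=})$) is correct, but transitivity is likewise not among the hypotheses, so it cannot carry the lemma as stated.
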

\begin{proof}
Fix~$i$. Let $(\relgt_i,\relge_i)$ be $(\relgt,=)$.
It is easy to see that $(\relgt,=)$ is a complexity pair because $\relgt$ and
$=$ are compatible rewrite relations. It remains to show that
for any term $t\in L$ we have
$\cp{n}{\relgt}{L} = \Theta(\cp{n}{\relgt_1}{L} + \cp{n}{\relgt_2}{L})$.
To this end we observe that
$\dl{t}{\relgt_1} + \dl{t}{\relgt_2} =
2\cdot\dl{t}{\relgt}$ for all terms~$t\in L$.
Basic properties of the $\OO$-notation yield the desired result.
\end{proof}

Due to Example~\ref{EX:power_tmi} and Lemmata~\ref{LEM:4.21} and~\ref{LEM:pow}
we obtain that the modular setting allows to use TMIs of smaller dimensions
than the direct one, which allows to establish tighter bounds. The next
example (together with Lemma~\ref{LEM:pow}) shows that in theory the modular
complexity setting is strictly more powerful than the direct one since it
allows to combine different criteria to establish an upper complexity bound
while any method on its own cannot succeed.

\begin{exa}
\label{EX:trafo}
Consider the TRS~\TRS{R} (\tpdb{Transformed\_CSR\_04/Ex16\_Luc06\_GM})
consisting of the rules:
\begin{xalignat*}{4}
1\colon\m{c} &\to \m{a} &
3\colon\m{mark}(\m{a}) &\to \m{a} &
5\colon\m{g}(x,y) &\to \m{f}(x,y)
\\
2\colon\m{c} &\to \m{b} &
4\colon\m{mark}(\m{b}) &\to \m{c} &
6\colon\m{g}(x,x) &\to \m{g}(\m{a},\m{b}) &
7\colon\m{mark}(\m{f}(x,y)) &\to \m{g}(\m{mark}(x),y)
\end{xalignat*}
The following SLI~$\ALG{M}$
\begin{xalignat*}{6}
\m{a}_\ALG{M} &= 0&
\m{b}_\ALG{M} &= 0&
\m{c}_\ALG{M} &= 1&
\m{f}_\ALG{M}(x,y) &= x+y&
\m{g}_\ALG{M}(x,y) &= x+y+1&
\m{mark}_\ALG{M}(x) &= x+2
\end{xalignat*}
allows Corollary~\ref{COR:sli} to transform the TRS~\TRS{R}
into the relative TRS $\REL{\{6,7\}}{\{1,2,3,4,5\}}$.
This problem can be split according to
Theorem~\ref{THM:modeq} into the two relative TRSs
$\REL{\{6\}}{\{1,2,3,4,5,7\}}$ and
$\REL{\{7\}}{\{1,2,3,4,5,6\}}$.
Match-bounds (Theorem~\ref{THM:match-raise-RT-bounds => linear complexity})
can show a linear upper bound on the first problem. The following
TMI~$\ALG{M}$
\begin{xalignat*}{3}
\m{a}_\ALG{M}  &= \BM
0\NR
0\NR
\EM
&
\m{f}_\ALG{M}(\vec x,\vec y) &=
\vec x +
\BM
1\NE 0\NR
0\NE 0\NR
\EM
\vec y +
\BM
0\NR
1\NR
\EM
&
\m{mark}_\ALG{M}(\vec x) &= \BM
1 \NE 1\NR
0 \NE 1\NR
\EM
\vec x
\end{xalignat*}
where $\m{a}_\ALG{M} = \m{b}_\ALG{M} = \m{c}_\ALG{M}$ and
$\m{f}_\ALG{M}(\vec x,\vec y) = \m{g}_\ALG{M}(\vec x,\vec y)$
gives a quadratic upper bound on the second relative TRS,
establishing a quadratic upper bound on the derivational complexity
of \TRS{R}. The quadratic bound is tight as $\TRS{R}$ admits derivations
\[
\m{mark}^n(x)\sigma^m \to^m
\m{mark}^{n-1}(x)\tau^m\gamma \to^m
\m{mark}^{n-1}(x)\sigma^m\gamma \to^{2m(n-1)}
x\sigma^m\gamma^n
\]
of length $2mn$ where
$\sigma = \{x \mapsto \m{f}(x,y)\}$,
$\tau = \{x \mapsto \m{g}(x,y)\}$, and
$\gamma = \{x \mapsto \m{mark}(x)\}$.
Last but not least we remark that none of the involved techniques
can establish an upper bound on its own. In case of match-bounds this
follows from the fact that~\TRS{R} admits quadratic derivational
complexity. The same reason also holds for Corollary~\ref{COR:sli}
because SLIs induce linear complexity bounds. Finally,
TMIs fail because they cannot orient the rewrite rule
$\m{g}(x,x) \to \m{g}(\m{a},\m{b})$.
\end{exa}

Hence we obtain the following corollary.

\begin{cor}
\label{COR:subsume}
The modular complexity setting is strictly more powerful than the direct~one.
\end{cor}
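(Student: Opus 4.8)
The plan is to split the statement into its two constituent directions: that the modular setting is \emph{at least} as powerful as the direct one, and that this inclusion is \emph{strict}. For the first direction I would simply invoke Lemma~\ref{LEM:pow}. Given any direct proof—a single finitely branching rewrite relation $\relgt$ compatible with a TRS $\TRS{R} = \TRS{R}_1 \cup \TRS{R}_2$—that lemma produces complexity pairs $(\relgt,{=})$ compatible with $\REL{\TRS{R}_1}{\TRS{R}_2}$ and $\REL{\TRS{R}_2}{\TRS{R}_1}$ whose combined complexity matches $\cp{n}{\relgt}{L}$ up to $\Theta$. Together with Theorem~\ref{THM:modeq} this shows that every direct complexity bound is reproduced in the modular framework without loss of tightness, giving the desired inclusion.

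For strictness the plan is to exhibit a single separating witness, namely the TRS \tpdb{Transformed\_CSR\_04/Ex16\_Luc06\_GM} of Example~\ref{EX:trafo}. There the modular framework establishes a quadratic upper bound: Corollary~\ref{COR:sli} (via a counting SLI) shifts rules $1$--$5$ into the relative component, Theorem~\ref{THM:modeq} splits the remainder into $\REL{\{6\}}{\{1,2,3,4,5,7\}}$ and $\REL{\{7\}}{\{1,2,3,4,5,6\}}$, and these two problems are discharged by match-bounds (Theorem~\ref{THM:match-raise-RT-bounds => linear complexity}) and a dimension-two TMI, respectively. Having this modular proof in hand, it remains only to certify that no single base method covered by the direct setting succeeds on the same system, and the strict separation then follows from the inclusion above.

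The main obstacle is precisely this last verification, which must be argued method by method. For match-bounds and SLIs I would use a quantitative incompatibility: Example~\ref{EX:trafo} exhibits derivations $\m{mark}^n(x)\sigma^m \to^{2mn} x\sigma^m\gamma^n$, so $\TRS{R}$ has derivational complexity $\Omega(n^2)$, whereas match-bounds force $\OO(n)$ by Theorem~\ref{THM:match(-raise)-bounds => linear complexity} and SLIs likewise induce only linear bounds; hence neither can apply directly. For TMIs the obstruction is instead orientational—no TMI can strictly orient $\m{g}(x,x) \to \m{g}(\m{a},\m{b})$, since the left- and right-hand sides collapse to the same interpretation once the duplicated variable forces equal column blocks—so a TMI cannot orient all rules in one go either. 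Combining these three observations shows that the modular proof has no direct counterpart, whence the inclusion from Lemma~\ref{LEM:pow} is strict and the corollary holds.
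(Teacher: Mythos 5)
Your proposal is correct and follows the paper's proof exactly: the paper derives the corollary from Lemma~\ref{LEM:pow} (every direct proof transfers to the modular setting with the same bound, via the complexity pair $(\relgt,{=})$) together with Example~\ref{EX:trafo} (where the modular combination of an SLI, match-bounds, and a TMI yields a quadratic bound, while match-bounds and SLIs are excluded by the $\Omega(n^2)$ lower bound and TMIs by their inability to orient $\m{g}(x,x) \to \m{g}(\m{a},\m{b})$ strictly). The only blemish is your stated mechanism for the TMI failure (``equal column blocks''): the clean argument is that assigning the zero vector to $x$ makes the interpretation of the left-hand side at most that of the right-hand side, so strict orientation is impossible --- but this is a fact the paper itself asserts without proof, and your overall structure matches the paper's.
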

\begin{proof}
By Lemma~\ref{LEM:pow} and Example~\ref{EX:trafo}.
\end{proof}

Next we consider the TRS \tpdb{Zantema\_04/z086}. The question about the
derivational complexity of it has been stated as problem \#105 on the RTA
LooP.\footnote{\ \url{http://rtaloop.mancoosi.univ-paris-diderot.fr}}

\begin{exa}
Consider the TRS~\TRS{R} (\tpdb{Zantema\_04/z086}) consisting of the
rules:
\begin{xalignat*}{3}
1\colon\m{a}(\m{a}(x)) &\to \m{c}(\m{b}(x))&
2\colon\m{b}(\m{b}(x)) &\to \m{c}(\m{a}(x))&
3\colon\m{c}(\m{c}(x)) &\to \m{b}(\m{a}(x))
\end{xalignat*}
Adian~\cite{A09} showed that $\TRS{R}$ admits at most quadratic
derivational complexity. Since the proof is based on a low-level
reasoning on the structure of~\TRS{R}, it is specific to this TRS
and challenging for automation. With our approach we cannot prove
the quadratic bound on the derivational complexity of $\TRS{R}$.
However, Corollary~\ref{COR:sli} permits to establish some progress.
Using an SLI counting $\m{a}$'s and $\m{b}$'s, it suffices to
determine the derivational complexity of $\REL{\{3\}}{\{1,2\}}$.
This means that $\m{c}(\m{c}(x)) \to \m{b}(\m{a}(x))$ relative to
the other rules dominates the derivational complexity of~\TRS{R}.
The benefit is that now, e.g., a TMI must only orient one rule
strictly and the other two rules weakly (compared to all three
rules strictly).
It has to be clarified if
the relative formulation of the problem can be used to simplify the proof
in~\cite{A09}.
\end{exa}

The next example shows that although the modular approach often allows to
establish lower bounds compared to the direct one, further criteria for
splitting TRSs should be investigated.

\begin{exa}
\label{EX:4.30}
Consider the TRS~\TRS{R} (\tpdb{SK90/4.30})
consisting of the following rules:
\begin{xalignat*}{3}
1\colon\,\m{f}(\m{nil}) &\to \m{nil} &
3\colon\,\m{f}(\m{nil} \app y) &\to \m{nil} \app \m{f}(y) &
5\colon\,\m{f}((x \app y) \app z) &\to \m{f}(x \app (y \app z))\\
2\colon\m{g}(\m{nil}) &\to \m{nil} &
4\colon\m{g}(x \app \m{nil}) &\to \m{g}(x) \app \m{nil} &
6\colon\m{g}(x \app (y \app z)) &\to \m{g}((x \app y) \app z)
\end{xalignat*}
In~\cite{MSW08} a TMI compatible with~\TRS{R} of
dimension four is given showing that the derivational complexity is
bounded by a polynomial of degree four. Using Theorem~\ref{THM:modeq}
with TMIs of dimension three yields a cubic upper bound, i.e.,
the TMI~$\ALG{M}_1$
\begin{xalignat*}{2}
\m{\circ}_{\ALG{M}_1}(\vec x,\vec y) &= \BM
1\NE 0\NE 0\NR
0\NE 0\NE 1\NR
0\NE 0\NE 1
\EM \vec x +
\vec y +
\BM
0\NR
0\NR
1\NR
\EM
&
\m{f}_{\ALG{M}_1}(\vec x) &= \BM
1\NE 1\NE 0\NR
0\NE 0\NE 1\NR
0\NE 0\NE 1
\EM \vec x +
\BM
0\NR
1\NR
0\NR
\EM
\\
\m{g}_{\ALG{M}_1}(\vec x) &= \BM
1\NE 0\NE 0\NR
0\NE 0\NE 1\NR
0\NE 0\NE 1
\EM \vec x
&
\m{nil}_{\ALG{M}_1} &= \BM
1\NR
1\NR
1\NR
\EM
\end{xalignat*}
yields a cubic upper bound on $\REL{\{1,2,3\}}{\{4,5,6\}}$.
So does the TMI~$\ALG{M}_2$ with
\begin{xalignat*}{4}
\m{f}_{\ALG{M}_2}(\vec x) &= \m{g}_{\ALG{M}_1}(\vec x)&
\m{g}_{\ALG{M}_2}(\vec x) &= \m{f}_{\ALG{M}_1}(\vec x)&
\m{\circ}_{\ALG{M}_2}(\vec x,\vec y) &= \m{\circ}_{\ALG{M}_1}(\vec y,\vec x)&
\m{nil}_{\ALG{M}_2} &= \m{nil}_{\ALG{M}_1}
\end{xalignat*}
for $\REL{\{4,5,6\}}{\{1,2,3\}}$. Our approach enables showing a lower
complexity than~\cite{MSW08} but the derivational complexity
of~\TRS{R} is quadratic (see~\cite{MSW08}).
The quadratic lower bound is justified as~\TRS{R} admits derivations
\[
\m{f}^n(x)\sigma^m \to^n
\m{nil} \circ \m{f}^n(x)\sigma^{m-1} \to^n
\cdots \to^n
x\sigma^m\tau^n
\]
of length $nm$ where
$\sigma = \{x \mapsto \m{nil} \circ x\}$ and
$\tau = \{x \mapsto \m{f}(x)\}$.
We stress that the recent approach in~\cite{W10} allows to establish a
quadratic upper bound. For a comment on the integration of this method
into our setting we refer to Section~\ref{CON:main}.
\end{exa}

\section{Implementation}
\label{IMP:main}

In Section~\ref{IMP:est} we first show how the various theorems from the
previous sections can be implemented to obtain \emph{some} complexity proof.
Afterwards Section~\ref{IMP:imp} is concerned
with lowering the bounds starting from an existing complexity proof.

\subsection{Establishing Bounds}
\label{IMP:est}

To estimate the complexity of a TRS $\TRS{R}$ with respect to a
language~$L$, we first transform $\TRS{R}$ into the relative TRS
$\REL{\TRS{R}}{\varnothing}$. Obviously
$
\cp{n}{\to_\TRS{R}}{L} =
\cp{n}{\to_\REL{\TRS{R}}{\varnothing}}{L}
$.
If the input already is a relative TRS this step is omitted.
Afterwards for a relative TRS~\REL{\TRS{R}}{\TRS{S}} we try to
establish a bound on the complexity of
$\REL{\TRS{R}_2}{(\TRS{R}_1\cup\TRS{S})}$ with respect to~$L$
for some $\TRS{R}_1$, $\TRS{R}_2$ with
$\TRS{R}_1 = \TRS{R} \setminus \TRS{R}_2$
and continue with the relative TRS
$\REL{\TRS{R}_1}{(\TRS{R}_2\cup\TRS{S})}$.
This step is executed repeatedly until the remaining problem equals
$\REL{\varnothing}{(\TRS{R}\cup\TRS{S})}$. Then the complexity of
$\REL{\TRS{R}}{\TRS{S}}$ with respect to~$L$ is obtained by summing up
all intermediate bounds. In order to establish a maximal number of
complexity proofs we run all techniques from Sections~\ref{MAT:main}
and~\ref{BOUNDS:main} in parallel and the first technique that can
shift some rules is used to achieve progress.

Note that the procedure sketched above contains an implicit application of
Theorem~\ref{THM:modeq}, i.e., some method immediately proves a bound
for $\REL{\TRS{R}_2}{(\TRS{R}_1\cup\TRS{S})}$ and leaves
$\REL{\TRS{R}_1}{(\TRS{R}_2\cup\TRS{S})}$ as open proof obligation.
In contrast to an explicit application of Theorem~\ref{THM:modeq},
here the method that establishes the bound on
$\REL{\TRS{R}_2}{(\TRS{R}_1\cup\TRS{S})}$ can select the decomposition
of~\TRS{R} into $\TRS{R}_1$ and $\TRS{R}_2$ which is beneficial for
performance. As an immediate consequence, proof trees degenerate to
lists (cf.\ Example~\ref{EX:linear}). In the following we describe
the presented approach more formal and refer to it as the
\emph{complexity framework}.

\begin{defi}
A \emph{complexity problem} (CP problem for short) is a pair
$(\REL{\TRS{R}}{\TRS{S}},L)$ consisting of a relative TRS
$\REL{\TRS{R}}{\TRS{S}}$ and a language~$L$.
\end{defi}

To operate on CP problems so called \emph{complexity processors} are
used. Similar as in the dependency pair framework we distinguish between
sound and complete processors. Here sound complexity processors
are used to prove an upper bound on the complexity of a given CP problem
whereas complete complexity processors are applied to derive
lower bounds on the complexity.

\begin{defi}
\label{DEF:CP processor}
A \emph{complexity processor} (CP processor for short) is a function
that takes a CP problem $(\REL{\TRS{R}}{\TRS{S}},L)$ as input and returns
a set of pairs
$\bigcup_{1\leqslant i\leqslant m}\{ ((\REL{\TRS{R}_i}{\TRS{S}_i},L_i),f_i)\}$
as output.%
\footnote{\
For reasons of readability we write pairs
$((\REL{\TRS{R}_i}{\TRS{S}_i},L_i),f_i)$
as triples
$(\REL{\TRS{R}_i}{\TRS{S}_i},L_i,f_i)$.
}
Here $(\REL{\TRS{R}_i}{\TRS{S}_i},L_i)$ is a complexity problem
and $f_i \colon \Nat \to \Nat$ for each $1\leqslant i\leqslant m$.
A complexity processor is \emph{sound} if
\[
\cp{n}{\to_{\REL{\TRS{R}}{\TRS{S}}}}{L} =
\OO(
f_1(n) + \cdots + f_m(n)
+ \cp{n}{\to_{\REL{\TRS{R}_1}{\TRS{S}_1}}}{L_1} + \cdots
+ \cp{n}{\to_{\REL{\TRS{R}_m}{\TRS{S}_m}}}{L_m}
)
\]
and it is called \emph{complete} if
\[
\cp{n}{\to_{\REL{\TRS{R}}{\TRS{S}}}}{L} =
\Omega(
f_1(n) + \cdots + f_m(n)
+ \cp{n}{\to_{\REL{\TRS{R}_1}{\TRS{S}_1}}}{L_1} + \cdots
+ \cp{n}{\to_{\REL{\TRS{R}_m}{\TRS{S}_m}}}{L_m}
)
\]
holds.
\end{defi}

In the sequel $\zero$ denotes the constant zero function,
i.e., $\zero\colon \Nat \to \Nat$ with $\zero(n) = 0$.
Next we list some CP processors that can be derived from the previous
sections. The first one is based on complexity pairs and can, e.g., be
implemented by Theorems~\ref{THM:tmi} and~\ref{THM:ami}.

\begin{thm}
\label{THM:cp processor}
The CP processor
\[
(\REL{\TRS{R}}{\TRS{S}},L) \mapsto
\begin{cases}
\{(\REL{\TRS{R}_1}{(\TRS{R}_2 \cup \TRS{S})},L,f)\}
& \text{if $\REL{\TRS{R}_2}{(\TRS{R}_1\cup\TRS{S})}$ is compatible}\\[-0.5ex]
& \text{with a complexity pair $(\relgt,\relge)$}\\
\{(\REL{\TRS{R}}{\TRS{S}},L,\zero)\} & \text{otherwise}
\end{cases}
\]
where $\TRS{R} = \TRS{R}_1 \cup \TRS{R}_2$,
and $f(n) = \cp{n}{\relgt}{L}$ is sound.
\end{thm}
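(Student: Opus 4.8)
The plan is to check soundness separately for the two cases in the processor's definition, with essentially all the work residing in the first case. In the ``otherwise'' case the processor returns the single triple $(\REL{\TRS{R}}{\TRS{S}},L,\zero)$, so the soundness condition of Definition~\ref{DEF:CP processor} instantiates to $\cp{n}{\to_\REL{\TRS{R}}{\TRS{S}}}{L} = \OO(\zero(n) + \cp{n}{\to_\REL{\TRS{R}}{\TRS{S}}}{L})$, which holds trivially since the right-hand side already contains the left-hand side as a summand.

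For the first case the processor returns the single triple $(\REL{\TRS{R}_1}{(\TRS{R}_2\cup\TRS{S})},L,f)$ with $f(n) = \cp{n}{\relgt}{L}$, so I must establish
\[
\cp{n}{\to_\REL{\TRS{R}}{\TRS{S}}}{L}
= \OO\big(f(n) + \cp{n}{\to_\REL{\TRS{R}_1}{(\TRS{R}_2\cup\TRS{S})}}{L}\big).
\]
First I would apply Theorem~\ref{THM:modeq} to the decomposition $\TRS{R} = \TRS{R}_1\cup\TRS{R}_2$, which yields
\[
\cp{n}{\to_\REL{\TRS{R}}{\TRS{S}}}{L}
= \Theta\big(
\cp{n}{\to_\REL{\TRS{R}_1}{(\TRS{R}_2\cup\TRS{S})}}{L}
+ \cp{n}{\to_\REL{\TRS{R}_2}{(\TRS{R}_1\cup\TRS{S})}}{L}\big).
\]
It then remains to bound the second summand by $f(n)$. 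Since the hypothesis of this case is precisely that $\REL{\TRS{R}_2}{(\TRS{R}_1\cup\TRS{S})}$ is compatible with the complexity pair $(\relgt,\relge)$, Lemma~\ref{LEM:bound} gives $\dl{t}{\relgt} \geqslant \dl{t}{\to_\REL{\TRS{R}_2}{(\TRS{R}_1\cup\TRS{S})}}$ for every term~$t$. Taking the supremum over all $t \in L$ with $|t| \leqslant n$ turns this pointwise inequality into $\cp{n}{\relgt}{L} \geqslant \cp{n}{\to_\REL{\TRS{R}_2}{(\TRS{R}_1\cup\TRS{S})}}{L}$, i.e.\ the second summand is at most $f(n)$. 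Substituting this into the $\Theta$-estimate and weakening $\Theta$ to $\OO$ delivers the claimed bound.

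I do not anticipate a genuine obstacle: the argument is simply the combination of the modularity identity (Theorem~\ref{THM:modeq}) with the single-complexity-pair bound (Lemma~\ref{LEM:bound}). The only points deserving a line of care are the side conditions of the cited results. Theorem~\ref{THM:modeq} requires $L$ to consist of terminating terms, and Lemma~\ref{LEM:bound} requires $(\relgt,\relge)$ to be a (finitely branching, compatible) complexity pair; the former follows from the standing assumption that TRSs under complexity analysis are terminating, while the latter is built into the very notion of complexity pair invoked in the processor's side condition, so both are available for free.
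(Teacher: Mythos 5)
Your proof is correct and follows essentially the same route as the paper, whose proof is simply the citation of Theorem~\ref{THM:modeq} together with Corollary~\ref{COR:bound}: you split via the modularity theorem and bound the $\REL{\TRS{R}_2}{(\TRS{R}_1\cup\TRS{S})}$ summand by $f(n)$ using the complexity-pair compatibility. The only cosmetic difference is that you invoke Lemma~\ref{LEM:bound} directly (taking suprema yourself) rather than its packaged form in Corollary~\ref{COR:bound}, which is if anything slightly cleaner since the corollary is stated in terms of fixed asymptotic classes.
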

\begin{proof}
Follows from Corollary~\ref{COR:bound} and Theorem~\ref{THM:modeq}.
\end{proof}

The above processor is implemented by demanding that all rules in
$\TRS{R}\cup\TRS{S}$ are weakly oriented while at least one rule
in~\TRS{R} is oriented strictly. Hence the decomposition of~\TRS{R}
into~$\TRS{R}_1$ and~$\TRS{R}_2$ is performed automatically.
The next CP processor requires a mild condition on $\TRS{R}_1$ only.
Again the decomposition of~$\TRS{R}$ into~$\TRS{R}_1$ and~$\TRS{R}_2$ is
performed automatically since in the implementation we just demand that
the $\TRS{S}$-rules are weakly oriented while 
the \TRS{R}-rules may increase by a constant factor and
at least one of the rules in~\TRS{R} is oriented strictly.

\begin{thm}
\label{THM:sli processor}
The CP processor
\[
(\REL{\TRS{R}}{\TRS{S}},L) \mapsto
\begin{cases}
\{(\REL{\TRS{R}_1}{(\TRS{R}_2 \cup \TRS{S})},L,f)\}
& \text{if~\ALG{M} is a matrix interpretation with constant}\\[-.5ex]
& \text{growth, $\TRS{R}_1 \subseteq {\relge^\ncp_\ALG{M}}$ and
        $\REL{\TRS{R}_2}{\TRS{S}}$ is compatible with~\ALG{M}}\\
\{(\REL{\TRS{R}}{\TRS{S}},L,\zero)\} & \text{otherwise}
\end{cases}
\]
where $\TRS{R} = \TRS{R}_1 \cup \TRS{R}_2$, and $f(n) = n$ is sound.
\end{thm}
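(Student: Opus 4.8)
The plan is to reduce the soundness of this processor directly to the Complexity Gap Principle for TMIs (Theorem~\ref{THM:cgp_tmi}), in complete analogy with how the soundness of Theorem~\ref{THM:cp processor} was reduced to Corollary~\ref{COR:bound} and Theorem~\ref{THM:modeq}. Recalling Definition~\ref{DEF:CP processor}, soundness for the first (non-trivial) branch, whose single returned pair is $(\REL{\TRS{R}_1}{(\TRS{R}_2 \cup \TRS{S})},L,f)$ with $f(n)=n$, amounts to establishing
\[
\cp{n}{\to_{\REL{\TRS{R}}{\TRS{S}}}}{L} =
\OO(n + \cp{n}{\to_{\REL{\TRS{R}_1}{(\TRS{R}_2 \cup \TRS{S})}}}{L}),
\]
where $\TRS{R} = \TRS{R}_1 \cup \TRS{R}_2$. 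First I would observe that this is verbatim the conclusion of Theorem~\ref{THM:cgp_tmi}.

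The main (and essentially only) work is to check that the side conditions guarding the first branch coincide with the hypotheses of Theorem~\ref{THM:cgp_tmi}. The processor fires precisely when there is a matrix interpretation~\ALG{M} with constant growth such that $\TRS{R}_1 \subseteq {\relge^\ncp_\ALG{M}}$ and $\REL{\TRS{R}_2}{\TRS{S}}$ is compatible with~\ALG{M}; these are exactly the assumptions placed on~\ALG{M}, $\TRS{R}_1$, and $\TRS{R}_2$ in Theorem~\ref{THM:cgp_tmi}. Since a CP problem always carries a language~$L$ of terminating terms, the remaining hypothesis of Theorem~\ref{THM:cgp_tmi} is satisfied as well, so the theorem applies and yields the displayed bound immediately.

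For the second (\emph{otherwise}) branch the processor returns the unchanged CP problem $(\REL{\TRS{R}}{\TRS{S}},L)$ together with $\zero$, so the soundness requirement collapses to $\cp{n}{\to_{\REL{\TRS{R}}{\TRS{S}}}}{L} = \OO(\cp{n}{\to_{\REL{\TRS{R}}{\TRS{S}}}}{L})$, which holds trivially. I do not anticipate any genuine obstacle here, since the statement is essentially a repackaging of Theorem~\ref{THM:cgp_tmi} in the processor format; the only point needing care is the bookkeeping that the returned function $f(n)=n$ matches the additive $\OO(\cdots + n)$ summand of that theorem and that the single returned subproblem $\REL{\TRS{R}_1}{(\TRS{R}_2 \cup \TRS{S})}$ is exactly the one whose complexity appears on its right-hand side.
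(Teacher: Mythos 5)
Your proposal is correct and takes essentially the same route as the paper: the paper's own proof reads ``Follows from Theorem~\ref{THM:modeq} and Theorem~\ref{THM:cgp_tmi}'', and your reduction to Theorem~\ref{THM:cgp_tmi} is exactly that argument, with the added (correct) observation that its conclusion is verbatim the soundness condition of Definition~\ref{DEF:CP processor} for the single returned subproblem, so the extra citation of Theorem~\ref{THM:modeq} is not actually needed. Your explicit treatment of the \emph{otherwise} branch and of the hypothesis matching is just the bookkeeping the paper leaves implicit.
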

\begin{proof}
Follows from Theorem~\ref{THM:modeq} and Theorem~\ref{THM:cgp_tmi}.
\end{proof}

The next CP processor is based on match-bounds.

\begin{thm}
\label{THM:match(-raise)-RT-bounds processor}
The CP processor
\[
(\REL{\TRS{R}}{\TRS{S}}) \mapsto
\begin{cases}
\{(\REL{\TRS{R}_1}{(\TRS{R}_2 \cup \TRS{S})},L,f)\}
& \text{if $\REL{\TRS{R}_2}{(\TRS{R}_1 \cup \TRS{S})}$ is}\\[-0.5ex]
& \text{linear and match-RT-bounded for~L or}\\[-0.5ex]
& \text{non-duplicating and match-raise-RT-bounded for~$L$}\\
\{(\REL{\TRS{R}}{\TRS{S}},L,\zero)\} & \text{otherwise}
\end{cases}
\]
where $\TRS{R} = \TRS{R}_1 \cup \TRS{R}_2$,
$\TRS{R}_2$ is non-collapsing, and $f(n) = n$ is sound.
\end{thm}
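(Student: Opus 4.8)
The plan is to derive soundness of this processor from the modular splitting theorem (Theorem~\ref{THM:modeq}) together with the linear complexity guarantees of the match-bound enrichments (Theorems~\ref{THM:match-RT-bounds => linear complexity} and~\ref{THM:match-raise-RT-bounds => linear complexity}), exactly as the neighbouring processors in Theorems~\ref{THM:cp processor} and~\ref{THM:sli processor} are derived from earlier results. Two cases have to be checked. In the \emph{otherwise} branch the processor returns the input $(\REL{\TRS{R}}{\TRS{S}},L)$ together with $\zero$, so the required inequality collapses to $\cp{n}{\to_{\REL{\TRS{R}}{\TRS{S}}}}{L} = \OO(\cp{n}{\to_{\REL{\TRS{R}}{\TRS{S}}}}{L})$ and holds trivially. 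Hence the whole argument concerns the first branch, where I must show
\[
\cp{n}{\to_{\REL{\TRS{R}}{\TRS{S}}}}{L} =
\OO(n + \cp{n}{\to_{\REL{\TRS{R}_1}{(\TRS{R}_2 \cup \TRS{S})}}}{L}),
\]
which is precisely soundness for the single output $(\REL{\TRS{R}_1}{(\TRS{R}_2 \cup \TRS{S})},L,f)$ with $f(n) = n$.

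First I would bound the component that is moved into the weak part. By the branch condition, $\REL{\TRS{R}_2}{(\TRS{R}_1 \cup \TRS{S})}$ is either linear and match-RT-bounded, or non-duplicating and match-raise-RT-bounded for~$L$, and $\TRS{R}_2$ is non-collapsing by hypothesis; these are exactly the premises of Theorem~\ref{THM:match-RT-bounds => linear complexity} (respectively Theorem~\ref{THM:match-raise-RT-bounds => linear complexity}). Applying the matching theorem yields both that $\REL{\TRS{R}_2}{(\TRS{R}_1 \cup \TRS{S})}$ is terminating on~$L$ and that $\cp{n}{\to_{\REL{\TRS{R}_2}{(\TRS{R}_1 \cup \TRS{S})}}}{L} = \OO(n)$. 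Feeding this into the splitting
\[
\cp{n}{\to_{\REL{\TRS{R}}{\TRS{S}}}}{L} =
\Theta(\cp{n}{\to_{\REL{\TRS{R}_1}{(\TRS{R}_2 \cup \TRS{S})}}}{L} +
       \cp{n}{\to_{\REL{\TRS{R}_2}{(\TRS{R}_1 \cup \TRS{S})}}}{L})
\]
supplied by Theorem~\ref{THM:modeq} replaces the second summand by $\OO(n)$ and produces the displayed bound with $f(n) = n$.

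The hard part will be the mismatch in termination hypotheses: Theorem~\ref{THM:modeq} presupposes that~$L$ consists of terms that terminate for the full relation $\to_{\REL{\TRS{R}}{\TRS{S}}}$, whereas a CP problem imposes no such condition on~$L$. I would resolve this using the termination output of the match-bound theorems. For a term $t \in L$ that does terminate, the upper-bound direction I actually need is just the inequality of Lemma~\ref{LEM:mod}, so I can bypass the full $\Theta$-equality and invoke Lemma~\ref{LEM:mod} directly. For a non-terminating $t$, infinitely many $\TRS{R}_1 \cup \TRS{R}_2$ steps occur along an infinite derivation; since the $\TRS{R}_2$-relative-to-$(\TRS{R}_1 \cup \TRS{S})$ steps already terminate on~$L$, the divergence is forced into $\TRS{R}_1$, so $\dl{t}{\to_{\REL{\TRS{R}_1}{(\TRS{R}_2 \cup \TRS{S})}}} = \infty$ and the right-hand side is infinite as well, making the bound vacuous. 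This case split, rather than any calculation, is where the care lies.
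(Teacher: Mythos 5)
Your proposal is correct and follows exactly the paper's route: the paper's entire proof is the one-line citation ``Follows from Theorems~\ref{THM:modeq}, \ref{THM:match-RT-bounds => linear complexity}, and~\ref{THM:match-raise-RT-bounds => linear complexity}'', which is precisely the combination you spell out. Your additional case analysis reconciling the terminating-terms hypothesis of Theorem~\ref{THM:modeq} with the unconstrained language~$L$ of a CP problem (falling back on Lemma~\ref{LEM:mod} for terminating terms and using the termination conclusion of the match-bound theorems to make the bound vacuous otherwise) is a level of care the paper's proof silently omits, not a different method.
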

\begin{proof}
Follows from Theorems~\ref{THM:modeq},%
~\ref{THM:match-RT-bounds => linear complexity}, and%
~\ref{THM:match-raise-RT-bounds => linear complexity}.
\end{proof}

The above processor is implemented by considering for any
rule $l \to r \in \TRS{R}$ the decompositions
$\TRS{R}_2 = \{l \to r\}$ and $\TRS{R}_1 = \TRS{R}\setminus \TRS{R}_2$
in parallel.
The next CP processor we present is not implemented for finding
a bound (cf.\ the discussion at the beginning of the section) but
very suitable to tighten existing bounds (see Section~\ref{IMP:imp}).

\begin{thm}
\label{THM:modeq processor}
The CP processor
\[
(\REL{\TRS{R}}{\TRS{S}},L) \mapsto
\{
 (\REL{\TRS{R}_1}{(\TRS{R}_2 \cup \TRS{S})},L,\zero),
 (\REL{\TRS{R}_2}{(\TRS{R}_1 \cup \TRS{S})},L,\zero)
\}
\]
where $\TRS{R} = \TRS{R}_1 \cup \TRS{R}_2$ is sound and complete.
\end{thm}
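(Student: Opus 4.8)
The plan is to unfold Definition~\ref{DEF:CP processor} for the case at hand and observe that the required statements are precisely the two directions already packaged in Theorem~\ref{THM:modeq}. Concretely, this processor produces the two output problems $(\REL{\TRS{R}_1}{(\TRS{R}_2 \cup \TRS{S})},L)$ and $(\REL{\TRS{R}_2}{(\TRS{R}_1 \cup \TRS{S})},L)$, each tagged with the constant zero function $\zero$. Since $\zero(n)=0$, the functions $f_1,f_2$ contribute nothing to the sums appearing in the soundness and completeness conditions. Hence soundness reduces to showing
\[
\cp{n}{\to_\REL{(\TRS{R}_1\cup\TRS{R}_2)}{\TRS{S}}}{L} =
\OO(\cp{n}{\to_\REL{\TRS{R}_1}{(\TRS{R}_2\cup\TRS{S})}}{L} +
    \cp{n}{\to_\REL{\TRS{R}_2}{(\TRS{R}_1\cup\TRS{S})}}{L}),
\]
and completeness to the same statement with $\Omega$ in place of $\OO$.

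Next I would note that, in the complexity framework, every CP problem under consideration ranges over terminating terms (TRSs are assumed terminating for complexity analysis), so $L$ is a set of terminating terms and Theorem~\ref{THM:modeq} is applicable. That theorem asserts exactly
\[
\cp{n}{\to_\REL{(\TRS{R}_1\cup\TRS{R}_2)}{\TRS{S}}}{L} =
\Theta(\cp{n}{\to_\REL{\TRS{R}_1}{(\TRS{R}_2\cup\TRS{S})}}{L} +
       \cp{n}{\to_\REL{\TRS{R}_2}{(\TRS{R}_1\cup\TRS{S})}}{L}).
\]
Recalling that $f=\Theta(g)$ is by definition the conjunction $f=\OO(g)$ and $f=\Omega(g)$, the $\OO$-half yields soundness and the $\Omega$-half yields completeness. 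Thus both conditions of Definition~\ref{DEF:CP processor} hold simultaneously.

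I do not expect a genuine obstacle here: the result is an immediate reformulation of Theorem~\ref{THM:modeq} in the language of CP processors, with the only bookkeeping being the vanishing of the $\zero$ summands and the identification of the two output problems with the two relative TRSs on the right-hand side of the $\Theta$-estimate. The closest thing to a subtlety is making sure the set $L$ supplied to the processor consists of terminating terms so that Theorem~\ref{THM:modeq} genuinely applies; this is guaranteed by the standing assumption that the systems treated for complexity are terminating.
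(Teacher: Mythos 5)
Your proposal is correct and follows exactly the paper's route: the paper's own proof is the one-liner ``By Theorem~\ref{THM:modeq}.'', and your argument simply spells out the bookkeeping (vanishing $\zero$ summands, splitting the $\Theta$-estimate into its $\OO$- and $\Omega$-halves for soundness and completeness, and the terminating-terms hypothesis) that the paper leaves implicit.
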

\begin{proof}
By Theorem~\ref{THM:modeq}.
\end{proof}

Finally, the main theorem states that the CP framework may be applied to
complexity analysis. We say that $P$ is a complexity proof for a relative
TRS $\REL{\TRS{R}}{\TRS{S}}$ and a language~$L$ if all leaves in
$P$ are of the shape $\REL{\varnothing}{(\TRS{R}\cup\TRS{S})}$.

\begin{thm}
\label{THM:imp}
Let $\REL{\TRS{R}}{\TRS{S}}$ be a relative TRS and~$L$ be a language.
Let $P$ be a complexity proof for $\REL{\TRS{R}}{\TRS{S}}$ and~$L$
and $f_1, \ldots, f_m$ be the complexities occurring in this proof.
If all CP processors in $P$ are sound then
$\cp{n}{\to_\REL{\TRS{R}}{\TRS{S}}}{L} = \OO(f_1(n) + \cdots + f_m(n))$.
Similarly, if all CP processors in $P$ are complete then
$\cp{n}{\to_\REL{\TRS{R}}{\TRS{S}}}{L} = \Omega(f_1(n) + \cdots + f_m(n))$.
\end{thm}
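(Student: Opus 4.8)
The plan is to argue by structural induction on the complexity proof~$P$, which I regard as a finite tree whose root is the CP problem $(\REL{\TRS{R}}{\TRS{S}},L)$, whose every internal node records the application of a single CP processor together with the functions labeling its output problems, and whose every leaf is a CP problem $(\REL{\varnothing}{\TRS{T}},L')$. Before starting the induction I would record two elementary closure properties of the asymptotic relations: $\OO(\cdot)$ is transitive and closed under finite sums (if $a_i = \OO(b_i)$ for finitely many~$i$ then $\sum_i a_i = \OO(\sum_i b_i)$), and the dual statements hold for $\Omega(\cdot)$, which follow at once since $a = \Omega(b)$ exactly when $b = \OO(a)$. Because each induction step performs only a single transitivity composition, the multiplicative constants stay finite throughout, even though the number of complexities collected over the whole (finite) tree may be large.

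The base case concerns a proof consisting of the root alone, which is then a leaf $(\REL{\varnothing}{\TRS{T}},L')$. By definition ${\to_{\REL{\varnothing}{\TRS{T}}}} = {\to_\TRS{T}^* \cdot \to_\varnothing \cdot \to_\TRS{T}^*} = \varnothing$, so every term has derivation height~$0$ and $\cp{n}{\to_\REL{\varnothing}{\TRS{T}}}{L'} = 0$. Here no processor is applied, the list $f_1,\dots,f_m$ is empty, and the bound $\cp{n}{\to_\REL{\TRS{R}}{\TRS{S}}}{L} = \OO(0) = 0$ holds trivially; the same reasoning covers the $\Omega$ statement.

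For the inductive step I assume the processor applied at the root sends $(\REL{\TRS{R}}{\TRS{S}},L)$ to problems $(\REL{\TRS{R}_i}{\TRS{S}_i},L_i)$ with edge complexities $g_i$, for $1 \leqslant i \leqslant k$. Each subtree $P_i$ hanging below the $i$-th child is itself a complexity proof for $\REL{\TRS{R}_i}{\TRS{S}_i}$ and~$L_i$ (all its leaves have empty first component), so the induction hypothesis applies: writing $F_i$ for the sum of all complexities occurring in $P_i$, the sound case gives $\cp{n}{\to_\REL{\TRS{R}_i}{\TRS{S}_i}}{L_i} = \OO(F_i(n))$. Soundness of the root processor yields $\cp{n}{\to_\REL{\TRS{R}}{\TRS{S}}}{L} = \OO(\sum_i g_i(n) + \sum_i \cp{n}{\to_\REL{\TRS{R}_i}{\TRS{S}_i}}{L_i})$, and substituting the inductive bounds together with closure of $\OO$ under finite sums and transitivity gives $\cp{n}{\to_\REL{\TRS{R}}{\TRS{S}}}{L} = \OO(\sum_i g_i(n) + \sum_i F_i(n))$. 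Since the complexities occurring in $P$ are exactly the root labels $g_1,\dots,g_k$ together with the complexities of the subproofs $P_1,\dots,P_k$, this last sum is precisely $f_1(n) + \cdots + f_m(n)$, as required.

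The $\Omega$ direction runs along the same induction but invokes completeness of the processors in place of soundness. The only place deserving a moment's care is the direction of the auxiliary comparison: from the inductive lower bounds $\cp{n}{\to_\REL{\TRS{R}_i}{\TRS{S}_i}}{L_i} = \Omega(F_i(n))$ one first observes $\sum_i F_i(n) = \OO(\sum_i \cp{n}{\to_\REL{\TRS{R}_i}{\TRS{S}_i}}{L_i})$, whence the completeness lower bound for the root is itself $\Omega(\sum_i g_i(n) + \sum_i F_i(n))$, and transitivity of $\Omega$ closes the step. I expect the genuine obstacle to be not mathematical but definitional: one must pin down the tree model of a complexity proof precisely enough that ``the complexities occurring in $P$'' is unambiguously the disjoint union of the root's edge labels with the complexities of the subproofs, and verify that the definition of a complexity proof (all leaves of shape $\REL{\varnothing}{\cdot}$, the tree finite) supplies exactly the leaf property and finiteness the induction relies upon.
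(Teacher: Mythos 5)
Your proof is correct and takes essentially the same approach as the paper: the paper dispenses with this theorem in one line (``By Definition of the CP processor as well as basic properties of $\OO$-notation''), and your structural induction over the proof tree---empty relation at the leaves, soundness/completeness of the root processor plus transitivity and finite-sum closure of $\OO$ and $\Omega$ at each internal node---is exactly the argument that line abbreviates.
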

\begin{proof}
By Definition~\ref{DEF:CP processor} as well as basic
properties of $\OO$-notation.
\end{proof}

We conclude the section with an (abstract) example which illustrates the
behavior of the complexity framework.

\begin{exa}
\label{EX:linear}
Consider the TRS $\TRS{R}$ of Example~\ref{EX:reverse} on
page~\pageref{EX:reverse} with the complexity proof depicted in
Figure~\ref{FIG:linear}. After transforming $\TRS{R}$ into the
relative TRS $\REL{\TRS{R}}{\varnothing}$ the CP processor of
Theorem~\ref{THM:cp processor} is applied twice. First the
(derivational) complexity of the relative TRS $\REL{\{1,3,5\}}{\{2,4\}}$
is estimated by a polynomial of degree five. As a consequence, the
rules 1, 3, and 5 are moved into the relative component yielding a
CP problem consisting of the relative TRS $\REL{\{2,4\}}{\{1,3,5\}}$.
After that the (derivational) complexity of
$\REL{\{2,4\}}{\{1,3,5\}}$ is estimated by a quadratic bound.
Since the remaining CP problem is of the shape
$\REL{\varnothing}{\TRS{R}}$ according to Theorem~\ref{THM:imp}
the (derivational) complexity of $\TRS{R}$ is at most quintic.
\begin{figure}
\begin{tikzpicture}[node distance=12mm and 14mm,on grid]
\node (1)                  {$\TRS{R}$};
\node (2)   [below=of 1]   {$\REL{\TRS{R}}{\varnothing}$};
\node (22)  [below=of 2]   {$\REL{\{2,4\}}{\{1,3,5\}}$};
\node (222) [below=of 22]  {$\REL{\varnothing}{\TRS{R}}$};
\draw[->] (1)  to node[right] {{\scriptsize$\OO(1)$}}   (2);
\draw[->] (2)  to node[right] {{\scriptsize$\OO(n^5)$}} (22);
\draw[->] (22) to node[right] {{\scriptsize$\OO(n^2)$}} (222);
\end{tikzpicture}
\caption{Linear complexity proof}
\label{FIG:linear}
\end{figure}
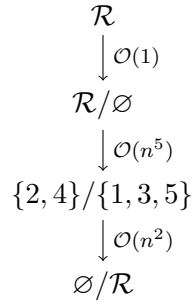
\end{exa}

\subsection{Tightening Bounds}
\label{IMP:imp}

In contrast to termination, which is a plain YES/NO question, complexity
corresponds to an optimization problem. Hence automated tools should try to
establish as tight bounds as possible. In the direct setting all complexity
methods can be executed in parallel and after a fixed amount of time the
tightest bound is reported. The next example shows such a case.

\begin{exa}
\label{EX:direct}
Consider the TRS~\BITS (\tpdb{nontermin/AG01/\#4.28})
consisting of the following five rules:
\begin{alignat*}{4}
1\colon&& \m{half}(\m{0}) &\to \m{0} & \qquad \qquad
4\colon&& \m{bits}(\m{0}) &\to \m{0}\\
2\colon&& \m{half}(\m{s}(\m{0})) &\to \m{0} &
5\colon&& \m{bits}(\m{s}(x)) &\to \m{s}(\m{bits}(\m{half}(\m{s}(x)))) \\
3\colon&& \m{half}(\m{s}(\m{s}(x))) &\to \m{s}(\m{half}(x))
\end{alignat*}
For this TRS the complexity analyzer \CAT (cf.\ Section~\ref{EXP:main})
finds a proof by root-labeling followed by a TMI of dimension two,
establishing a quadratic upper bound 
within five seconds.
However, after 90 seconds the tool finds the following AMI~$\ALG{A}$
that shows a linear upper bound:
\begin{xalignat*}{4}
\m{bits}_\ALG{A}(\vec x) &= \BM
0\NE 1\NE 2\NR
0\NE 4\NE 5\NR
0\NE 6\NE 7
\EM \vec x
&
\m{half}_\ALG{A}(\vec x) &= \BM
1\NE \minfty\NE \minfty\NR
1\NE \minfty\NE \minfty\NR
1\NE \minfty\NE \minfty
\EM \vec x
&
\m{s}_\ALG{A}(\vec x) &= \BM
1\NE 1\NE \minfty\NR
7\NE 0\NE 2\NR
1\NE 6 \NE 0
\EM \vec x
&
\m{0}_\ALG{A} &= \BM
 0\NR
 0\NR
 \minfty
\EM
\end{xalignat*}
So, whenever the tool is allowed more than 90 seconds the linear bound
can be reported and if the user sets the global timeout to less, then
still the quadratic bound can be output.
\end{exa}

In the modular setting this simple idea does not work because
two problems emerge. The first problem is that
the tool does not know how much time it may spend
in a single proof step. If it spends too much then it may not finish the
proof within the global time limit and if it spends too little then it
can miss a low bound.
The second problem is that in the modular setting
separate criteria may make statements about the complexity of different
rules. The question is then to identify the \emph{better} bound.
The next example demonstrates this scenario.

\begin{exa}
Consider the TRSs
\begin{xalignat*}{2}
1\colon \m{a}(\m{b}(x)) &\to \m{b}(\m{a}(x)) &
2\colon \m{g}(x,x) &\to \m{g}(\m{c},\m{d})
\end{xalignat*}
The TMI~$\ALG{M}$ with $\m{g}_\ALG{M}(\vec x,\vec y) = \vec x + \vec y$
and
\begin{xalignat*}{4}
\m{a}_\ALG{M}(\vec x) &= \BM
1\NE 1\NR
0\NE 1\NR
\EM \vec x
&
\m{b}_\ALG{M}(\vec x) &= \BM
1\NE 0\NR
0\NE 1\NR
\EM \vec x + \BM
0\NR
1\NR
\EM
&
\m{c}_\ALG{M}(\vec x) &= \BM
0\NR
0\NR
\EM
&
\m{d}_\ALG{M}(\vec x) &= \BM
0\NR
0\NR
\EM
\end{xalignat*}
establishes a quadratic upper bound on the complexity of
$\REL{\{1\}}{\{2\}}$
whereas match-bounds yield a linear upper bound for
$\REL{\{2\}}{\{1\}}$.
The question now is with which remaining proof obligation
($\REL{\{2\}}{\{1\}}$ or $\REL{\{1\}}{\{2\}}$)
the tool should continue. Note that both bounds are tight.
\end{exa}

The following idea overcomes both problems:
First we establish \emph{some} complexity proof according to the procedure
described at the beginning of Section~\ref{IMP:est} to obtain a bound for
as many systems as possible. Afterwards we \emph{optimize} this bound.
The next example shows how the latter works.

\begin{exa}
\label{EX:opt}
Consider the TRS $\TRS{R}$ of Example~\ref{EX:reverse} with the complexity
proof depicted in Figure~\ref{FIG:original}. In this exemplary
case one part in this proof, highlighted by a solid box, is overestimated
by a cubic upper bound. Hence the complexity of the whole system is at most
cubic. We remark that this proof step estimates the complexity of
$\REL{\{3,5\}}{\{1,2,4\}}$. Now assume that the cubic bound
is not optimal, i.e., there exists a proof (that may be longer and
harder to find) that induces a quadratic upper bound on the complexity of
$\REL{\{3,5\}}{\{1,2,4\}}$. Then the proof is optimized as illustrated
in Figure~\ref{FIG:optimized}, i.e., $\REL{\{1,3,5\}}{\{2,4\}}$ is
split into the problems
$\REL{\{1\}}{\{2,3,4,5\}}$ and $\REL{\{3,5\}}{\{1,2,4\}}$
by an application of Theorem~\ref{THM:modeq}.
After that, the proof part of $\REL{\{1\}}{\{2,3,4,5\}}$ is reused in the
optimized proof (cf.\ the dashed boxes in Figure~\ref{FIG:original} and
Figure~\ref{FIG:optimized}) whereas the original proof of
$\REL{\{3,5\}}{\{1,2,4\}}$ is replaced by the new one, as
indicated by the solid box in Figure~\ref{FIG:optimized}.
Now, the proof in Figure~\ref{FIG:optimized} establishes a quadratic
upper bound on the complexity of~\TRS{R}. For completeness we state 
that the proof in Figure~\ref{FIG:original} can be obtained from the
linear proof tree shown in Figure~\ref{FIG:linear} by optimization. To show
the procedure on a non-linear proof tree this presentation was chosen.
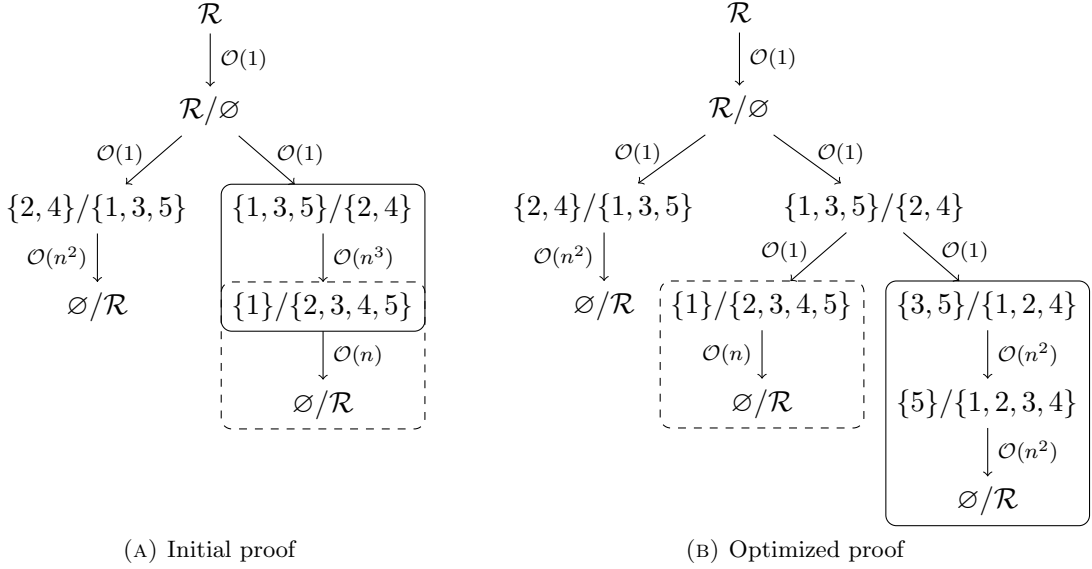
\begin{figure}
\subfloat[Initial proof]{
\label{FIG:original}
\begin{tikzpicture}[node distance=13mm and 15mm,on grid]
\node (1)                         {$\TRS{R}$};
\node (2)   [below=of 1]          {$\REL{\TRS{R}}{\varnothing}$};
\node (21)  [below left=of 2]     {$\REL{\{2,4\}}{\{1,3,5\}}$};
\node (211) [below=of 21]         {$\REL{\varnothing}{\TRS{R}}$};
\node (22)  [below right=of 2]    {$\REL{\{1,3,5\}}{\{2,4\}}$};
\node (221) [below=of 22]         {$\REL{\{1\}}{\{2,3,4,5\}}$};
\node (222) [below=of 221]        {$\REL{\varnothing}{\TRS{R}}$};
\node (223) [below=of 222]        {\phantom{$\REL{\varnothing}{\TRS{R}}$}};
\draw[->] (1)   to node[right]      {{\scriptsize$\OO(1)$}}   (2);
\draw[->] (2)   to node[base left]  {{\scriptsize$\OO(1)$}}   (21);
\draw[->] (2)   to node[base right] {{\scriptsize$\OO(1)$}}   (22);
\draw[->] (21)  to node[left]       {{\scriptsize$\OO(n^2)$}} (211);
\draw[->] (22)  to node[right]      {{\scriptsize$\OO(n^3)$}} (221);
\draw[->] (221) to node[right]      {{\scriptsize$\OO(n)$}}   (222);
\node[draw,inner sep=0mm,rectangle,rounded corners,fit=(22) (221)]         {};
\node[draw,inner sep=0mm,rectangle,rounded corners,dashed,fit=(221) (222)] {};
\end{tikzpicture}
}
\qquad
\subfloat[Optimized proof]{
\label{FIG:optimized}
\begin{tikzpicture}[node distance=13mm and 15mm,on grid]
\node (1)                                                   {$\TRS{R}$};
\node (2)    [below=of 1]                                   {$\REL{\TRS{R}}{\varnothing}$};
\node (21)   [node distance=13mm and 18mm,below left=of 2]  {$\REL{\{2,4\}}{\{1,3,5\}}$};
\node (211)  [below=of 21]                                  {$\REL{\varnothing}{\TRS{R}}$};
\node (22)   [node distance=13mm and 18mm,below right=of 2] {$\REL{\{1,3,5\}}{\{2,4\}}$};
\node (221)  [below left=of 22]                             {$\REL{\{1\}}{\{2,3,4,5\}}$};
\node (2211) [below=of 221]                                 {$\REL{\varnothing}{\TRS{R}}$};
\node (222)  [below right=of 22]                            {$\REL{\{3,5\}}{\{1,2,4\}}$};
\node (2221) [below=of 222]                                 {$\REL{\{5\}}{\{1,2,3,4\}}$};
\node (2222) [below=of 2221]                                {$\REL{\varnothing}{\TRS{R}}$};
\draw[->] (1)    to node[right]      {{\scriptsize$\OO(1)$}}   (2);
\draw[->] (2)    to node[base left]  {{\scriptsize$\OO(1)$}}   (21);
\draw[->] (2)    to node[base right] {{\scriptsize$\OO(1)$}}   (22);
\draw[->] (21)   to node[left]       {{\scriptsize$\OO(n^2)$}} (211);
\draw[->] (22)   to node[base left]  {{\scriptsize$\OO(1)$}}   (221);
\draw[->] (22)   to node[base right] {{\scriptsize$\OO(1)$}}   (222);
\draw[->] (221)  to node[left]       {{\scriptsize$\OO(n)$}}   (2211);
\draw[->] (222)  to node[right]      {{\scriptsize$\OO(n^2)$}} (2221);
\draw[->] (2221) to node[right]      {{\scriptsize$\OO(n^2)$}} (2222);
\node[draw,inner sep=0mm,rectangle,rounded corners,fit=(222) (2221) (2222)] {};
\node[draw,inner sep=0mm,rectangle,rounded corners,dashed,fit=(221) (2211)] {};
\end{tikzpicture}
}
\caption{Tightening bounds}
\end{figure}
\end{exa}

As the previous example demonstrates the basic idea is to replace
single proof steps by new proofs that induce tighter bounds. This
procedure is repeated until either the global time limit is reached or
none of the bounds can be tightened further. Note that the transformation is
sound by Theorem~\ref{THM:imp}.

The final example in this section shows that it may be easier to find
proofs in the modular setting.
\begin{exa}
Recall the TRS from Example~\ref{EX:direct}. Corollary~\ref{COR:sli} with
an SLI that just counts function symbols allows to transform the initial
problem into $\REL{\{5\}}{\{1,2,3,4\}}$. The AMI of dimension three with
\begin{xalignat*}{4}
\m{bits}_\ALG{A}(\vec x) &= \BM
0\NE 0\NE 0\NR
0\NE 3\NE 2\NR
0\NE 2\NE 2\NR
\EM \vec x
&
\m{half}_\ALG{A}(\vec x) &= \BM
0\NE \minfty\NE \minfty\NR
0\NE \minfty\NE \minfty\NR
0\NE \minfty\NE \minfty\NR
\EM \vec x
&
\m{s}_\ALG{A}(\vec x) &= \BM
0\NE \minfty\NE 0\NR
\minfty\NE \minfty\NE 3\NR
4\NE 0 \NE 0\NR
\EM \vec x
&
\m{0}_\ALG{A} &= \BM
 0\NR
 0\NR
 0\NR
\EM
\end{xalignat*}
allows to show linear derivational complexity of~\BITS. Note
that \CAT finds this interpretation within three seconds whereas it took
the tool 90 seconds to find a suitable interpretation for the direct setting.
\end{exa}

\section{Experimental Results}
\label{EXP:main}

The techniques described in the preceding sections are implemented in
the complexity analyzer \CAT (freely available from
\url{http://cl-informatik.uibk.ac.at/software/cat}) which is built on
top of \TTTT~\cite{KSZM09}, a powerful termination tool for TRSs.

Below we report on the experiments%
\footnote{\
\label{FOO:web}
Full details available from
\url{http://cl-informatik.uibk.ac.at/software/cat/10lmcs}.
}
we performed.
We considered the \TRSALLNUM TRSs in version 7.0.2
of the TPDB without strategy or theory annotation. The \TRSDCNUM
non-duplicating systems of this collection have been used for experiments
with derivational complexity (note that a duplicating system gives rise to
at least exponentially long derivations). For runtime complexity we
considered the \TRSRCNUM systems that are not trivial, e.g., where the
set of constructor-based terms is not finite and terminating.
In this collection there are \TRSRCNDNUM non-duplicating systems.
All tests have been performed on a server equipped with
eight dual-core AMD Opteron\textsuperscript{\textregistered}\xspace
processors 885 running at a clock rate of 2.6~GHz and 64~GB of main
memory. We remark that similar results have been obtained on a
dual-core laptop. If a tool did not report an answer within 60
seconds, its execution was aborted.

As complexity preserving transformations we employ
uncurrying~\cite{ZHM10} for applicative systems whenever it
applies and root-labeling~\cite{SM08} in parallel to the base
methods.
As base methods we use the match-bounds technique as well as
TMIs~\cite{MSW08,NZM10} and AMIs~\cite{KW09} of dimensions one to five.
The latter two are implemented by bit-blasting arithmetic operations to
SAT~\cite{EWZ08}.
All base methods are run in parallel and started upon program execution.

Our results are summarized in Tables~\ref{TAB:trs} and~\ref{TAB:trs_rc}.
Here, \direct refers to the conventional setting where all rules must be
oriented at once whereas \modular first transforms a TRS~\TRS{R} into a
relative TRS~\REL{\TRS{R}}{\varnothing} before the CP processors from
Section~\ref{IMP:est} (except Theorem~\ref{THM:modeq processor})
are employed. In the tables the postfix $\star$ indicates that after
establishing a bound it is tried to be tightened as explained in
Section~\ref{IMP:imp}.
The columns \linear, \quadratic, \dots, \poly give the number of
linear, quadratic, \dots, polynomial upper bounds that could be
established.
We also list the average time (in seconds) needed for
finding a bound in the last column.
For reference we also give the data for the winners of the corresponding 
categories in the 2010 edition of the termination competition.
For derivational complexity this is~\CAT and for runtime complexity
this is~\TCT~\cite{AMS08}.

Table~\ref{TAB:trs} shows the results for derivational complexity. Here
the modular approach allows to prove significantly more polynomial bounds
(column \poly) and furthermore these bounds are also smaller than for the
direct approach (especially if tightening of bounds is used). The modular
setting is slower since there typically more proofs are required to succeed.
The rows postfixed $\star$ prove that refining bounds is beneficial,
especially if all criteria are run in parallel, which is essential to
maximize the total number of upper bounds. The 2010 version of \CAT did
not use tightening of bounds. To maximize the number of low bounds the
tool executes criteria that yield larger complexity bounds slightly delayed.
This explains why for \CAT tightening bounds increases the global
performance less compared to \direct and \modular. On the contrary, \CAT
misses some proofs compared to \modular since (costly) criteria are not
executed for up to 60 seconds.

\begin{table}[t]
\caption{Derivational complexity of \TRSDCNUM~TRSs}
\label{TAB:trs}
\begin{center}
\begin{tabular}{@{}l@{\qquad}
 cccc@{\qquad}
 r
 @{}
}
\hline
& \poly & \linear & \quadratic & \cubic & \avg \\[0.1ex]
\hline
\direct        & 315 & 202 & 234 & 259 & 1.7 \\
$\direct\star$ & 315 & 215 & 303 & 312 & 7.9 \\
\modular       & 334 & 208 & 228 & 261 & 2.6 \\
$\modular\star$& 334 & 221 & 321 & 329 & 10.6 \\
\hline
$\CAT$     & 328 & 216 & 310 & 319 & 4.2 \\
$\CAT\star$& 328 & 219 & 317 & 324 & 11.5 \\
\end{tabular}
\end{center}
\end{table}

Table~\ref{TAB:trs_rc} shows the results for runtime complexity on the
\TRSRCNDNUM TRSs that are non-duplicating and non-trivial. Here,
the starting language for the match-bounds technique has been restricted
to constructor-based terms, i.e., no defined symbols are allowed below the
root. This makes match-bounds a very powerful technique for runtime
complexity, explaining the high number of linear bounds.%
\footnote{\
In the 2010 competition \CAT proved upper bounds on the derivational
complexity also in the category for runtime complexity. This explains why
our methods here outperform \TCT while in the competition \CAT came
second in this division.}
We remark that in contrast to the criteria we employ \TCT can
also estimate polynomial bounds for duplicating systems based on weak
dependency pairs~\cite{HM08,HM08b}. The row
$\TCT(\TRSRCNUM)$ corresponds to \TCT run on all \TRSRCNUM TRSs
in the benchmark for runtime complexity.
Hence this row includes duplicating TRSs. For 9 of these
\TCT can prove a polynomial upper bound.

\begin{table}[t]
\caption{Runtime complexity of \TRSRCNDNUM~TRSs}
\label{TAB:trs_rc}
\begin{center}
\begin{tabular}{@{}l@{\qquad}
 cccc@{\qquad}
 r
 @{}
}
\hline
& \poly & \linear & \quadratic & \cubic & \avg \\[0.1ex]
\hline
\direct        & 372 &  354 & 358 & 365 & 0.6 \\
$\direct\star$ & 372 &  355 & 370 & 371 & 1.1 \\
\modular       & 376 &  358 & 359 & 364 & 0.5 \\
$\modular\star$& 376 &  362 & 374 & 375 & 1.2 \\
\hline
$\TCT$            & 354 & 351 & 353 & 354 & 4.8 \\
$\TCT(\TRSRCNUM)$ & 363 & 360 & 362 & 363 & 4.8 \\
\end{tabular}
\end{center}
\end{table}

For further comparison with other tools we refer the reader to the
international termination competition (referenced in Footnote~\ref{FOO:comp}
on page~\pageref{FOO:comp}).
Since 2008, when the complexity categories have been installed in the
termination competition, \CAT won the division for derivational complexity
every year.

\section{Conclusion}
\label{CON:main}

In this article we have introduced a modular approach for estimating the
complexity of TRSs by considering relative rewriting. We
showed how existing criteria (for full rewriting) can be lifted into the
relative setting. The modular approach is easy to
implement and has been proved strictly more powerful than traditional methods
in theory and practice. Since the modular method allows to combine different
criteria, typically smaller complexity bounds are achieved
than with the direct setting. Furthermore the modular treatment allows
to establish bounds for systems where each of the involved basic methods
alone fails.
Although originally developed for derivational complexity our
results directly apply to more restrictive notions of complexity,
e.g., runtime complexity (see also below).
Finally we remark that our setting allows a more fine-grained complexity
analysis, i.e., while traditionally quadratic derivational complexity
ensures that any rule is applied at most quadratically often, our approach
can make different statements about single rules. Hence even if a proof
attempt does not succeed completely, it may highlight the problematic rules.

We remark that complexity proofs using TMIs (for relative rewriting) can be
certified with \CETA~\cite{TS09b}.

\medskip

As related work we mention~\cite{HW06t} which also considers relative
rewriting for complexity analysis. However, there the complexity of
$\TRS{R}_1\cup\TRS{R}_2$ is investigated by considering
\REL{\TRS{R}_1}{\TRS{R}_2} and $\TRS{R}_2$. Hence~\cite{HW06t} also gives
rise to a modular reasoning but the obtained complexities are typically
beyond polynomials.
For runtime complexity analysis Hirokawa and Moser~\cite{HM08,HM08b}
consider weak dependency pair steps relative to the usable rules, i.e.,
\REL{\WDP(\TRS{R})}{\UR(\TRS{R})}.
However, since in the current formulation of weak dependency pairs
some complexity might be hidden in the usable rules they do not
really obtain a relative problem. As a consequence they can only apply
restricted criteria for the usable rules. Note that our approach can
directly be used to show bounds on \REL{\WDP(\TRS{R})}{\UR(\TRS{R})} by
considering~$\WDP(\TRS{R})\cup\UR(\TRS{R})$. Due to
Corollary~\ref{COR:sli} this problem can be transformed into an
(unrestricted) relative problem \REL{\WDP(\TRS{R})}{\UR(\TRS{R})} whenever
the constraints in~\cite{HM08} are satisfied.
Moreover, if somehow the \emph{problematic} usable rules could be determined
and shifted into the $\WDP(\TRS{R})$ component, then this
\emph{improved} version
of weak dependency pairs corresponds to a relative problem without
additional restrictions, admitting further benefit from our
contributions.

Recently two approaches were proposed which admit polynomially bounded
matrix interpretations going beyond TMIs. While~\cite{W10} considers
weighted automata, in~\cite{NZM10} (joint) spectral radius theory is employed.
For ease of presentation these criteria have not been considered in this
work but since both are based on matrix interpretations, they perfectly
suit our modular setting. 

\medskip 
For future work we plan to investigate criteria that allow to analyze
the complexity of a TRS~\TRS{R} by the complexities of $\TRS{R}_1$ and
$\TRS{R}_2$ where $\TRS{R} = \TRS{R}_1 \cup \TRS{R}_2$. We anticipate
that results from modularity~\cite{O94} are helpful for this aim.

\subsubsection*{Acknowledgments}

We thank Johannes Waldmann for directing our attention to
Example~\ref{EX:counterexample} and Martin Avanzini for providing a
binary of the 2010 competition version of~\TCT.


\bibliographystyle{lncs}
\selectlanguage{english}
\bibliography{references}

\begin{thebibliography}{10}

\bibitem{A09}
Adian, S.I.:
\newblock Upper bound on the derivational complexity in some word rewriting
  system.
\newblock Doklady Math. 80(2), 679--683 (2009)

\bibitem{AG00}
Arts, T., Giesl, J.:
\newblock Termination of term rewriting using dependency pairs.
\newblock TCS 236(1-2), 133--178 (2000)

\bibitem{AMS08}
Avanzini, M., Moser, G., Schnabl, A.:
\newblock Automated implicit computational complexity analysis (system
  description).
\newblock  In: IJCAR 4. LNCS (LNAI), vol. 5195, pp. 132--138 (2008)

\bibitem{BN98}
Baader, F., Nipkow, T.:
\newblock Term Rewriting and All That.
\newblock Cambridge University Press, Cambridge (1998)

\bibitem{DM79}
Dershowitz, N., Manna, Z.:
\newblock Proving termination with multiset orderings.
\newblock Comm. ACM 22(8), 465--476 (1979)

\bibitem{EWZ08}
Endrullis, J., Waldmann, J., Zantema, H.:
\newblock Matrix interpretations for proving termination of term rewriting.
\newblock JAR 40(2-3), 195--220 (2008)

\bibitem{G98}
Genet, T.:
\newblock Decidable approximations of sets of descendants and sets of normal
  forms.
\newblock  In: RTA 1998. LNCS, vol. 1379, pp. 151--165 (1998)

\bibitem{GHWZ07}
Geser, A., Hofbauer, D., Waldmann, J., Zantema, H.:
\newblock On tree automata that certify termination of left-linear term
  rewriting systems.
\newblock I\&C 205(4), 512--534 (2007)

\bibitem{G90}
Geser, A.:
\newblock Relative termination.
\newblock PhD thesis, Universit\"at Passau, Germany (1990).
\newblock Available as: Report 91-03, Ulmer Informatik-Berichte, Universit\"at
  Ulm, 1991

\bibitem{HM05}
Hirokawa, N., Middeldorp, A.:
\newblock Automating the dependency pair method.
\newblock I\&C 199(1-2), 172--199 (2005)

\bibitem{HM08}
Hirokawa, N., Moser, G.:
\newblock Automated complexity analysis based on the dependency pair method.
\newblock  In: IJCAR 4. LNCS, vol. 5195, pp. 364--379 (2008)

\bibitem{HM08b}
Hirokawa, N., Moser, G.:
\newblock Complexity, graphs, and the dependency pair method.
\newblock  In: LPAR 2008. LNCS (LNAI), vol. 5330, pp. 652--666 (2008)

\bibitem{HL89}
Hofbauer, D., Lautemann, C.:
\newblock Termination proofs and the length of derivations (preliminary
  version).
\newblock  In: RTA 1989. LNCS, vol. 355, pp. 167--177 (1989)

\bibitem{HW06t}
Hofbauer, D., Waldmann, J.:
\newblock Complexity bounds from relative termination proofs.
\newblock Talk at the Workshop on Proof Theory and Rewriting, Obergurgl (2006).
\newblock Available from
  \url{http://www.imn.htwk-leipzig.de/~waldmann/talk/06/rpt/rel/main.pdf}

\bibitem{KW08}
Koprowski, A., Waldmann, J.:
\newblock Arctic termination \ldots\ below zero.
\newblock  In: RTA 2008. LNCS, vol. 5117, pp. 202--216 (2008)

\bibitem{KW09}
Koprowski, A., Waldmann, J.:
\newblock Max/plus tree automata for termination of term rewriting.
\newblock AC 19(2), 357--392 (2009)

\bibitem{KM07}
Korp, M., Middeldorp, A.:
\newblock Proving termination of rewrite systems using bounds.
\newblock  In: RTA 2007. LNCS, vol. 4533, pp. 273--287 (2007)

\bibitem{KM09}
Korp, M., Middeldorp, A.:
\newblock Match-bounds revisited.
\newblock I\&C 207(11), 1259--1283 (2009)

\bibitem{KSZM09}
Korp, M., Sternagel, C., Zankl, H., Middeldorp, A.:
\newblock Tyrolean {Termination} {Tool}~2.
\newblock  In: RTA 2009. LNCS, vol. 5595, pp. 295--304 (2009)

\bibitem{MSW08}
Moser, G., Schnabl, A., Waldmann, J.:
\newblock Complexity analysis of term rewriting based on matrix and context
  dependent interpretations.
\newblock  In: FSTTCS 2008. LIPIcs, vol. 2, pp. 304--315 (2008)

\bibitem{NZM10}
Neurauter, F., Zankl, H., Middeldorp, A.:
\newblock Revisiting matrix interpretations for polynomial derivational
  complexity of term rewriting.
\newblock  In: LPAR 17. LNCS (ARCoSS), vol. 6397, pp. 550--564 (2010)

\bibitem{O94}
Ohlebusch, E.:
\newblock On the modularity of confluence of constructor-sharing term rewriting
  systems.
\newblock  In: CAAP 1994. LNCS, vol. 787, pp. 261--275 (1994)

\bibitem{SM08}
Sternagel, C., Middeldorp, A.:
\newblock Root-labeling.
\newblock  In: RTA 2008. LNCS, vol. 5117, pp. 336--350 (2008)

\bibitem{TERESE}
TeReSe:
\newblock Term Rewriting Systems. vol.~55 of Cambridge Tracts in Theoretical
  Computer Science.
\newblock Cambridge University Press (2003)

\bibitem{T07}
Thiemann, R.:
\newblock The DP Framework for Proving Termination of Term Rewriting.
\newblock PhD thesis, RWTH Aachen (2007).
\newblock Available as technical report AIB-2007-17

\bibitem{TS09b}
Thiemann, R., Sternagel, C.:
\newblock Certification of termination proofs using {\CETA}.
\newblock  In: Proc.\ of the 22nd International Conference on Theorem Proving
  in Higher Order Logics (TPHOLs 2009) 2009. LNCS, vol. 5674, pp. 452--468
  (2009)

\bibitem{W07}
Waldmann, J.:
\newblock Weighted automata for proving termination of string rewriting.
\newblock J. Autom. Lang. Comb. 12(4), 545--570 (2007)

\bibitem{W10}
Waldmann, J.:
\newblock Polynomially bounded matrix interpretations.
\newblock  In: RTA 2010. LIPIcs, vol. 6, pp. 357--372 (2010)

\bibitem{ZHM10}
Zankl, H., Hirokawa, N., Middeldorp, A.:
\newblock Uncurrying for innermost termination and derivational complexity.
\newblock  In: HOR 2011. EPTCS, vol. 49, pp. 46--57 (2011)

\bibitem{ZK10a}
Zankl, H., Korp, M.:
\newblock The derivational complexity of the bits function and the derivation
  gap principle.
\newblock In: WST 2010. (2010).
\newblock 5 pages

\bibitem{ZK10}
Zankl, H., Korp, M.:
\newblock Modular complexity analysis via relative complexity.
\newblock  In: RTA 2010. LIPIcs, vol. 6, pp. 385--400 (2010)

\bibitem{ZK10b}
Zankl, H., Korp, M.:
\newblock On implementing modular complexity analysis.
\newblock  In: IWIL 2010. EPiC Series, vol. 2, pp. 42--47 (2010)

\bibitem{ZM10}
Zankl, H., Middeldorp, A.:
\newblock Satisfiability of non-linear (ir)rational arithmetic.
\newblock  In: LPAR 16. LNCS (LNAI), vol. 6355, pp. 481--500 (2010)

\end{thebibliography}

\end{document}